	\def\docclass{lipics}
	\def\version{arxiv}
	\def\draftmode{false} 
\newcommand\iflipics[2]{\ifthenelse{\equal{\docclass}{lipics}}{#1}{#2}}
\newcommand\ifkoma[2]{\ifthenelse{\equal{\docclass}{koma}}{#1}{#2}}
\newcommand\ifieee[2]{\ifthenelse{\equal{\docclass}{ieee}}{#1}{#2}}
\newcommand\ifsiam[2]{\ifthenelse{\equal{\docclass}{siam}}{#1}{#2}}
\newcommand\ifmysiam[2]{\ifthenelse{\equal{\docclass}{my-siam}}{#1}{#2}}
\newcommand\ifacm[2]{\ifthenelse{\equal{\docclass}{acm}}{#1}{#2}}
\newcommand\ifdcc[2]{\ifthenelse{\equal{\docclass}{dcc}}{#1}{#2}}
\newcommand\ifmanuscript[2]{\ifthenelse{\equal{\version}{manuscript}}{#1}{#2}}
\newcommand\ifarxiv[2]{\ifthenelse{\equal{\version}{arxiv}}{#1}{#2}}
\newcommand\ifsubmission[2]{\ifthenelse{\equal{\version}{submission}}{#1}{#2}}
\newcommand\ifproceedings[2]{\ifthenelse{\equal{\version}{proceedings}}{#1}{#2}}
	\equal{\version}{manuscript} 
	\OR \equal{\version}{arxiv} 
	\OR \equal{\version}{submission} 
	\OR \equal{\version}{proceedings} 
\newcommand\ifdraft[2]{\ifthenelse{\equal{\draftmode}{true}}{#1}{#2}}
\newdimen\makeboxdimen
\newcommand\plaincenter[1]{%
	\mbox{}\hfill#1\hfill\mbox{}%
}
	\let\mytitle\@title%
\let\oldthebibliography\thebibliography
\renewcommand\thebibliography[1]{%
	\oldthebibliography{#1}%
	\pdfbookmark[1]{References}{}%
}
\ttfamily\itshape{},
\ttfamily\slshape{},
\ttfamily\tiny{},
	\newtheorem{fact}[theorem]{Fact}
	\newtheorem{openproblem}[theorem]{Open Problem}
	\newenvironment{proofof}[1]{%
		\begin{proof}[{{Proof of #1{}}}]%
	}{%
		\end{proof}%
	}
		\theoremstyle{acmdefinition}
		\newtheorem{remark}[theorem]{Remark}
		\newtheorem{fact}[theorem]{Fact}
	\newenvironment{proofof}[1]{%
		\begin{proof}[{{Proof of #1{}}}]%
	}{%
		\end{proof}%
	}
	\newtheorem{remark}{Remark}
	\newenvironment{proofof}[1]{%
			\begin{proof}[{{#1{}}}]%
		}{%
			\end{proof}%
		}
	\newtheoremstyle{proofstyle}%
	  {\item[\theorem@headerfont\hskip\labelsep ##1\theorem@separator]}%
	  {\item[\theorem@headerfont\hskip\labelsep ##3\theorem@separator]}
	\newtheorem{theorem}{Theorem}[section]
	\theoremstyle{plain}
	\newtheorem{proposition}[theorem]{Proposition}
	\newtheorem{lemma}[theorem]{Lemma}
	\newtheorem{conjecture}[theorem]{Conjecture}
	\newtheorem{openproblem}[theorem]{Open Problem}
	\newtheorem{corollary}[theorem]{Corollary}
	\newtheorem{definition}[theorem]{Definition}
	\theoremstyle{plain}
	\newtheorem{remark}[theorem]{Remark}
	\theoremstyle{proofstyle}
	\newtheorem{proof}{Proof}
	\newenvironment{proofof}[1]{%
		\begin{proof}[{{Proof of #1{}}}]%
	}{%
		\end{proof}%
	}
	\newenvironment{thmenumerate}[2][]{%
		\begin{enumerate}[
			label={\textsf{\textbf{\color{darkgray}{\makebox[\widthof{(a)}][c]{\textup{(\alph*)}}}}}},
			ref={\ref{#2}\kern.1em--\kern.1em(\alph*)},
			itemsep=0pt,
			topsep=.5ex,
			leftmargin=1.75em,
			#1
		]%
	}{%
		\end{enumerate}%
	}
	\newenvironment{thmenumerate}[2][]{%
		\begin{enumerate}[
			label={\makebox[\widthof{(a)}][c]{\textup{(\alph*)}}},
			ref={\ref{#2}\kern.1em--\kern.1em(\alph*)},
			itemsep=0pt,
			#1
		]%
	}{%
		\end{enumerate}%
	}
\newcommand*\ie{\mbox{i.\hspace{.2ex}e.}}
\newcommand*\eg{\mbox{e.\hspace{.2ex}g.}}
\newcommand\R{\mathbb R}
\newcommand\N{\mathbb N}
\newcommand{\ESymbol}{\mathbb{E}}
\newcommand{\ProbSymbol}{\ensuremath{\mathbb{P}}}
\DeclarePairedDelimiterXPP\Prob[1]{\ProbSymbol}[]{}{%
	#1%
}
\DeclarePairedDelimiterXPP\E[1]{\ESymbol}[]{}{%
	#1%
}
\DeclarePairedDelimiterXPP\Eover[2]{\ESymbol_{#1}}[]{}{%
	#2%
}
\DeclarePairedDelimiterXPP\ProbIn[2]{\ProbSymbol_{#1}}[]{}{%
	#2%
}
\providecommand{\Prob}{} 
\providecommand{\ProbIn}{} 
\providecommand{\E}{} 
\providecommand{\Eover}{} 
\newcommand{\surroundedmath}[3]{
	\mathchoice{
		#1{#2{#3}#2}%
	}{
		#1{#3}%
	}{
		#1{#3}%
	}{
		#1{#3}%
	}%
}
\newcommand\rel[1]{\surroundedmath{\mathrel}{\:}{#1}}
\newcommand\wrel[1]{\surroundedmath{\mathrel}{\;}{#1}}
\newcommand\wwrel[1]{\surroundedmath{\mathrel}{\;\;}{#1}}
	\let\oldalign\align
	\let\endoldalign\endalign
\newcommand*\numberthis[1][]{\stepcounter{equation}\tag{\theequation}}
\newcommand\splitaftercomma[1]{%
  \begingroup
  \begingroup\lccode`~=`, \lowercase{\endgroup
    \edef~{\mathchar\the\mathcode`, \penalty0 \noexpand\hspace{0pt plus .25em}}%
  }\mathcode`,="8000 #1%
  \endgroup
}
\def\mydots{\xleaders\hbox to.5em{\hfill.\hfill}\hfill}
\newlength\tmpLenNotations
	\definecolor{refkey}{gray}{.99}
	\colorlet{labelkey}{green!60!black!60}
	\ifmanuscript{\hideLIPIcs}{}
	\ifarxiv{\hideLIPIcs}{}
\newsavebox\tmpbox
	\let\oldparagraph\paragraph
	\renewcommand\paragraph[1]{%
		\oldparagraph*{#1}
	}
	\let\oldparagraph\paragraph
	\renewcommand\paragraph[1]{%
		\oldparagraph{#1.}
	}
	\let\oldsubsection\subsection
	\renewcommand\subsection[1]{%
		\oldsubsection{#1.}%
	}
	\let\oldsubsubsection\subsubsection
	\renewcommand\subsubsection[1]{%
		\oldsubsubsection{#1.}%
	}
	\let\oldsubsection\subsection
	\renewcommand\subsection[1]{%
		\oldsubsection{#1.}%
	}
	\let\oldsubsubsection\subsubsection
	\renewcommand\subsubsection[1]{%
		\oldsubsubsection{#1.}%
	}
\let\epsilon\varepsilon
\def\myacknowledgements{}
	\newcommand\acknowledgements[1]{\def\myacknowledgements{\paragraph{Acknowledgements}#1}}
	\newcommand\acknowledgements[1]{\def\myacknowledgements{\section*{Acknowledgement}#1}}
	\newcommand\acknowledgements[1]{\def\myacknowledgements{\section*{Acknowledgement}#1}}
	\newcommand\acknowledgements[1]{\def\myacknowledgements{\section*{Acknowledgement}#1}}
	\newcommand\acknowledgements[1]{\def\myacknowledgements{\section*{Acknowledgement}#1}}
	\newcommand\acknowledgements[1]{\def\myacknowledgements{
		%
		%
		\section*{Acknowledgement}#1%
	}}
\newcommand\coloredunderline[3][1pt]{
	\begin{tikzpicture}[baseline=(slot.base)]
		\node[inner sep=0pt] (slot) {#3} ;
		\path (slot.south west) + (0,-#1) coordinate (ll) ;
		\path (slot.south east) + (0,-#1) coordinate (lr) ;
		\draw[line width=#1,#2] (ll) -- (lr) ;
	\end{tikzpicture}%
	\xspace%
}
\newcommand\mathcoloredunderline[3][1pt]{
	\mathchoice{%
		\coloredunderline[#1]{#2}{\ensuremath{\displaystyle #3}}%
	}{%
		\coloredunderline[#1]{#2}{\ensuremath{\textstyle #3}}%
	}{%
		\coloredunderline[#1]{#2}{\ensuremath{\scriptstyle #3}}%
	}{%
		\coloredunderline[#1]{#2}{\ensuremath{\scriptscriptstyle #3}}%
	}%
}
\definecolor{applegreen}{rgb}{0.55, 0.71, 0.0}
\tikzset{graph node/.style = {circle, thick, draw}}
\tikzset{edge descriptor/.style = {text=black, fill=white,circle}}
\tikzset{C1/.style = {applegreen}}
\tikzset{C2/.style = {blue}}
\tikzset{C3/.style = {red}}
\tikzset{std/.style = {line width=3pt, draw}}
\tikzset{2 colour 1/.style = {line width=3pt, dash pattern= on 9pt off 11pt, draw}}
\tikzset{2 colour 2/.style = {line width=3pt, dash pattern= on 9pt off 11pt,dash phase=10pt, draw}}
	\title{Polyamorous Scheduling}
	\title{Polyamorous Scheduling}
	\author{Leszek G\k asieniec}{University of Liverpool, UK}{l.a.gasieniec@liverpool.ac.uk}{https://orcid.org/0000-0003-1809-9814}{}
	\author{Benjamin Smith}{University of Liverpool, UK}{b.m.smith@liverpool.ac.uk}{https://orcid.org/0000-0003-2306-3461}{}
	\author{Sebastian Wild}{University of Liverpool, UK}{wild@liverpool.ac.uk}{https://orcid.org/0000-0002-6061-9177}{}
	\authorrunning{L. G\k asieniec, B. Smith, and S. Wild}
	\keywords{periodic scheduling, Pinwheel Scheduling, edge-coloring, chromatic index, approximation algorithms, hardness of approximation}
	\newcommand\email[1]{\texttt{#1}}
	\author{%
				Leszek G\k asieniec%
					\footnote{U. of Liverpool, UK,
           \email{l.a.gasieniec\,@\,liverpool.ac.uk}}
			\and 
				Benjamin Smith%
					\footnote{U. of Liverpool, UK, 
					\email{b.m.smith\,@\,liverpool.ac.uk}}
			\and
				Sebastian Wild%
					\footnote{U. of Liverpool, UK, 
					\email{sebastian.wild\,@\,liverpool.ac.uk}}
	}
	\date{\small\today}
\hfill\smash{\raisebox{-.3ex}{\includegraphics[height=2ex]{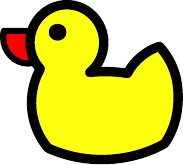}}}%
\definecolor{applegreen}{rgb}{0.55, 0.71, 0.0}
\definecolor{amethyst}{rgb}{0.6, 0.4, 0.8}
\definecolor{blue-violet}{rgb}{0.54, 0.17, 0.89}
\tikzset{graph node/.style = {circle, thick, draw}}
\tikzset{invisible node/.style = {}}
\tikzset{edge descriptor/.style = {text=black, fill=white,circle, inner sep=0cm}}
\tikzset{shorthand/.style = {rectangle, thick, fill=black, text=white, inner sep=3.5pt, draw=white, font=\bfseries\boldmath}}
\tikzset{SWAP/.style = {shorthand, minimum width = 3.975cm}}
\tikzset{ourBlack/.style = {black}}
\tikzset{ourRed/.style = {red}}
\tikzset{ourBlue/.style = {blue}}
\tikzset{ourGreen/.style = {applegreen}}
\tikzset{ourPurple/.style = {blue-violet}}
\tikzset{std/.style = {line width=1.5pt, draw}}
\tikzset{2 colour 1/.style = {line width=1.5pt, dash pattern= on 9pt off 11pt, draw}}
\tikzset{2 colour 2/.style = {line width=1.5pt, dash pattern= on 9pt off 11pt,dash phase=10pt, draw}}
\newcommand\SWAP{\ensuremath{\mathit{SWAP}}\xspace}
\newcommand\OR{\ensuremath{\mathit{OR}}\xspace}
\newcommand\SLOT{\mathcoloredunderline[.8pt]{}{\phantom{3_1}}}
\newcommand\mathunderline[2]{\mathcoloredunderline[.8pt]{#1}{#2}}
\begin{document}

\ifacm{}{\maketitle} 

\begin{abstract}
Finding schedules for pairwise meetings between the members of a complex social group without creating interpersonal conflict is challenging, especially when different relationships have different needs.
We formally define and study the underlying optimisation problem: Polyamorous Scheduling. 

In Polyamorous Scheduling, we are given an edge-weighted graph and 
try to find a periodic schedule of matchings in this graph 
such that the maximal weighted waiting time between consecutive occurrences of the same edge is 
minimised.
We show that the problem is NP-hard and that there is no efficient approximation algorithm with a better ratio than $4/3$ unless P = NP.
On the positive side, we obtain an $O(\log n)$-approximation algorithm; indeed, a $O(\log \Delta)$-approximation for $\Delta$ the maximum degree, \ie, the largest number of relationships of any individual.
We also define a generalisation of density from the Pinwheel Scheduling Problem, ``poly density'',
and ask whether there exists a poly-density threshold similar to the $5/6$-density threshold for Pinwheel Scheduling [Kawamura, STOC 2024].
Polyamorous Scheduling is a natural generalisation of Pinwheel Scheduling with respect to its optimisation variant, Bamboo Garden Trimming.

Our work contributes the first nontrivial hardness-of-approximation reduction for any periodic scheduling problem,
and opens up numerous avenues for further study of Polyamorous Scheduling.
\end{abstract}

\ifacm{%
	\maketitle%
}{}

\ifarxiv{\bigskip}{}

\section{Introduction}

We study a natural periodic scheduling problem faced by groups of regularity-loving polyamorous people:
Consider a set of persons and a set of pairwise relationships between them, each with a value representing its neediness, importance, or emotional weight. Find a periodic schedule of pairwise meetings between couples that minimizes the maximal weighted waiting time between such meetings, given that each person can meet with at most one of their partners on any particular day. 

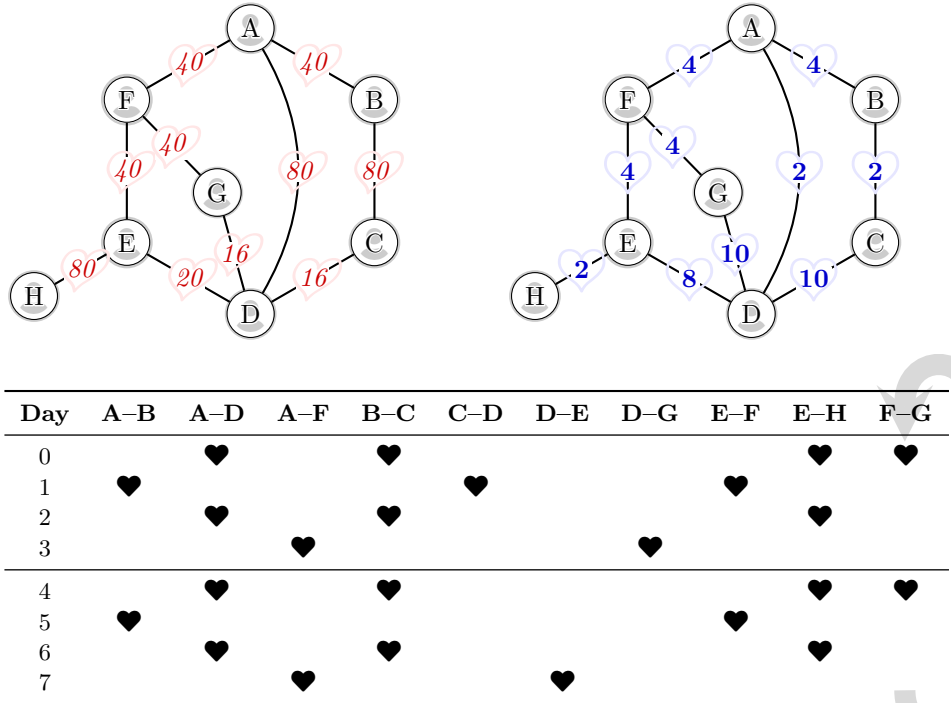
\begin{figure}[bht]
	
	\centering 
	
	\plaincenter{\begin{tikzpicture}[
			person/.style={circle,draw,inner sep=2.7pt},
			relation/.style={thick},
			growth label/.style={inner sep=1pt,fill=white,text=red!80!black,font=\itshape},
			scale=1.9,
		]
		\foreach \n/\p in {A/0,B/1,C/2,D/3,E/4,F/5} {
			\node[person] (\n) at ({-\p*60+90}:1) {\contour{white}{\n}} ;
		}
		\node[person] (G) at ($(F)!.5!(D) + (.2,.1)$) {\contour{white}{G}} ;
		\node[person] (H) at ({-4*60+90}:1.75) {\contour{white}{H}} ;
		\begin{pgfonlayer}{background}
			\foreach \p in {A,...,H} {
				\node[circle,inner sep=-1pt,fill=black!20,text=white,scale=1.8] at (\p) {\faUserCircle};
			}
		\end{pgfonlayer}
		\foreach \a/\b/\w/\s in {
				A/B/40/,%
				B/C/80/,%
				C/D/16/,%
				D/E/20/,%
				E/F/40/,%
				F/A/40/,%
				F/G/40/,%
				A/D/80/bend left,%
				D/G/16/,%
				E/H/80/%
		} {
			\draw[relation] (\a) to[\s] 
				node[growth label] (f\a\b) {\w} 
				(\b) ;
			\node[red!10,yscale=2.2,xscale=2.7,yshift=-.2ex,xslant=0.3] at (f\a\b) {$\heartsuit$};
		}
	\end{tikzpicture}\hfill
	\begin{tikzpicture}[
			person/.style={circle,draw,inner sep=2.7pt},
			relation/.style={thick},
			freq label/.style={inner sep=1pt,fill=white,text=blue!80!black,font=\bfseries},
			scale=1.9,
		]
		\foreach \n/\p in {A/0,B/1,C/2,D/3,E/4,F/5} {
			\node[person] (\n) at ({-\p*60+90}:1) {\contour{white}{\n}} ;
		}
		\node[person] (G) at ($(F)!.5!(D) + (.2,.1)$) {\contour{white}{G}} ;
		\node[person] (H) at ({-4*60+90}:1.75) {\contour{white}{H}} ;
		\begin{pgfonlayer}{background}
			\foreach \p in {A,...,H} {
				\node[circle,inner sep=-1pt,fill=black!20,text=white,scale=1.8] at (\p) {\faUserCircle};
			}
		\end{pgfonlayer}
		\foreach \a/\b/\w/\s in {
				A/B/4/,%
				B/C/2/,%
				C/D/10/,%
				D/E/8/,%
				E/F/4/,%
				F/A/4/,%
				F/G/4/,%
				A/D/2/bend left,%
				D/G/10/,%
				E/H/2/%
		} {
			\draw[relation] (\a) to[\s] node[freq label] (g\a\b) {\w} (\b) ;
			\node[blue!10,yscale=2.2,xscale=2.7,yshift=-.2ex] at (g\a\b) {$\heartsuit$};
		}
	\end{tikzpicture}}

	\bigskip
	\medskip
	
	\begin{tikzpicture}[remember picture,overlay] 
		\coordinate (top) at ($(pic cs:FG-top)+(-.5em,2ex)$);
		\coordinate (bot) at ($(pic cs:FG-bot)+(-.5em,0ex)$);
			\draw[black!15,-stealth,looseness=2,line width=.8em] 
				(bot) to[out=-90,in=-90] ++(.9,0) -- 
				($(top) + (.9,.5)$) to[out=90,in=90] ($(top) + (0,.5)$) -- (top) ;
	\end{tikzpicture}
	\def\m{\faHeart}
	\def\b{\textbf}
	\small
	\begin{tabular}{c*{10}{c}}
	\toprule
		\b{Day} & \b{A--B} & \b{A--D} & \b{A--F} & \b{B--C} & \b{C--D} & \b{D--E} & \b{D--G} & \b{E--F} & \b{E--H} &           \b{F--G}            \\
	\midrule
		   0    &          &    \m    &          &    \m    &          &          &          &          &    \m    &      \m\tikzmark{FG-top}      \\
		   1    &    \m    &          &          &          &    \m    &          &          &    \m    &          &                               \\
		   2    &          &    \m    &          &    \m    &          &          &          &          &    \m    &                               \\
		   3    &          &          &    \m    &          &          &          &    \m    &          &          &                               \\
	\midrule
		   4    &          &    \m    &          &    \m    &          &          &          &          &    \m    &              \m               \\
		   5    &    \m    &          &          &          &          &          &          &    \m    &          &                               \\
		   6    &          &    \m    &          &    \m    &          &          &          &          &    \m    &                               \\
		   7    &          &          &    \m    &          &          &    \m    &          &          &          & \phantom{\m}\tikzmark{FG-bot} \\
	\bottomrule
	\end{tabular}
	\smallskip
	\caption{%
		An example Optimisation Polyamorous Scheduling instance with 8 persons: \underbar Alex, \underbar Brady, \underbar Charlie, \underbar Daisy, \underbar Eli, \underbar Frankie, \underbar Grace, and \underbar Holly. 
		\textbf{Top left:} Graph representation with \textcolor{red!80!black}{\textit{edge labels}} showing the weight (desire growth rates) of each pairwise relationship.
		\textbf{Bottom:} An optimal schedule for the instance.  On each day, a set of meetings is scheduled as indicated by \faHeart{}s. The schedule has a period of 8 days: after day $7$, we start from day $0$ again.
		\textbf{Top right:} A decision version of the instance obtained for heat~160. The \textcolor{blue!80!black}{\textbf{edge labels}} here are the frequencies with which edges have to be scheduled stay below heat 160. 
	}
	\label{fig:intro-example}
\end{figure}

Before formally defining the Polyamorous Scheduling Problem (Poly Scheduling for short), 
we illustrate some features of the problem on an example.
\wref{fig:intro-example} shows an instance using the natural graph-based representation: We have vertices for people and weighted (undirected) edges for relationships.
It is easy to check that the schedule given at the bottom of \wref{fig:intro-example} never schedules more than one daily meeting for any of the 8 persons in the group; in the graph representation, the set of meetings for each day must form a matching.
Each day the mutual desire for a meeting experienced by each couple grows by the weight or \emph{desire growth rate} of that relationship\footnote{
	``Remember, absence makes the heart grow fonder''~\cite{Robin_Hood_1973}.\\ (\url{https://getyarn.io/yarn-clip/ae628721-c1d1-49d1-bd7c-78cbffceabf0})
	\hfill\mbox{}\smash{\includegraphics[height=3ex]{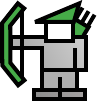}}%
} -- that is, until a meeting occurs and their desire is reset to zero.
We will refer to the highest desire ever felt by any pair when following a given schedule as the \emph{heat} of the schedule.
The heat of the schedule in \wref{fig:intro-example} is 160: as the reader can verify, no pair ever feels a desire greater than 160 before meeting and resetting their desire to zero. Desire 160 is also attained; \eg, Alex and Daisy are scheduled to see each other every other day, but over the period of 2 days between subsequent meetings, their desire grows to $2\cdot 80 = 160$.

For the instance in \wref{fig:intro-example}, it is easy to show that no schedule with heat $<160$ exists.
For that, we first convert from desire growth rates to required \emph{frequencies:} Under a heat-160 schedule, a pair with desire growth rate $g$ must meet at least every $\lfloor 160/g\rfloor $ days.  The top-right part of \wref{fig:intro-example} shows the result.
It is easy to check that the given schedule indeed achieves these frequencies. However, any further reduction of the desired heat to $160-\varepsilon$ would leave, \eg, Alex hopelessly overcommitted: the relation with Daisy would get frequency $\lfloor 80/(160-\varepsilon)\rfloor = 1$, forcing them to meet \emph{every} day; but then Brady and Frankie, each with frequency $\lfloor 40/(160-\varepsilon)\rfloor \le 3$ cannot be scheduled at all.

\ifarxiv{\needspace{3\baselineskip}}{}

While local arguments suffice for our small example, in general, Poly Scheduling is NP-hard (as shown below).
We therefore focus this paper on approximation algorithms and inapproximability results.

\subsection{Formal Problem Statement}

We begin by defining a decision version of Poly Scheduling.
In the \emph{Decision Polyamorous Scheduling Problem}, we are given a set of people and pairwise relationships
with ``attendance frequencies'' $f_{i,j}$, and we are trying to find a daily schedule of two-person meetings such that each couple $\{i,j\}$ meets at least every $f_{i,j}$ days.
The only constraint on the number of meetings that can occur on any given day is that each person can only participate in (at most) one of them.
A Decision Polyamorous Scheduling instance can naturally be modelled as a graph of people with the edges representing their relationships. Because each person can participate in at most one meeting per day, the edges scheduled on any given day must form a matching in this graph.

\begin{definition}[Decision Polyamorous Scheduling (DPS)]
\label{def:DPS}
	A DPS instance $\mathcal P_d = (P, R, f)$ (a ``(decision) \emph{polycule}'') consists of
	an undirected graph $(P,R)$ where the 
	vertices $P = \{p_1, \ldots,p_n\}$ are $n$ \emph{persons} and the 
	edges $R$ are pairwise relationships, with integer \emph{frequencies} $f : R \to \N$ for each relationship.
	
	The goal is to find an infinite schedule $S : \N_0 \to 2^R$, such that 
	\begin{thmenumerate}{def:DPS}
	\item \label{def:DPS-no-conflicts} 
		(no conflicts) for all days $t\in \N_0$, $S(t)$ is a matching in $\mathcal P_d$, and
	\item \label{def:DPS-frequencies} 
		(frequencies) for all $e \in R$ and $t\in \N_0$, we have 
		$e \in S(t)\cup S(t+1)\cup \cdots \cup S(t+f(e)-1)$;
	\end{thmenumerate}
	or to report that no such schedule exists.
	In the latter case, $\mathcal P_d$ is called \emph{infeasible}.
\end{definition}

We write $f_{i,j}$ and $f_e$ as shorthands for $f(\{p_i,p_j\})$ resp.\ $f(e)$.
An infinite schedule exists if and only if a \emph{periodic} schedule exists,
\ie, a schedule where there is a $T\in\N$ such that for all $t$, we have $S(t) = S(t+T)$:
any feasible schedule corresponds to an infinite walk in the finite configuration graph of the problem (see \wref{sec:computational-complexity}), implying the existence of a finite cycle.
A periodic schedule can be finitely described by listing $S(0),S(1),\ldots,S(T-1)$.

By relaxing the hard maximum frequencies of couple meetings to ``desire growth rates'', we obtain the Optimisation Polyamorous Scheduling (OPS) Problem. 
Our objective is to find a schedule that minimizes the \emph{``heat''}, \ie, the \emph{worst pain of separation} ever felt in the polycule by any couple.

\begin{definition}[Optimisation Polyamorous Scheduling]
\label{def:OPS}
	An OPS instance (or ``optimisation polycule'') $\mathcal P_o=(P, R, g)$ consists of 
	an undirected graph $(P,R)$
	along with a \emph{desire growth rate} $g:R\to \R_{>0}$ for each relationship in $R$.
	An infinite schedule $S : \N_0 \to 2^R$ is valid if, for all days, $t\in \N_0$, $S(t)$ is a matching in $\mathcal P_o$.
	
	The goal is to find a valid schedule that minimizes the \emph{heat} $h = h(S)$ of the schedule 
	where
	$h(S) = \max_{e\in R} h_e(S)$ and
	\[
		h_e(S) \wrel= \sup_{d\in \N} \,
		\begin{dcases*}
			(d+1)\cdot g(e) & $\exists t \in \N_0 : e \notin S(t)\cup S(t+1)\cup \cdots \cup S(t+d-1)$; \\
			g(e) & otherwise.
		\end{dcases*}
	\]
\end{definition}

As for DPS, $S$ can be assumed to be periodic without loss of generality, meaning that $S$ is finitely representable.

\ifarxiv{\needspace{5\baselineskip}}{}

\subsection{Related Work}
\label{sec:related-work}

Polyamorous Scheduling itself has not been studied to our knowledge.
Other variants of periodic scheduling have attracted considerable interest recently~\cite{Kawamura2023,HohneVanStee2023,AfshaniDeBergBuchinGaoLofflerNayyeriRaichelSarkarWangYang2022}, including FUN~\cite{BiloGualaLeucciProietiScornavacca2021}.

The arguably simplest periodic scheduling problem is \emph{Pinwheel Scheduling}.
In Pinwheel Scheduling~\cite{PinwheelIntro} we are given $k$ positive integer \emph{frequencies} 
$f_1\le f_2\le\cdots\le f_k$, and the goal is to find a Pinwheel schedule,
\ie, an infinite schedule of tasks $1,\ldots,k$
such that any contiguous time window of length $f_i$ contains 
at least one occurrence of $i$, for $i=1,\ldots,k$,
(or to report the non-existence of such a schedule).

Pinwheel Scheduling is NP-hard~\cite{pinwheelNPhard}, but unknown to be in NP~\cite{Kawamura2023},
(see \cite{GasieniecSmithWild2022} for more discussion).
Poly Scheduling inherits these properties.

The \emph{density} of a Pinwheel Scheduling instance is given by $d = \sum_{i=1}^k 1/f_i$.
It is easy to see that $d\le 1$ is a necessary condition for $A$ to be schedulable,
but this is not sufficient, as the infeasible instance $(2,3,M)$ with $d=\frac56+1/M$, for any $M\in\N$ shows.
However, there is a threshold $d^*$ so that $d\le d^*$ implies schedulability:
Whenever $d\le \frac12$, we can replace each frequency $f_i$ by $2^{\lceil \lg(f_i) \rceil}$ without increasing $d$ above $1$; then a periodic Pinwheel schedule always exists using the largest frequency as period length.
A long sequence of works~\cite{PinwheelIntro,largeClassPinwheel,pinwheel2numbers,3Task5over6,pinwheelSchedulable2over3,achievableDensities,5over6upto5tasks,GasieniecSmithWild2022} 
successively improved bounds on $d^*$,
culminating very recently in Kawamura's proof~\cite{Kawamura2023} that it is indeed a the sharp threshold, $d^*=\frac56$, 
confirming the corresponding conjecture of Chan and Chin from 1993~\cite{pinwheelSchedulable2Tasks}.
Generalizations of Pinwheel Scheduling have also been studied,
\eg, with jobs of different lengths~\cite{HanLin1992,generalisedPinwheelScheduling}.

Pinwheel Scheduling is a special case of DPS, where the underlying graph $(P,R)$ is a \emph{star}, \ie,
a centre connected to $k$ pendant vertices with edges of frequencies~$f_1,\ldots,f_k$.
Note that it is not generally possible to obtain a polyamorous schedule by combining the local schedule of each person\footnote{The current state-of-the-art approach in practice, usually via Google Calendar.}; see for example a triangle with edge frequencies~$2$: 
In the DPS instance $(\{A,B,C\}, \{A{-}B,B{-}C,A{-}C\}, f)$ with $f(e)=2$ for all edges, the local problem for each person is feasible by alternating between their two partners, but the global DPS instance has no solution.
This example also shows that the simple strategy of replacing $f_i$ by $2^{\lceil \lg(f_i) \rceil}$ is not sufficient to guarantee the existence of a schedule for Poly Scheduling.
Indeed, it is unclear whether any such constant-factor scaling of frequencies exists which applies to all Poly Scheduling instances.

There are two natural optimisation variants of Pinwheel Scheduling.
In \emph{Windows Scheduling}~\cite{BarNoyNaorSchieber2003} 
tasks with frequencies are given and the goal is to find a perpetual scheduling that
minimizes the \emph{number} of tasks that need to be done \emph{simultaneously}  
while respecting all frequencies (\ie, the number of channels or servers needed to schedule all tasks).
Efficient constant-factor approximation algorithms are known that use the connection to Bin Packing~\cite{windowsSchedulingBinPacking} (where we bin tasks by used channels),
even when the sets of tasks to schedule changes over time~\cite{ChanWong2004}.

The \emph{Bamboo Garden Trimming (BGT) Problem}~\cite{BGTIntro,GasieniecJurdzinskiKlasingLevcopoulosLingasMinRadzik2024}
retains the restriction of one task per day, but converts the frequencies into \emph{growth rates} $g_1\le \cdots\le g_k$ (of $k$ bamboo plants $1,\ldots, k$) and 
asks to find a perpetual schedule that minimizes the \emph{height} ever reached by any plant.
BGT also allows efficient constant-factor approximations 
whose approximation factor has seen a lively race of successively improvements over last few years:
from 2~\cite{BGTIntro} over
$\frac{12}7 \approx 1.71$~\cite{vanEe2021},
$1.6$~\cite{GasieniecJurdzinskiKlasingLevcopoulosLingasMinRadzik2024}, and
$1.4$~\cite{HohneVanStee2023},
down to the current record, $\frac43\approx 1.33$, again by Kawamura~\cite{Kawamura2023}.
As for the Windows Scheduling problem, no hardness of approximation results are known.
It remains open whether it is possible to obtain a PTAS for the Bamboo Garden Trimming Problem~\cite{GasieniecJurdzinskiKlasingLevcopoulosLingasMinRadzik2024} or the Windows Scheduling Problem.
We show that the same is not true for Poly Scheduling (see \wref{thm:sat-inapprox} below).

As for Pinwheel Scheduling and DPS, Bamboo Garden Trimming is the special case of OPS on star graphs.
Although BGT can be approximated well, since it is in general not possible to combine local schedules into a global schedule for a polycule (as noted above),
it is not clear whether Poly Scheduling allows an efficient constant-factor approximation.

All mentioned problems above have simple \emph{fractional} counterparts that are much easier to solve and hence provide necessary conditions.
Indeed, this is the motivation for density in Pinwheel Scheduling: 
if we allow a schedule to spend arbitrary fractions of the day on different tasks,
we obtain a schedule if and only if the density is at most $1$. 
(Spending a $1/f_i$ fraction on task $i$ each day is best possible).
For Windows Scheduling, any valid schedule must partition the tasks into bins (channels/servers), so that 
each bin admits a Pinwheel schedule. Relaxing the latter constraint to ``density at most $1$'' yields a standard bin packing problem, to which we can apply existing techniques;
(packing bins only up to density $5/6$ guarantees a Pinwheel schedule, at the expense of a $6/5$ factor increase in channels).
For Bamboo Garden Trimming, the optimal fractional schedule spends a $G/g_i$ fraction of each day with task $i$, where $G$ is the sum of all growth rates,
thus achieving height exactly $G$.
For Poly Scheduling, we can similarly define a fractional problem, but its structure is much richer (see \wref{sec:fractional}).

There are further periodic scheduling problems with less direct connections to Poly Scheduling that received attention in the literature.
Patrolling problems typically involve periodic schedules: for example,
\cite{AfshaniDeBergBuchinGaoLofflerNayyeriRaichelSarkarWangYang2022} finds schedules for a fleet of $k$ identical robots to patrol (unweighted) points in a metric space,
whereas the ``Continuous BGT Problem''~\cite{GasieniecJurdzinskiKlasingLevcopoulosLingasMinRadzik2024} sends
a single robot to points with different frequencies requirements;
\cite{KawamuraSoejima2020} tasks $k$ robots with patrolling a line or a circle.
The underlying geometry in these problems requires different techniques from our work.
The \emph{Point Patrolling Problem} studied in~\cite{KawamuraSoejima2020} can be seen as a ``covering version'' of Pinwheel Scheduling: each day, we have to assign one of $n$ workers to a single, daily recurring task, where worker $i$ requires a break of $a_i$ days before they can be made to work again.
Yet another twist on a patrolling problem is the \emph{Replenishment Problems with Fixed Turnover Times} given in~\cite{BosmanVanEeJiaoMarchettiSpaccamelaRaviStougie2022}, where vertices in a graph have to be visited with given frequencies, but instead of restricting the number of vertices that can be visited per day, the \emph{length of a tour} to visit them (starting at a depot node) shall be minimized.

In the \emph{Fair Hitting Sequence} Problem~\cite{CiceroneDiStefanoGasieniecJurdzinskiNavarraRadzikStachowiak2019}, we are given a collection of sets $\mathcal S = \{S_1,\ldots,S_m\}$, each consisting of a subset of the set of elements $\mathcal V = \{v_1,\ldots,v_n\}$. Each set $S_j$ has an urgency factor $g_j$, which is comparable to the growth rates in BGT instances with one key difference: 
A set $S_j$ is hit whenever any $v_i\in S_j$ is scheduled.
The goal is again similar to BGT; to find a perpetual schedule of elements $v_i\in \mathcal V$ that minimizes the time between visits to each set $S_j$, weighted by $g_j$.
There is also a decision variant, similar to Pinwheel Scheduling in that growth rates are replaced by frequencies.
We use a similar layering technique in our approximation algorithm (\wref{sec:approximations}) as the $O(\log^2 n)$-approximation from~\cite{CiceroneDiStefanoGasieniecJurdzinskiNavarraRadzikStachowiak2019}, but we obtain a better approximation ratio for Poly Scheduling.
Their $O(\log n)$-approximation based on randomized rounding does not extend to Poly Scheduling since the used  linear program has exponentially many variables for Poly Scheduling (\wref{sec:fractional}).

\subsection{Our Results}

Despite the recent flurry of results on periodic scheduling, 
Polyamorous Scheduling seems not to have been studied before.
Apart from its immediate practical applications, 
some quirks make Polyamorous Scheduling an interesting combinatorial optimization problem in its own~right.
\emph{One of these quirks seems to be that love makes blind \dots{} Shortly after publishing the first version of this manuscript, we found a substantial simplification and strengthening of our hardness-of-approximation results. We keep all original results in this revised manuscript for the record and for future improvement, but point out where they have been modified or superseded in this version.}

Our originally strongest hardness-of-approximation result, which 
rules out the existence of a PTAS (polynomial-time approximation scheme) for Poly Scheduling,
is based on a direct reduction from 3SAT.

\begin{theorem}[SAT Hardness of approximation]
\label{thm:sat-inapprox}
	Unless P = NP, there is no polynomial-time $(1+\delta)$-approximation algorithm 
	for the Optimisation Poly Scheduling problem for any $\delta < \frac1{12}$.
\end{theorem}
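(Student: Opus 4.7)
The plan is to build a polynomial-time, gap-preserving reduction from 3SAT to Optimisation Poly Scheduling. Given a 3SAT instance $\phi$ with $n$ variables and $m$ clauses, I would construct in polynomial time an OPS polycule $\mathcal P_o(\phi)$ and a target heat $H>0$ so that: if $\phi$ is satisfiable, some schedule of $\mathcal P_o(\phi)$ achieves heat at most $H$; if $\phi$ is unsatisfiable, every schedule has heat at least $\frac{13}{12}H$. Any polynomial-time $(1+\delta)$-approximation with $\delta<\frac1{12}$ would separate the two cases and thus decide 3SAT.

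The ratio $\frac{13}{12}$ should come directly from the discrete nature of \wref{def:OPS}: for an edge $e$ with $g(e)=H/12$, a schedule of heat at most $H$ must visit $e$ in every window of $12$ consecutive days, whereas relaxing the heat bound to $\frac{13}{12}H$ allows windows of length $13$. I would therefore calibrate every gadget to sit exactly at this $12/13$ frequency boundary, so that the whole instance can be scheduled at heat $H$ precisely when all clauses are satisfied, and otherwise some distinguished edge is forced into a window of length $13$, yielding heat $\frac{13}{12}H$.

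The construction uses two gadget families. A \emph{variable gadget} for each $x_i$ produces two ``literal edges'' $e_i^T$ and $e_i^F$ embedded in a small subgraph whose only heat-$H$ schedules commit exactly one of $e_i^T$, $e_i^F$ to a busy scheduling pattern and leave the other in an idle slack pattern; the busy choice encodes a truth value $\tau(x_i)\in\{T,F\}$. A \emph{clause gadget} for each $C_j=\ell_{j,1}\vee\ell_{j,2}\vee\ell_{j,3}$ connects the three appropriate literal edges (picking $e_i^T$ or $e_i^F$ by the literal's polarity) to a distinguished clause edge $c_j$ with $g(c_j)=H/12$. The gadget must be designed so that $c_j$ fits in a window of $12$ days iff at least one of its three incident literal edges is idle (i.e., iff $C_j$ is satisfied under $\tau$), and is otherwise pushed to a window of $13$ days, contributing heat $13\cdot g(c_j)=\frac{13}{12}H$.

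Completeness is the easier half: any satisfying assignment $\tau$ yields a periodic schedule by running each variable gadget in mode $\tau(x_i)$ and then scheduling $c_j$ in a slot freed by an idle literal in $C_j$. The main obstacle is \textbf{soundness}: given an arbitrary schedule of heat $<\frac{13}{12}H$, one must extract a well-defined truth assignment from the variable gadgets and then prove that every clause is satisfied, ruling out clever schedules that keep both literals of some variable simultaneously busy or squeeze $c_j$ into a length-$12$ window while all incident literals are busy. Taming these interactions will require delicate local density and matching-feasibility arguments at each shared vertex of the construction, most likely organised via a fractional relaxation in the spirit of \wref{sec:fractional} that certifies no ``sneaky'' heat-$H$ schedule exists outside the ones induced by genuine satisfying assignments.
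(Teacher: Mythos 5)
Your high-level strategy is the same as the paper's: a polynomial-time gap reduction from 3SAT to OPS, where the gap of $\tfrac{13}{12}$ arises because the maximum frequency in the intermediate decision instance is $F=12$ and (by \wref{lem:dps-to-ops}) infeasibility of the decision instance forces heat $\ge (F+1)/F$. You also correctly anticipate a two-gadget architecture (variable and clause gadgets) and identify soundness as the hard part. However, the proposal is missing the central mechanism that makes the construction actually work, and the one concrete idea you do suggest for soundness points in the wrong direction.

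The crucial ingredient the paper uses, and your sketch never arrives at, is a rigid \emph{slot structure}: a distinguished ``True Clock'' gadget with local frequencies $[3,3,6,6]$ and local density~$1$ forces every valid schedule to be periodic with period $6$, which lets one name days mod $6$ as red/blue/green/purple slots. Variable gadgets (same $[3,3,6,6]$ local profile, chained to the clock via shared frequency-$6$ edges) are then forced to put one literal edge in red slots and the other in blue slots, which is how a truth value is read off a schedule. Without pinning down the schedule to a small period, the ``busy vs.\ idle'' dichotomy you posit for literal edges is not well-defined: an adversarial schedule need not commit to a consistent pattern for a given edge across its entire lifetime. The paper avoids this by the local-density-$1$ argument, which forces exactly one occurrence per period of each frequency-$f$ edge.

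Two further gaps. First, a literal $x_i$ can appear in many clauses, so one copy of each literal edge is not enough; you need explicit duplicator gadgets ($D_3$, $D_6$, $D_{12}$) that propagate the slot colour of an edge to several fresh edges, all proved correct with the same local-density argument. Second, and more importantly, the paper's OR gadget cannot enforce the clean semantics you describe (``$c_j$ fits in $12$ days iff at least one incident literal is idle''): its output edge may be placed in a green slot \emph{regardless} of the clause's truth value, and can be placed in a blue slot only if the clause is satisfiable. The output encoding is a one-sided implication, not an equivalence. To turn this one-sided guarantee into a decision, the paper routes all clause outputs through a sorting network of \SWAP gadgets and then applies a tension gadget to the leftmost $k$ channels, thereby requiring that at least $k$ clause outputs be blue. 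This sorting/tension machinery is not optional --- it is precisely the mechanism that converts the gadget's weak invariant into a correct reduction --- and your proposal has no substitute for it. Finally, your suggestion to certify soundness via a fractional relaxation in the spirit of \wref{sec:fractional} does not match what is needed here: the fractional lower bound can be arbitrarily far from the true $h^*$ (see the discussion around \wref{fig:very-bad-mass-preservation}), so it cannot by itself rule out ``sneaky'' heat-$H$ schedules; the paper's soundness proof is purely combinatorial, built from the slot-colour lemmas for each gadget.
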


Love may not always reduce to logic, but propositional logic provably reduces to \emph{scheduling} love.

We also obtain a much simpler, and indeed stronger, inapproximability result, \wref{thm:color-inapprox}, from
containing the 3-Regular Chromatic Index Problem as a special case.

\begin{theorem}[Hardness of approximation]
\label{thm:color-inapprox}
	Unless P = NP, there is no polynomial-time $(1+\delta)$-approximation algorithm 
	for the Optimisation Poly Scheduling problem for any $\delta < \frac1{3}$.
\end{theorem}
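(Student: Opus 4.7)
The plan is to reduce from the NP-hard problem (Holyer's theorem) of deciding whether a cubic (3-regular) graph $G=(V,E)$ has chromatic index $\chi'(G)=3$ or $\chi'(G)=4$; by Vizing's theorem these are the only two options for cubic $G$. Given such a $G$, I construct the OPS instance $\mathcal P_o=(V,E,g)$ with unit growth rates $g\equiv 1$. I will show that the optimal heat equals $3$ when $\chi'(G)=3$ and is at least $4$ when $\chi'(G)=4$. Since $\frac{4}{3}=1+\frac{1}{3}$, a polynomial-time $(1+\delta)$-approximation with $\delta<\frac{1}{3}$ would distinguish the two cases by simply comparing the returned heat to $4$, giving a polynomial-time algorithm for Holyer's problem and hence $\text{P}=\text{NP}$.

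For the upper bound, when $\chi'(G)=3$, a proper $3$-edge-coloring partitions $E$ into matchings $M_0,M_1,M_2$; the periodic schedule $S(t)=M_{t\bmod 3}$ is valid (each $S(t)$ is a matching) and every edge is revisited every $3$ days, yielding heat exactly $3$. For the lower bound, I first observe that with integer growth rates $g\equiv 1$, the per-edge heat $h_e(S)=(d_e+1)\cdot 1$ is integer (where $d_e$ is the largest gap in days between occurrences of $e$), so the heat of any schedule is integer; it therefore suffices to rule out heat at most $3$. Suppose for contradiction some schedule $S$ has heat at most $3$. Then for every edge $e$ and every $t$ we must have $e\in S(t)\cup S(t+1)\cup S(t+2)$, since otherwise a $3$-day window avoids $e$ and forces $h_e\ge 4$. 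In particular, $S(0)\cup S(1)\cup S(2)=E$, and assigning each edge to any one of these three matchings containing it yields a proper $3$-edge-coloring of $G$, contradicting $\chi'(G)=4$.

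The reduction is essentially ``for free'': a polyamorous schedule on a graph with all growth rates equal is nothing but a time-indexed sequence of matchings, and bounding the heat is equivalent to requiring every contiguous window of a given length to cover all edges. The only nontrivial ingredient is the integrality of the optimal heat, which is what produces the sharp $\tfrac43$-gap rather than a continuous one; this relies on the discreteness of days together with the choice $g\equiv 1$. Thus the ``main obstacle'' is really just the observation that OPS restricted to uniform growth rates \emph{is} an edge-covering-by-matchings problem, from which the $3$-vs-$4$ gap, and hence inapproximability within any factor strictly less than $\frac43$, drops out immediately from Holyer's theorem.
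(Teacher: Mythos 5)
Your proof is correct and follows essentially the same route as the paper: reduce from the 3-regular chromatic index (Holyer) problem by assigning unit growth rates, and exploit the equivalence between schedules of heat $h$ and proper $h$-edge-colourings, which the paper states as a standalone proposition (``Unweighted OPS = edge coloring'') and you derive inline for the specific 3-vs-4 gap. The integrality observation you spell out is implicit in the paper's ratio-comparison argument via that proposition, so there is no substantive difference.
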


While in its current form, \wref{thm:sat-inapprox} follows from \wref{thm:color-inapprox},
the direct 3SAT reduction is significantly more versatile and we hope to improve the lower bound on the approximation ratio in future work.
The core idea of the reduction in \wref{thm:sat-inapprox} is to force any valid schedule to have a periodic structure with a 3-day period, where edges scheduled on days $t$ with $t\equiv 0 \pmod 3$ represent the value \textit{True} and edges scheduled on days with $t\equiv 1 \pmod 3$ represent \textit{False}; the remaining slots, $t\equiv 2 \pmod 3$, are required to enforce correct propagation along logic gadgets.
Indeed, the actual construction uses a 6-day period throughout, where slots 2 and 5 are further distinguished.
A variable is represented as a person with 4 relationships with frequencies $[3,3,6,6]$.
The two frequency-6 edges are used to connect all variables, 
which forces the frequency-3 edges to choose slot $0$ or $1 \pmod 3$. 
The choice of which edge is scheduled in slot 0 encodes the variable assignment.
Complications arise in the construction because some gadgets, \eg, the \emph{logical or} ($\vee$), can only guarantee a weaker form of truth value encoding, which requires further gadgets (such as the ``tensioning gadget'') to preserve the reduction.
\wref{sec:inapproximability} gives the detailed gadget constructions and proofs;
\wpref{fig:worked-eg:Pdsz} shows the resulting DPS instance corresponding to an example 3-CNF formula.

\wref{thm:sat-inapprox} of course implies the NP-hardness of Poly Scheduling;
we overall have \emph{3} independent reductions establishing the NP-hardness for Poly Scheduling (\wref{sec:computational-complexity});
the best known upper bound for the complexity is PSPACE.

We could thus call Poly Scheduling \emph{very} NP-hard; 
yet efficient approximation algorithms are possible.
A simple round-robin schedule using an edge colouring yields a good approximation if \emph{both}
maximum degree and ratio between smallest and largest desire growth rates are small (\wref{thm:color-approx}).

\begin{theorem}[Colouring approximation]
\label{thm:color-approx}
	For an Optimisation Poly Scheduling instance $\mathcal P_o = (P,R,g)$ set 
	$g_{\min} = \min_{e\in R} g(e)$,
	$g_{\max} = \max_{e\in R} g(e)$, and let
	$\Delta$ be the maximum degree in $(P,R)$ and $h^*$ be the heat of an optimal schedule.
	There is an algorithm that computes in polynomial time a schedule $S$ of heat $h$ with 
	$\frac h{h^*} \le \min\bigl\{  \frac{\Delta+1}{\Delta}\cdot \frac {g_{\max}}{g_{\min}} ,\; \Delta+1 \bigr\}$.
\end{theorem}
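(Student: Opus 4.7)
The plan is to construct the schedule via a proper edge-colouring used as a round-robin. By Vizing's theorem there is a polynomial-time algorithm that properly edge-colours $(P,R)$ with at most $\Delta+1$ colours $C_0,\ldots,C_\Delta$; each colour class is a matching. Define the schedule by $S(t) = C_{t \bmod (\Delta+1)}$. This satisfies \wref{def:DPS-no-conflicts} since each $S(t)$ is a matching, and every edge $e$ is scheduled exactly once every $\Delta+1$ days, so $h_e(S) = (\Delta+1)\cdot g(e) \le (\Delta+1)\cdot g_{\max}$. Hence our algorithm returns a schedule with heat at most $(\Delta+1)\cdot g_{\max}$.

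The remaining work is to lower-bound the optimal heat $h^*$ in two complementary ways. First, every edge $e$ must wait at least one full ``day of growth'' in any schedule (the definition gives $h_e(S)\ge g(e)$ for every valid $S$), so $h^* \ge g_{\max}$. Second, I claim $h^* \ge \Delta \cdot g_{\min}$. Combining these with the upper bound yields both branches of the minimum:
\[
	\frac{h}{h^*} \wrel\le \frac{(\Delta+1)\,g_{\max}}{g_{\max}} \wrel= \Delta+1
	\qquad\text{and}\qquad
	\frac{h}{h^*} \wrel\le \frac{(\Delta+1)\,g_{\max}}{\Delta\,g_{\min}} \wrel= \frac{\Delta+1}{\Delta}\cdot\frac{g_{\max}}{g_{\min}}.
\]

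The main (and really only) obstacle is proving $h^* \ge \Delta\cdot g_{\min}$; I would argue it by an averaging argument at a vertex $v$ of degree $\Delta$. Fix any valid periodic schedule $S$ with period $T$, and for each edge $e \ni v$ let $k_e$ be the number of days in one period on which $e$ is scheduled. Since at most one edge incident to $v$ can appear per day, $\sum_{e\ni v} k_e \le T$. The maximum gap of $e$ within one period is at least the average gap $T/k_e$, so $h_e(S) \ge (T/k_e)\cdot g(e) \ge (T/k_e)\cdot g_{\min}$. If $h(S) < \Delta\cdot g_{\min}$, then $k_e > T/\Delta$ for every edge $e$ incident to $v$, giving $\sum_{e\ni v} k_e > T$, a contradiction. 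Hence $h^* \ge \Delta\cdot g_{\min}$, and the theorem follows.
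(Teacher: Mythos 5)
Your proof is correct and takes essentially the same approach as the paper: a $(\Delta+1)$-edge-colouring scheduled round-robin gives heat at most $(\Delta+1)g_{\max}$, and $h^*\ge\max\{\Delta g_{\min},\, g_{\max}\}$ is exactly the paper's \wref{lem:lower-bound-trivial}. Your averaging argument at a degree-$\Delta$ vertex is a valid and slightly more explicit way to establish that lower bound than the pigeonhole sketch given in the paper's proof of the lemma.
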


A fully general approximation seems only possible with much weaker ratios; 
we provide a $O(\log \Delta)$-approximation by applying \wref{thm:color-approx} to groups with similar weight
and interleaving the resulting schedules.

\begin{theorem}[Layering approximation]
\label{thm:layer-approx}
	For an Optimisation Poly Scheduling instance $\mathcal P_o = (P,R,g)$, let
	$\Delta$ be the maximum degree in $(P,R)$ and $h^*$ be the heat of an optimal schedule.
	There is an algorithm that computes in polynomial time a schedule $S$ of heat $h$ with 
	$\frac h{h^*} \le 3\lg(\Delta+1) = O(\log n)$, where $n=|P|$.
\end{theorem}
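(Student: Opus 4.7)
My plan is to apply the colouring approximation (\wref{thm:color-approx}) to a weight-based partition of the edges and interleave the resulting per-layer schedules in a round-robin fashion. With $K = \lceil \lg(\Delta+1)\rceil$ layers chosen so that weights within a layer vary by at most a factor of~$2$, each layer's Vizing-colouring schedule already achieves heat within a constant factor of $h^*$; round-robin interleaving of the $K$ layers then costs only an $O(\log \Delta)$ factor. Concretely, set $H = g_{\max}$ and define
\[
	L_k \wrel= \bigl\{e \in R : H/2^{k+1} < g(e) \le H/2^k\bigr\}
	\quad\text{for } k = 0,\ldots,K-2,
\]
with $L_{K-1} = \{e : g(e) \le H/2^{K-1}\}$ serving as a catch-all for very light edges. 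Vizing's algorithm computes in polynomial time an edge-colouring of $(P,L_k)$ into $c_k \le \Delta_k+1$ matchings $M^{(k)}_0,\ldots,M^{(k)}_{c_k-1}$, where $\Delta_k \le \Delta$ is the maximum degree within layer~$k$. I then set $S(t) = M^{(k)}_{\lfloor t/K\rfloor \bmod c_k}$ for $k = t \bmod K$, so every edge $e \in L_k$ appears once every $K c_k$ days, giving $h_e(S) \le K(\Delta_k+1)g(e)$.

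The analysis combines two lower bounds on $h^*$. Trivially $h^* \ge g_{\max} = H$. More importantly, $h^* \ge \Delta_k \cdot g_{\min}(L_k)$: restricting any optimal schedule to the $\Delta_k$ edges of $L_k$ incident to a vertex $v^*$ of maximum layer-$k$ degree yields a (possibly sparse) BGT schedule on $\Delta_k$ bamboos whose heat, by a standard averaging argument, is at least $\sum_{e\ni v^*,\,e\in L_k} g(e) \ge \Delta_k\,g_{\min}(L_k)$; since the restricted heat equals the overall heat on those edges, $h^* \ge \Delta_k g_{\min}(L_k)$ follows. For a non-catch-all layer $L_k$ with $\Delta_k \ge 1$, combining $g(e)\le H/2^k < 2 g_{\min}(L_k) \le 2h^*/\Delta_k$ with $h_e(S) \le K(\Delta_k+1)g(e)$ gives $h_e(S) \le 2K(1+1/\Delta_k)h^* \le 3Kh^*$, with the stronger bound $2Kh^*$ when $\Delta_k = 1$ (matching, $c_k=1$). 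For the catch-all $L_{K-1}$, since $2^{K-1} \ge (\Delta+1)/2$, we have $h_e(S) \le K(\Delta+1)\cdot H/2^{K-1} \le 2KH \le 2Kh^*$. Therefore $h(S) \le 3Kh^* = 3\lceil \lg(\Delta+1)\rceil h^* = O(\log \Delta)\,h^*$, and since $\Delta \le n$ this is also $O(\log n)$.

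The main obstacle is the per-layer BGT lower bound $h^* \ge \Delta_k\,g_{\min}(L_k)$: I cannot simply quote a global weighted-degree bound, because a vertex's heaviest edges may live in a completely different layer and say nothing about the schedulability of the lighter layer-$k$ edges. The resolution is the key observation that heat depends only on the appearances of the edges one restricts to, so the restriction of the optimal schedule to the star of layer-$k$ edges at $v^*$ is a valid (though possibly sparse) BGT schedule of heat at most $h^*$, whence the classical BGT lower bound $\sum g_i$ applies unchanged. Once this inequality is in place, the rest reduces to the direct arithmetic sketched above; polynomial runtime is immediate since Vizing's algorithm is polynomial and the final schedule admits a succinct cyclic description of period $K \cdot \mathrm{lcm}(c_0,\ldots,c_{K-1})$ that can be evaluated on demand.
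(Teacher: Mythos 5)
Your proof is essentially the same as the paper's: both partition edges into $O(\log\Delta)$ geometric weight layers plus a light catch-all, edge-colour each layer with Vizing/Misra--Gries, interleave the layer schedules round-robin, and compare against the per-layer degree bound (subset $\times$ trivial bound) and the global $g_{\max}$ bound. The only (cosmetic) differences are that you justify the per-layer lower bound $h^*\ge\Delta_k\,g_{\min}(L_k)$ via a star-restriction/BGT argument rather than by chaining the paper's subset and colouring lemmas, and that your explicit integer choice $K=\lceil\lg(\Delta+1)\rceil$ yields $3\lceil\lg(\Delta+1)\rceil$ rather than the stated $3\lg(\Delta+1)$ (a slack the paper also incurs once $L$ is rounded to an integer), but the asymptotic $O(\log\Delta)=O(\log n)$ bound is correct either way.
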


Finally, we generalize the notion of density to Poly Scheduling.
As discussed above, density has proven instrumental in understanding the structure of Pinwheel Scheduling 
and in devising better approximation algorithms, by providing a simple, instance-specific lower bound.
For Poly Scheduling, the fractional problem is much richer, and indeed remains nontrivial to solve.
We devise a generalization of density for Poly Scheduling from the dual of the LP corresponding to a fractional variant of Poly Scheduling with gives the following instance-specific lower bound.

\begin{theorem}[Fractional lower bound]
\label{thm:dual-lp-bound}
	Given an OPS instance $\mathcal P_o = (P,R,g)$ with optimal heat~$h^*$.
	For any values $z_e\in[0,1]$, $e\in R$, with $\sum_{e\in R} z_e = 1$, we have
	\[
		h^* \wwrel\ge \bar h(z) \wrel= \frac1{\displaystyle \max_{M\in \mathcal M} \sum_{e\in M} \frac{z_e}{g(e)} }.
	\]
	with the maximum ranging over all maximal matchings in $(P,R)$.
	The largest value $\bar h^*$ of $h(z)$ over all feasible $z$, is the \emph{poly density} of $\mathcal P_o$.
\end{theorem}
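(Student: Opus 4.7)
The plan is to derive this bound as LP weak duality applied to the natural fractional relaxation of OPS. This relaxation has a variable $x_M \ge 0$ for every matching $M \subseteq R$, representing the long-run fraction of days on which $M$ is scheduled, with constraints $\sum_M x_M = 1$ and, for each edge $e \in R$, $\sum_{M \ni e} x_M \ge g(e)/h$ encoding that the scheduling rate of $e$ must suffice to keep its desire below heat $h$. The quantity $\bar h(z)$ is, via the substitution $y_e = z_e/g(e)$, the objective of this LP's dual at the feasible dual point $y$; rather than formally invoking weak duality for an LP with exponentially many primal variables, I would give a direct averaging argument carrying the same content.

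The key steps are as follows. First, reduce to a periodic optimal schedule $S^*$ of period $T$ and heat $h^*$ (as already justified earlier in the paper) and set $x_M$ to be the fraction of days in one period on which $S^*(t) = M$, which automatically gives $\sum_M x_M = 1$. Second, establish the edge-rate bound $\sum_{M \ni e} x_M \ge g(e)/h^*$: if $e$ is scheduled on $k_e$ of the $T$ days per period, the $k_e$ day-difference gaps between consecutive occurrences of $e$ (including the wrap-around) sum to $T$, so their maximum is at least $T/k_e$, yet the definition of heat forces each such gap to be at most $h^*/g(e)$, giving $k_e/T \ge g(e)/h^*$. Third, for any $z_e \ge 0$ with $\sum_{e \in R} z_e = 1$, multiply this rate bound by $z_e/g(e)$, sum over $e$, and swap the order of summation:
\[
  \frac{1}{h^*} \;=\; \frac{1}{h^*}\sum_{e \in R} z_e \;\le\; \sum_{e \in R} \frac{z_e}{g(e)} \sum_{M \ni e} x_M \;=\; \sum_{M} x_M \sum_{e \in M} \frac{z_e}{g(e)} \;\le\; \max_{M \in \mathcal M} \sum_{e \in M} \frac{z_e}{g(e)},
\]
where the last step uses $\sum_M x_M = 1$. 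Rearranging yields $h^* \ge \bar h(z)$; because $z_e/g(e) \ge 0$, adding edges to $M$ only increases the sum, so the maximum is attained on maximal matchings, as stated.

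The main obstacle is the rate bound of step two: one must carefully translate the supremum over $d$ in the definition of $h_e(S)$ into a bound on the actual inter-visit distances of $e$ inside one period, and account for the wrap-around gap so that the $k_e$ gaps really sum to $T$. Everything after that reduces to a single swap of summation together with the elementary estimate $\sum_M x_M a_M \le \max_M a_M$ when $\sum_M x_M = 1$ and $x_M \ge 0$. Taking the supremum of $\bar h(z)$ over all feasible $z$ then defines the poly density $\bar h^*$ and yields the tightest instance-specific lower bound this approach can produce.
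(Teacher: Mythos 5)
Your proof is correct but takes a genuinely more direct route than the paper's.  The paper proves this theorem in two stages: first \wref{lem:fractional-lower-bound} shows $\bar h^* = 1/\ell^* \le h^*$ via an averaging argument over the first $T$ days of an arbitrary (not necessarily periodic) schedule, taking $T\to\infty$ to kill an $o(T)$ error; then the theorem follows by plugging the given $z$ into the dual LP~\eqref{eq:dual-LP} and invoking LP duality ($x^* \ge \ell^*$).  You instead fix a periodic optimal schedule, prove the primal rate constraint $\sum_{M\ni e} x_M \ge g(e)/h^*$ directly from the wrap-around gap structure, and carry out the weak-duality calculation by hand with a single summation swap.  This is more elementary and self-contained: it makes no appeal to LP duality as a black box, which is worth something for an LP with exponentially many variables, and it avoids the $T\to\infty$ limit entirely.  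The one point your approach leans on that the paper's sidesteps is that an optimal periodic schedule \emph{attaining} $h^*$ exists; the paper only asserts that schedules may be assumed periodic, and its Lemma~\ref*{lem:fractional-lower-bound} is phrased for an arbitrary schedule $S$ and then takes an infimum, so it never needs attainment.  Attainment does hold here (all heats below a finite threshold lie in the finite set $\{g(e)\cdot r : e\in R,\ r \le \lceil h/g(e)\rceil\}$), but it is worth stating; alternatively you can run your argument on a schedule of heat $h^*+\varepsilon$ and let $\varepsilon\to 0$.  The reduction from the max over all matchings to the max over \emph{maximal} matchings, which you justify by nonnegativity of $z_e/g(e)$, is handled correctly and matches the statement.
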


The bound implies (and formally establishes) simple ad-hoc bounds such as the following, 
which corresponds to the lower bound of $G$ on the height in Bamboo Garden Trimming.

\begin{corollary}[Total growth bound]
\label{cor:total-growth-bound}
	Given an OPS instance $\mathcal P_o = (P,R,g)$ with optimal heat $h^*$.
	Let $G = \sum_{e\in R} g(e)$ and $m$ be the size of a maximum matching in $(P,R)$.
	Then $h^* \ge G / m$.
\end{corollary}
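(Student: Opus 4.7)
The plan is to deduce the corollary directly from Theorem~\ref{thm:dual-lp-bound} by choosing a specific feasible weighting $z$ that makes the right-hand side collapse to $G/m$.

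Concretely, I would set $z_e = g(e)/G$ for every $e \in R$. Since $G = \sum_{e \in R} g(e) > 0$ (growth rates are strictly positive by Definition~\ref{def:OPS}), each $z_e \in (0,1]$ and $\sum_{e \in R} z_e = 1$, so $z$ is feasible in the sense of Theorem~\ref{thm:dual-lp-bound}. With this choice the per-edge ratio simplifies miraculously: $z_e / g(e) = 1/G$, independent of $e$. Hence for every matching $M$ in $(P,R)$,
\[
    \sum_{e \in M} \frac{z_e}{g(e)} \wwrel= \frac{|M|}{G}.
\]

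The next step is to maximise this quantity over maximal matchings. Since any maximum matching is in particular maximal, and the size of a maximum matching is $m$ by definition, we get
\[
    \max_{M \in \mathcal M}\, \sum_{e \in M} \frac{z_e}{g(e)} \wwrel= \frac{m}{G}.
\]
Plugging into the bound of Theorem~\ref{thm:dual-lp-bound} yields $h^* \ge \bar h(z) = G/m$, which is the desired inequality.

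I don't anticipate a real obstacle here; the only subtlety is confirming that the maximum over \emph{maximal} matchings of $|M|$ equals $m$, which holds because (i) a maximum matching is maximal, and (ii) no matching exceeds size $m$. The proof is essentially a one-line specialisation of the LP bound, illustrating that the uniform weighting $z_e \propto g(e)$ already recovers the natural ``total growth divided by parallelism'' lower bound analogous to the $G$-bound for Bamboo Garden Trimming.
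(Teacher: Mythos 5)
Your proof is correct and matches the paper's own argument exactly: the paper derives Corollary~\ref{cor:total-growth-bound} by plugging $z_e = g(e)/G$ into Theorem~\ref{thm:dual-lp-bound}, just as you do. The supporting observation that the maximum of $|M|$ over maximal matchings equals $m$ is the only bookkeeping needed, and you handle it cleanly.
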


More importantly though, \wref{thm:dual-lp-bound} allows us to define a \emph{poly density} similarly to the Pinwheel Scheduling Problem, and allows us to formulate the most interesting open problem about Poly Scheduling.
For a DPS instance $\mathcal P_d = (P,R,f)$, define the poly density of $\mathcal P_d$, $\bar h^*(\mathcal P_d)$, as the poly density of the OPS instance $\mathcal P_o = (P,R,1/f)$ (see also \wref{lem:ops-to-dps}).

\begin{openproblem}[Poly Density Threshold?]
    \label{open:poly-density}
	Is there a constant $c$ such that every Decision Poly Scheduling instance $\mathcal P_d = (P,R,f)$
	with poly density $\bar h^*(\mathcal P_d) \le c$ admits a valid schedule?
\end{openproblem}

\section{Preliminaries}
\label{sec:preliminaries}

In this section, we introduce some general notation and collect a few simple facts about Poly Scheduling used later.

We write $[n..m]$ for $\{n,n+1,\ldots,m\}$ and $[n]$ for $[1..n]$.
For a set $A$, we denote its powerset by $2^A$.
All graphs in this paper are simple and undirected. We denote by $\mathcal M = \mathcal M(V,E)$ the set of \emph{inclusion-maximal matchings} in graph $(V,E)$,
where matching has the usual meaning of an edge set with no two edges incident to the same vertex.
By $\Delta=\Delta(V,E)$, we denote the \emph{maximum degree} in $(V,E)$.
A \emph{pendant vertex} is a vertex with degree 1.
The \emph{chromatic index} $\chi_1 = \chi_1(V,E)$ is the smallest number $C$ of ``colours'' in a proper edge colouring of $(V,E)$ (\ie, the number of disjoint matchings required to cover $E$);
by Vizing's Theorem~\cite{graphEdgeColourVizing}, we have $\Delta \le \chi_1\le \Delta+1$ for every graph.
Misra and Gries provide a polynomial-time algorithm for edge colouring any graph using at most $\Delta + 1$ colours~\cite{deltaPlusOneEdgeColouringAlg}.

Given a schedule $S:\N_0\to 2^R$ and an edge $e\in R$, we define the \emph{(maximal) recurrence time} $r(e) = r_S(e)$ of $e$ in $S$ as the maximal time between consecutive occurrences of $e$ in $S$, formally 
\[
	r_S(e) \wrel= \sup_{d\in \N} \,
	\begin{dcases*}
		d+1 & $\exists t \in \N_0 : e \notin S(t)\cup S(t+1)\cup \cdots \cup S(t+d-1)$; \\
		0 & otherwise.
	\end{dcases*}
\]
Using recurrence time, the heat $h = h(S)$ of a schedule $S$ in an OPS instance $(P,R,g)$ is 
$h(S) = \max_{e\in R} g(e)\cdot r(e)$.
Clearly, for any schedule $S:\N_0\to 2^R$, we can obtain $S':\N_0\to \mathcal M$ by adding edges to $S(t)$ until we have a maximal matching $S'(t) \supseteq S(t)$; then $r_{S'}(e) \le r_S(e)$ for all $e\in R$ and hence $S'$ is a valid schedule for any DPS instance for which $S$ is valid, and if $S$ schedules an OPS instance with heat $h(S)$ then $S'$ does too, with $h(S') \le h(S)$.

We use \wref{lem:ops-to-dps} to reduce OPS to DPS, and \wref{lem:dps-to-ops} to formalize how DPS solves OPS:

\begin{lemma}[OPS to DPS]
	\label{lem:ops-to-dps}
	For every combination of OPS instance $\mathcal{P}_{o}=(P,R,g)$ and heat value $h$, there exists a DPS instance $\mathcal{P}_{d}=(P,R,f)$ such that 
	\begin{thmenumerate}{lem:ops-to-dps}
		\item any feasible schedule $S:\N_0\to 2^R$ for $\mathcal{P}_{d}$ is a schedule for $\mathcal{P}_{o}$ with heat $\le h$, and 
		\item any schedule $S'$ for $\mathcal{P}_{o}$ with heat $h' > h$ is not feasible for $\mathcal{P}_{d}$.
	\end{thmenumerate}
\end{lemma}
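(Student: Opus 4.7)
The plan is to set $f(e) = \lfloor h / g(e) \rfloor$ for each edge $e \in R$, converting the continuous heat bound into integer per-edge deadlines. The key equivalences driving the proof are: (i) condition \wref{def:DPS-frequencies} holds for $e$ with frequency $f(e)$ iff any two consecutive occurrences of $e$ in $S$ are separated by at most $f(e)$ days, iff $r_S(e)\le f(e)$; and (ii) the heat bound $h(S)\le h$ holds iff $g(e)\cdot r_S(e)\le h$ for every edge.

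Before the main argument, I would dispatch the degenerate case $h < g(e)$ for some $e$, where $\lfloor h/g(e)\rfloor = 0$ is not a valid frequency. Here no schedule of $\mathcal{P}_o$ can achieve heat $\le h$ at all, because $h_e(S)\ge g(e)>h$ for every $S$. So I would simply pick any manifestly infeasible DPS instance (\eg, attach frequency $1$ to two adjacent edges of the same vertex), rendering both clauses vacuously true. It then suffices to treat the case $h\ge g(e)$ for all $e$, ensuring $f(e)\ge 1$.

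For part (a), if $S$ is feasible for $\mathcal{P}_d$, then by equivalence (i) we have $r_S(e)\le f(e)$ for every $e$. Since $f(e)\le h/g(e)$, this gives $g(e)\cdot r_S(e)\le h$, so $h_e(S)\le h$ for every edge, and hence $h(S)\le h$. For part (b), I would argue the contrapositive: if $S'$ is a schedule for $\mathcal{P}_o$ with heat $h'>h$, then some edge $e$ attains $g(e)\cdot r_{S'}(e)=h'>h$, so $r_{S'}(e)>h/g(e)\ge f(e)$. Because $r_{S'}(e)$ is an integer, we conclude $r_{S'}(e)\ge f(e)+1$, which directly exhibits a window of $f(e)$ consecutive days containing no occurrence of $e$, violating \wref{def:DPS-frequencies}.

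The only point requiring mild care is the cosmetic discrepancy between the definition of $r_S(e)$ (whose ``otherwise'' branch returns $0$) and that of $h_e(S)$ (which returns $g(e)$ in the same branch). For always-scheduled edges, both sides of the intended inequality trivially reduce to $g(e)\le h$, which is guaranteed by our assumption $h\ge g(e)$, so this corner does not affect the argument. With that observation in hand, the whole proof reduces to floor-and-ceiling bookkeeping; I do not expect a genuine obstacle.
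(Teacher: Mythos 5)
Your proposal reproduces the paper's own proof essentially verbatim: same conversion $f(e)=\bigl\lfloor h/g(e)\bigr\rfloor$, same two observations ($r_S(e)\le f(e)$ implies $g(e)\,r_S(e)\le g(e)\lfloor h/g(e)\rfloor\le h$ for part (a); integrality of $r_{S'}(e)$ forces $r_{S'}(e)\ge f(e)+1$ for part (b), which the paper phrases as a proof by contradiction and you phrase as a contrapositive — a purely cosmetic difference).

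You went further than the paper in one respect: you noticed that if $h<g(e)$ for some $e$ then $\lfloor h/g(e)\rfloor=0$, which is not a legal frequency under $f:R\to\N$. The paper simply writes the formula and moves on; it implicitly relies on $h$ being at least $g_{\max}$ whenever the lemma is invoked. Your fix — replace the DPS instance with a manifestly infeasible one, so that both clauses hold vacuously — is the right instinct, but your concrete example (frequency $1$ on two adjacent edges) requires a vertex of degree $\ge 2$. If the graph $(P,R)$ happens to be a matching, \emph{every} DPS instance on it with positive frequencies is feasible (schedule all of $R$ every day), and one checks that in that case clause (a) genuinely fails whenever $h<g_{\max}$. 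So the lemma as literally stated has a pathological gap that your patch does not fully close; it simply never bites in the paper because the lemma is only used for $h\ge g_{\max}$, where $f(e)\ge 1$ automatically. Your bookkeeping on the $r_S(e)=0$ versus $h_e(S)=g(e)$ discrepancy is also correct and, again, more careful than the paper's one-line treatment.
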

\begin{proof}
Consider an OPS polycule $\mathcal{P}_{o} = (P,R,g)$; we set $\mathcal{P}_{d} = (P,R,f)$ where $f(e) = \bigl\lfloor \frac{h}{g(e)} \bigr\rfloor$ for all $e\in R$. Schedules satisfying $\mathcal{P}_{d}$ when applied to $\mathcal{P}_{o}$ will allow heat of at most 
$\max_{e\in R} \, g(e)\cdot f(e) = \max_{e\in R} \, g(e) \lfloor\frac{h}{g(e)}\rfloor \leq h$.

Now consider a schedule $S'$ for $\mathcal{P}_{o}$ with heat $h' > h$. By definition, $h' = \max_{e\in R} r_{S'}(e)\cdot g(e)$, where $r(e)=r_{S'}(e)$ is the recurrence time of $e$ in $S'$.
Assume towards a contradiction that $r(e)\leq f(e)$ for all $e\in R$. This implies that 
$h'
= \max_{e\in R} r(e)\cdot g(e) 
\leq \max_{e\in R} \lfloor \frac{h}{g(e)} \rfloor\cdot g(e)
\le h
$, a contradiction to the assumption.
\end{proof}

\begin{lemma}[DPS to OPS]
	\label{lem:dps-to-ops}
	Let $\mathcal P_d = (P,R,f)$ be a DPS instance. Set $F=\max_{e\in R} f(e)$. 
	There is an OPS instance $\mathcal P_o = (P,R,g)$ such that the following holds.
	\begin{thmenumerate}{lem:dps-to-ops}
		\item If $\mathcal P_d$ is feasible, then $\mathcal P_o$ admits a schedule of height $h\le 1$.
		\item If $\mathcal P_d$ is infeasible, then the optimal heat $h^*$ of $\mathcal P_o$ satisfies $h^* \ge (F+1)/F$.
	\end{thmenumerate}
\end{lemma}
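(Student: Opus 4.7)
The plan is to set $g(e) = 1/f(e)$ for every $e \in R$ in the constructed OPS instance $\mathcal P_o$. This rescaling makes the DPS frequency constraint $r_S(e) \le f(e)$ exactly equivalent to the per-edge heat bound $g(e)\cdot r_S(e) \le 1$, which sets up both halves of the lemma almost by definition, with no auxiliary construction needed.

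For part (a), suppose $S$ is feasible for $\mathcal P_d$. By \wref{def:DPS-frequencies}, no window of $f(e)$ consecutive days misses $e$, so by the definition of $r_S$ given in the preliminaries we have $r_S(e) \le f(e)$ for every edge. Multiplying by $g(e) = 1/f(e)$ yields $g(e)\cdot r_S(e) \le 1$, and taking the maximum over $e \in R$ gives $h(S) \le 1$.

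For part (b), I would show the stronger statement that every valid OPS schedule $S$ has heat at least $(F+1)/F$, which immediately yields $h^* \ge (F+1)/F$. A valid OPS schedule is simply a matching schedule (\wref{def:DPS-no-conflicts}), so $S$ is a candidate DPS schedule as well. Since $\mathcal P_d$ is infeasible, $S$ must violate the frequency requirement at some edge, i.e., there exist $e \in R$ and $t \in \N_0$ with $e \notin S(t)\cup S(t+1)\cup\cdots\cup S(t+f(e)-1)$. The definition of $r_S$ then forces $r_S(e) \ge f(e)+1$, so
$$h(S) \wrel\ge g(e)\cdot r_S(e) \wrel\ge \frac{f(e)+1}{f(e)} \wrel= 1 + \frac{1}{f(e)} \wrel\ge 1 + \frac{1}{F} \wrel= \frac{F+1}{F},$$
using that $x \mapsto 1 + 1/x$ is decreasing and $f(e) \le F$.

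I do not expect a real obstacle here; the statement is essentially definitional once the right rescaling is fixed. The only minor subtlety worth checking is that $r_S(e)$ takes integer values (immediate from the supremum over $d\in \N$ in its definition), so the strict inequality ``$e$ missing from an $f(e)$-window'' yields the clean bound $r_S(e) \ge f(e)+1$ rather than something weaker.
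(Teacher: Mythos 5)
Your proposal is correct and takes the same approach as the paper: set $g(e) = 1/f(e)$ and read off both directions from the definitions. If anything your part (b) is slightly cleaner than the paper's write-up, since you work with the specific violating edge $e$ and make the monotonicity of $x\mapsto 1+1/x$ explicit, whereas the paper uses a somewhat loose $\max_{e\in R}(f(e)+1)/f(e)$ in its chain of inequalities.
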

\begin{proof}
Let $\mathcal P_d = (P,R,f)$ be given; we set $\mathcal P_o = (P,R,g)$ with $g(e) = 1/f(e)$ for $e\in R$.
Any feasible schedule $S$ for $\mathcal P_d$ has recurrence time $r(e) \le f(e)$ by definition,
so its heat in $\mathcal P_o$ is $h(S) = \max r(e) g(e) = \max r(e) \frac1{f(e)} \le 1$.
If conversely $\mathcal P_d$ is infeasible, for every $S:\N_0\to 2^R$ there exists an edge $e\in R$ where $r_S(e) > f(e)$, \ie, $r_S(e) \ge f(e)+1$. For the heat of $S$ in $\mathcal P_o$, this means
$h(S) = \max_{e\in R} r_S(e) g(e) \ge \max_{e\in R} (f(e)+1) \frac1{f(e)} \ge (F+1)/F $.
\end{proof}

We will often use the \emph{Normal Form} of OPS instances in proofs; this can be assumed without loss of generality but is not generally useful for algorithms unless $h^*$ is known:

\begin{lemma}[Normal Form OPS]
	\label{lem:normal-form-ops}
	For every OPS instance $\mathcal P_o = (P,R,g)$, there is an equivalent OPS instance $\mathcal P'_o = (P,R,g')$ with optimal heat $1$ where $g':R\to \mathcal U$ for $\mathcal U = \{1/m : m \in \N_{\ge1}\}$, \ie, the set of unit fractions. 
	More precisely, for every schedule $S:\N_0\to 2^R$ holds: $S$ has optimal heat $h^*$ in $\mathcal P_o$ if and only if $S$ has heat $1$ in $\mathcal P'_o$.
\end{lemma}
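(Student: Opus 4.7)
The plan is to rescale the growth rates so that the optimal heat becomes exactly $1$, exploiting the integrality of recurrence times so that the rescaled rates land on unit fractions. Let $h^*>0$ be the optimal heat of $\mathcal P_o$; this is actually attained, since the trivial schedule cycling through an edge-colouring gives heat at most $\chi_1\cdot \max_e g(e)$, and every achievable heat lies in the discrete set $\{r\cdot g(e):r\in\N, e\in R\}$. Define $f(e) := \lfloor h^*/g(e)\rfloor$ and $g'(e) := 1/f(e)$. Since $r_S(e)\ge 1$ for every schedule $S$, we have $h^* \ge g(e)$ and hence $f(e) \ge 1$, so $g'(e)$ is a well-defined element of $\mathcal U$.

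The key step is the chain of equivalences, valid for every schedule $S$ and every edge $e$:
\[
	r_S(e)\cdot g(e) \;\le\; h^*
	\;\iff\;
	r_S(e) \;\le\; \lfloor h^*/g(e)\rfloor = f(e)
	\;\iff\;
	r_S(e)\cdot g'(e) \;\le\; 1,
\]
where the first equivalence crucially uses $r_S(e)\in\N$. Taking the maximum over $e\in R$ then yields ``$h(S)\le h^*$ in $\mathcal P_o$'' iff ``$h(S)\le 1$ in $\mathcal P'_o$''.

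An optimal schedule for $\mathcal P_o$ witnesses that the optimum of $\mathcal P'_o$ is at most $1$. Conversely, any schedule $S$ with heat strictly less than $1$ in $\mathcal P'_o$ would force $r_S(e)\le f(e)-1$ for all $e$ and hence give $h(S) \le \max_e (f(e)-1)g(e) < h^*$ in $\mathcal P_o$, contradicting optimality of $h^*$. Hence the optimum of $\mathcal P'_o$ equals $1$, and combined with the above equivalence this yields $h(S) = h^*$ in $\mathcal P_o$ iff $h(S) = 1$ in $\mathcal P'_o$, as required.

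The only mildly subtle point is the integrality step in the displayed equivalence: the continuous rescaling $g(e)/h^*$ would also give optimal heat $1$ but would not in general yield unit fractions. I do not foresee a real obstacle here, as the rounding $f(e) = \lfloor h^*/g(e)\rfloor$ is essentially dictated by the combination of ``optimal heat $1$'' and the codomain $\mathcal U$.
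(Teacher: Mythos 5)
Your proof is correct, and it takes a genuinely different route from the paper. The paper first rescales to $\hat g(e) = g(e)/h^*$, fixes a single optimal schedule $S$, and then iteratively rounds each $\hat g(e)$ up to $1/r_S(e)$, observing that this preserves the heat of $S$. You instead use the closed-form, schedule-independent rounding $g'(e) = 1/\lfloor h^*/g(e)\rfloor$, which is exactly the composition of \wref{lem:ops-to-dps} (OPS to DPS at $h=h^*$) with \wref{lem:dps-to-ops} (DPS to OPS). The two constructions agree only when the chosen optimal $S$ meets every edge as rarely as allowed, i.e.\ $r_S(e) = \lfloor h^*/g(e)\rfloor$ for all $e$; otherwise the paper's $g'(e)$ is strictly larger than yours. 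That difference matters for the ``for every schedule'' clause of the lemma: your chain $r_S(e)g(e)\le h^*\iff r_S(e)\le f(e)\iff r_S(e)g'(e)\le 1$ is uniform in $S$ and so delivers the biconditional directly, whereas with the paper's schedule-dependent choice an optimal schedule $S'$ that serves some edge $e$ less often than the fixed $S$ (but still within $\lfloor h^*/g(e)\rfloor$) can end up with heat strictly above $1$ in $\mathcal P'_o$. Your side remark that $h^*$ is attained is fine (achievable heats form a discrete set bounded above by the edge-colouring schedule) and is used implicitly in the paper too when it ``fixes an optimal schedule''.
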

\begin{proof}
Let $(P,R,g)$ be an arbitrary OPS instance with optimal heat $h^*$.
Setting $\hat g(e) = g(e)/h^*$ yields OPS instance $(P,R,\hat g)$ with optimal heat $1$.
We now start by setting $g'(e) = \hat g(e)$ for all $e\in R$.
Fix an optimal schedule $S$.
Suppose that for some relation $e\in R$, we have $g'(e) \notin \mathcal U$.
In $S$, there is a maximal separation $r(e) = q \in \N$ between consecutive occurrences of $e$ with $q\cdot g(e) \le h^*$. But then, increasing $g(e)$ to $h^*/q$ would not affect the heat of $S$.
We can thus set $g'(e) = 1/q$.
By induction, we thus obtain $g': R\to \mathcal U$ without affecting the heat of $S$.
\end{proof}

\section{Inapproximability}
\label{sec:inapproximability}

In this section, we prove \wref{thm:sat-inapprox}.  
We show via a reduction from MAX-3SAT that OPS does not allow efficient $(1+\delta)$-approximation algorithms for $\delta<\frac{1}{12}$ unless P $=$ NP. 

\subsection{Overview of Proof}
\label{sec:inapproximability-overview}

The proof of \wref{thm:sat-inapprox} has two steps:
The first step is a reduction from the decision version of MAX-3SAT (D-MAX-3SAT) to DPS.
D-MAX-3SAT is the following problem: given a 3-CNF formula $\varphi = c_1\land c_2\land \cdots \land c_m$ with clauses $C=\{c_1,c_2,\ldots, c_m\}$ over variables $X=\{x_1, x_2,\ldots, x_{n'}\}$ and integer $k$, decide whether there is an assignment of Boolean values to the variables in $X$ such that at least $k$ clauses in $C$ are satisfied. 

\begin{lemma}[D-MAX-3SAT $\le_p$ DPS]
    \label{lem:dps=3sat}
    For any 3-CNF formula $\varphi$ with $m$ clauses and integer $k\leq m$, we can construct in polynomial time a decision polycule $\mathcal P_{d\varphi k}$ which has a valid schedule if and only if at least $k$ clauses of $\varphi$ can be simultaneously satisfied.
\end{lemma}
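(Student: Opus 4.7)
The plan is to build a decision polycule $\mathcal{P}_{d\varphi k}$ whose feasible schedules correspond exactly to assignments of $\varphi$ satisfying at least $k$ of its clauses. Following the sketch in the introduction, I would enforce a 6-day periodic structure and design the variable gadgets so that in any valid schedule the frequency-3 edges at each variable vertex land in the residue classes $\{0,1\} \pmod 3$; which of its two such edges lands in residue~0 then encodes the Boolean assignment to that variable.

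Concretely, each variable $x_i$ becomes a person $v_i$ of degree~4: two incident frequency-3 edges (the \emph{true port} and \emph{false port}) and two frequency-6 edges that chain $v_i$ along a backbone cycle through all variables. The backbone 6-edges pin down two residues modulo~6 at each $v_i$, forcing the two 3-edges into the remaining residues $\{0,1\} \pmod 3$ and yielding the intended True/False encoding. From the true/false ports I would fan out \emph{literal wires} carrying the value (or negation) of $x_i$ to the clauses that use it, built as chains of low-frequency edges that preserve the residue-0-vs-residue-1 encoding. Each clause $c_j = \ell_{j,1}\lor\ell_{j,2}\lor\ell_{j,3}$ receives a clause gadget that computes an output wire $\sigma_j$ equal to the disjunction of its input literals; as the introduction warns, a bare OR gadget gives only a weaker encoding, so I would splice in the \emph{tensioning gadget} to renormalise $\sigma_j$ to the canonical encoding used by downstream gadgets.

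To handle the ``at least $k$'' threshold, I would feed the clause outputs $\sigma_1,\ldots,\sigma_m$ into a threshold gadget $T_{k,m}$ that admits a feasible schedule iff at least $k$ of the $\sigma_j$ are True. Since $T_{k,m}$ has a polynomial-size monotone circuit, and each monotone gate can be simulated by composing the (already tensioned) OR gadget with a dual AND-style gadget, $T_{k,m}$ can be constructed in polynomial size. The whole polycule thus has size polynomial in $|\varphi|$, with all frequencies drawn from a small constant set (such as $\{3,6\}$ together with a few auxiliary values used inside the gadgets), so the construction runs in polynomial time.

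The main obstacle is a careful two-directional correctness proof. For $(\Leftarrow)$, given an assignment satisfying $\ge k$ clauses, I have to write down an explicit 6-day matching schedule and verify that no person is double-booked and every edge's recurrence time meets its frequency budget. For $(\Rightarrow)$, the harder direction, one must rule out ``rogue'' schedules that break residue alignment or exploit the weaker OR encoding — which is precisely what the tensioning gadget is designed for — so that any feasible schedule of $\mathcal{P}_{d\varphi k}$ does induce a well-defined assignment and correctly propagates truth values along every wire, forcing $T_{k,m}$ to be feasible only when $\ge k$ of the $\sigma_j$ are True. Managing the interactions between gadgets that share vertices, so that local flexibility in one gadget cannot silently subvert another, is the delicate bookkeeping step where I expect the bulk of the proof to go.
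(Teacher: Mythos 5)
Your skeleton matches the paper's — variable gadgets of degree $[3,3,6,6]$ chained through frequency-$6$ edges, literal wires, OR gadgets for clauses, and a terminal mechanism enforcing a $\ge k$ threshold — but there is a genuine conceptual gap in how you treat the weak OR encoding. You propose to ``splice in the tensioning gadget to renormalise $\sigma_j$ to the canonical encoding used by downstream gadgets.'' The tension gadget cannot do this. Its only effect (\wref{lem:tension}) is to force its incident $12_O$ edges into blue slots; it is a \emph{terminal forcing constraint}, not an encoding repair. Applying it to the raw clause outputs $\sigma_1,\ldots,\sigma_m$ would force all $m$ clauses to evaluate true, collapsing the reduction to ordinary 3SAT rather than D-MAX-3SAT. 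More fundamentally, the paper never ``renormalises'' the one-sided encoding: a satisfied clause's output may legitimately sit in a green slot. The construction is designed to \emph{tolerate} this ambiguity, not remove it, and the correctness proof turns precisely on that point.

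The paper's resolution is a sorting network of \SWAP{} gadgets followed by tension applied only to the $k$ leftmost sorted outputs. The key property (\wref{lem:sort-schedule-properties}) is that each \SWAP{} never increases the count of blue wires when traced backward, so tensioning $k$ sorted outputs forces $\ge k$ blue clause outputs, hence $\ge k$ satisfied clauses. Your alternative — a general polynomial-size monotone threshold circuit built from the (one-sided) OR and AND gadgets — can actually be made to work, but not for the reason you give: the argument is that the gadget semantics impose $\mathrm{val}(\text{gate output}) \le g(\mathrm{val}(\text{inputs}))$, so by monotonicity of the circuit a blue output wire implies $\mathrm{Threshold}_k$ evaluates to $1$ on the induced input assignment, i.e.\ at least $k$ of the $\sigma_j$ are blue; and for the converse one sets exactly the satisfied clauses blue and propagates ``exact'' gate values, which respects the $\le$ constraints. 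If you pursue this route you must state and use that monotonicity invariant explicitly — it replaces the paper's counting lemma — and you must drop the idea of renormalising $\sigma_j$. You also do not address how literal and constant fan-out is achieved (the paper's $D_3$, $D_6$, $D_{12}$ and slot-splitter gadgets), which is where a large fraction of the gadget-engineering actually lives.
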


\wref{lem:dps=3sat} is our key technical contribution and its proof will be given over the course of the remainder of this section.

The second step in the proof of \wref{thm:sat-inapprox} is to convert the decision polycule $\mathcal P_{d\varphi k}$ from \wref{lem:dps=3sat} to an optimisation polycule $\mathcal P_{o\varphi k}$ using \wref{lem:dps-to-ops}.
It will be immediate from the construction that the largest frequency in $\mathcal P_{d\varphi k}$ is $F=12$.
So by \wref{lem:dps-to-ops}, $\mathcal P_{o\varphi k}$ has either $h^* = 1$, namely if $\mathcal P_{d\varphi k}$ is feasible, or $h^*\ge \frac{13}{12}$, otherwise.

\begin{proofof}{\wref{thm:sat-inapprox}}
Assume that there is a polynomial-time approximation algorithm $A$ for OPS with approximation ratio $\alpha<\frac{13}{12}$. 
If $\mathcal P_{o\varphi k}$ has optimal heat $h^* = 1$, $A$ produces a schedule with heat $1\leq h\leq \alpha < \frac{13}{12}$, whereas if $h^*\ge \frac{13}{12}$, $A$ must produce a schedule of heat $\frac{13}{12}\leq h\leq \alpha\cdot\frac{13}{12}$.
So, by running $A$, we are able to distinguish between $h^*\le 1$ and $h^*\ge \frac{13}{12}$ for $\mathcal P_{o\varphi k}$, hence between feasibility or infeasibility of $\mathcal P_{d\varphi k}$ and, therefore between Yes and No instances of D-MAX-3SAT via the polynomial-time reduction from \wref{lem:dps=3sat}.
As a generalisation of 3SAT, D-MAX-3SAT is NP-hard, hence P $=$ NP follows.
\end{proofof}

Let us denote by $\alpha^*$ the approximability threshold for Optimisation Poly Scheduling,
that is:
efficient polynomial-time approximation algorithms with approximation ratio $\alpha$ exist 
if and only if $\alpha \ge \alpha^*$ (assuming P $\ne$ NP).
\wref{thm:sat-inapprox} shows that $\alpha^*\geq\frac{13}{12}$ and
\wref{thm:layer-approx} shows that $\alpha^* = O(\log n)$,
leaving a substantial gap. 
We conjecture that the constant $\frac{13}{12}$ can be improved by careful analysis of our construction, but we leave this to future work. 

\begin{conjecture}
    \label{conj:4/3}
    $\alpha^* \ge \frac43$.
\end{conjecture}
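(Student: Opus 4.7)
The plan is to establish $\alpha^* \ge \frac{4}{3}$ through a reduction that forces a sharper gap than the SAT-based construction of \wref{thm:sat-inapprox}. The cleanest route, which is essentially the one underlying \wref{thm:color-inapprox}, starts from the NP-hardness of 3-edge-colouring 3-regular graphs (Holyer's theorem). Given such a graph $G = (V,E)$ with $n=|V|$ even, I would build the DPS instance $\mathcal P_d = (V, E, f)$ with $f(e) = 3$ for every edge. A daily matching has at most $n/2$ edges, so any three-day window admits at most $3n/2 = |E|$ edge-appearances; but the frequency constraint requires at least $|E|$. Equality is forced, so each three-day window decomposes $E$ into three perfect matchings, which is precisely a proper 3-edge-colouring. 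Thus $\mathcal P_d$ is feasible if and only if $G$ is 3-edge-colourable.

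Applying \wref{lem:dps-to-ops} to $\mathcal P_d$ with $F = 3$ then produces an OPS instance whose optimal heat is at most $1$ in the feasible case and at least $\frac{4}{3}$ otherwise. A polynomial-time $(1+\delta)$-approximation with $\delta < \frac{1}{3}$ would decide 3-edge-colourability of 3-regular graphs in polynomial time, contradicting Holyer. This proves the conjecture, and indeed coincides with \wref{thm:color-inapprox}.

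The more ambitious reading of the conjecture — that the bound $\frac{13}{12}$ inside \wref{thm:sat-inapprox} itself can be tightened to $\frac{4}{3}$ by reworking the SAT reduction — is more delicate, and I expect this to be the real difficulty. My first step would be to trace where the bottleneck frequency $F = 12$ enters; it appears to come from the connector and tensioning edges whose high frequencies provide the slack used to absorb the weak truth-value encoding of the OR-gadget. Replacing them with sub-gadgets built from frequency-3 edges alone would immediately yield the $\frac{4}{3}$ gap via \wref{lem:dps-to-ops}, but requires preserving both the mod-3 cadence currently enforced by the divisibility between 6 and 12 and the unambiguous slot-to-truth-value correspondence. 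The main obstacle is that at uniform frequency 3 every edge already commits to a slot every third day, leaving no buffer to absorb local violations without destroying feasibility in the satisfiable case; circumventing this likely requires either a parallel composition of small gadgets simulating the original frequency-6 behaviour, or a PCP-style gap amplification, whose compatibility with the matching constraint inside a polycule is the delicate point.
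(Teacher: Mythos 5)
Your first paragraph is correct and essentially reproduces the paper's own proof of \wref{thm:color-inapprox}, which already establishes $\alpha^* \ge \frac43$ and thereby settles the conjecture: both arguments reduce from 3-Regular Chromatic Index (Holyer) via a uniform-frequency polycule, the only cosmetic difference being that you route through a DPS instance with $f\equiv 3$ and \wref{lem:dps-to-ops}, whereas the paper works with an unweighted OPS instance directly via \wref{pro:unweighted}. The speculative second paragraph about reworking the frequency-12 gadgetry of \wref{thm:sat-inapprox} is precisely the open direction for which the paper retains the (now superseded) conjecture, but it is discussion rather than proof.
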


\begin{remark}[Towards stronger inapproximability results]
    Our reduction includes additional degrees of freedom not currently used towards the proof of \wref{lem:dps=3sat}.
    In particular, for the current statement, a reduction for standard 3SAT would have sufficed, removing the sorting network from the construction.
    However, to find better constant or even superconstant lower bounds for $\alpha^*$ it seems likely that starting with a gapped MAX-3SAT problem can provide stronger gaps for the outcome.
    For that, we need that our construction allows us to specify freely how many clauses have to be satisfiable for $\mathcal P_{d\varphi k}$ to be feasible.
    While we leave this to future work, we include here the required features in the construction which may facilitate these results.
\end{remark}

\subsection{Reduction Overview}
\label{sec:DMAX3SAT-to-OPS}

We now give the proof of \wref{lem:dps=3sat}.
For the remainder of this section, we assume a 3-CNF formula $\varphi = c_1\land \cdots\land c_m$ over variables $X = \{x_1,\ldots, x_{n'}\}$ and integer $k$ are given.
We will describe how to construct the DPS instances $\mathcal P_{d\varphi k}$ that admits a schedule iff
there is a variable assignment $v:X \to \{\mathrm{True},\mathrm{False}\}$ that satisfies at least $k$ clauses in $C = \{c_1,\ldots,c_m\}$. 
The construction is based on building components of Boolean formulas via DPS ``gadgets'': 
\begin{itemize}
\item \emph{variables} (\wref{sec:variables}),
\item clauses (\emph{OR gadgets}) (\wref{sec:OR}),
\item a \emph{sorting network}, comprised of \emph{\SWAP gadgets} (\wref{sec:SWAP}) to group satisfied outputs together, and 
\item a check for $\ge k$ true clauses, the \emph{tension gadget} (\wref{sec:Tension}).
\end{itemize}
To make those gadgets work, we require further auxiliary gadgets:
\begin{itemize}
\item a \emph{``True Clock''} to break ties between symmetric choices for schedules (\wref{sec:TrueClock}),
\item slot duplication gadgets: \emph{$D_3$ duplicators} (\wref{sec:D3}), \emph{$D_6$ duplicators} (\wref{sec:D6}), $D_{12}$ duplicators (\wref{sec:D12}), and
\item \emph{slot splitting gadgets} $S_{B6}$, $S_{B12}$, $S_{G12}$ (\wref{sec:making-12s}).
\end{itemize}

The overall conversion algorithm is stated in \wref{def:gadgets-assemble} below;
a worked example is shown in \wref{fig:worked-eg:Pdsz}.

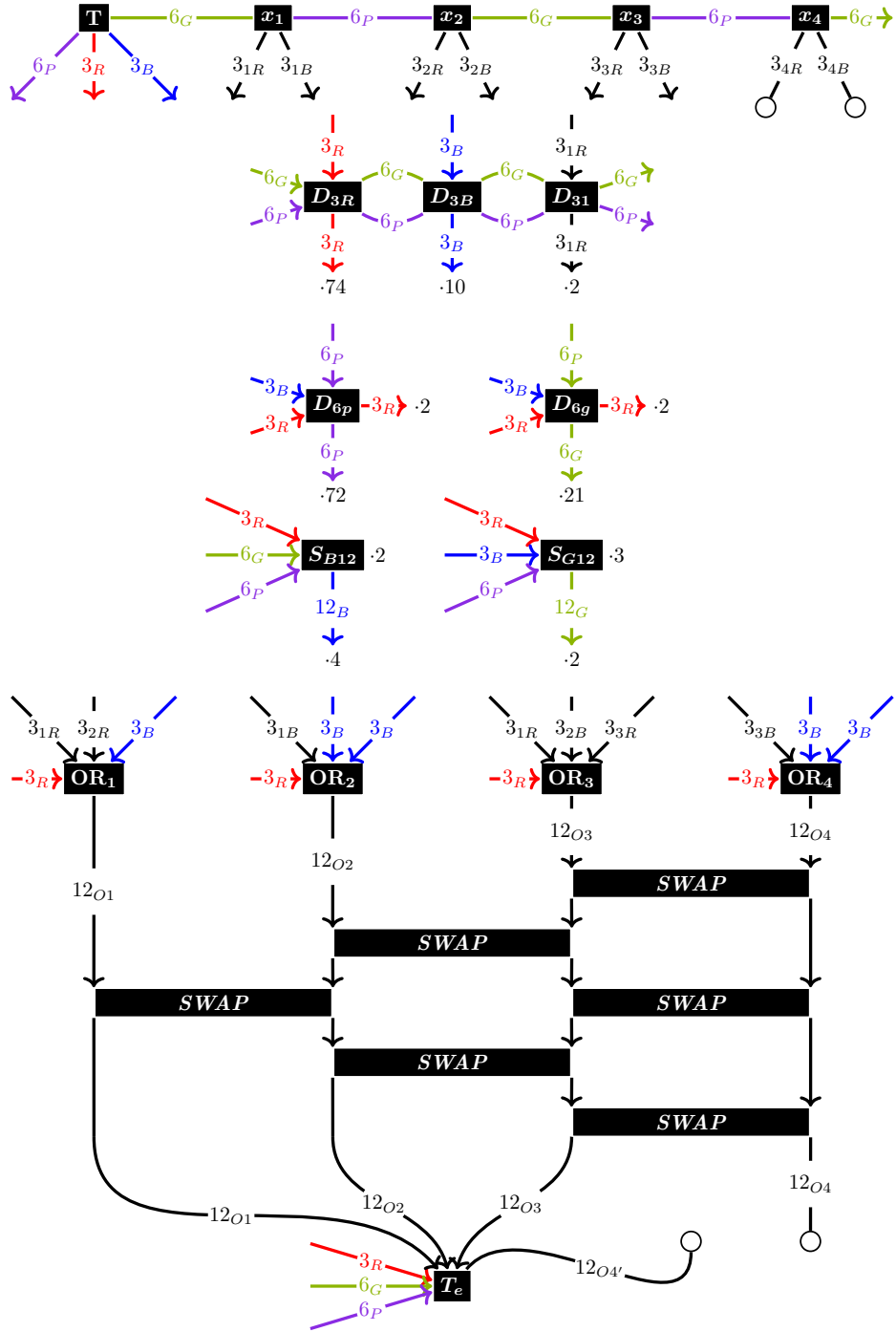
\begin{figure*}[p]
    \centering
    \resizebox{!}{0.84\textheight}{
    
    \begin{tikzpicture}[scale=2]
            
        \node[shorthand] (T) at (-1, 0) {T};
        \node[shorthand] (x1) at (0.5, 0) {$x_1$};
        \node[shorthand] (x2) at (2, 0) {$x_2$};
        \node[shorthand] (x3) at (3.5, 0) {$x_3$};
        \node[shorthand] (x4) at (5, 0) {$x_4$};
        \node[invisible node] (spare out) at (5.75, 0) {};
    
        \path[-] (T) edge[std, ourGreen] node[edge descriptor] {\color{applegreen}$6_G$} (x1);
        \path[-] (x1) edge[std, ourPurple] node[edge descriptor] {\color{blue-violet}$6_P$} (x2);
        \path[-] (x2) edge[std, ourGreen] node[edge descriptor] {\color{applegreen}$6_G$} (x3);
        \path[-] (x3) edge[std, ourPurple] node[edge descriptor] {\color{blue-violet}$6_P$} (x4);
        \path[->] (x4) edge[std, ourGreen] node[edge descriptor] {\color{applegreen}$6_G$} (spare out);
    
        \node[invisible node] (T out 1) at (-1.75, -0.75) {};    
        \node[invisible node] (T out 2) at (-1, -0.75) {};
        \node[invisible node] (T out 3) at (-0.25, -0.75) {};
        \node[invisible node] (x1 out 1) at (0.125, -0.75) {};
        \node[invisible node] (x1 out 2) at (0.875, -0.75) {};
        \node[invisible node] (x2 out 1) at (1.625, -0.75) {};
        \node[invisible node] (x2 out 2) at (2.375, -0.75) {};
        \node[invisible node] (x3 out 1) at (3.125, -0.75) {};
        \node[invisible node] (x3 out 2) at (3.875, -0.75) {};
        \node[graph node] (x4 out 1) at (4.625, -0.75) {};
        \node[graph node] (x4 out 2) at (5.375, -0.75) {};
    
        \path[->] (T) edge[std, ourPurple] node[edge descriptor] {\color{blue-violet}$6_P$} (T out 1);
        \path[->] (T) edge[std, ourRed] node[edge descriptor] {\color{red}$3_R$} (T out 2);
        \path[->] (T) edge[std, ourBlue] node[edge descriptor] {\color{blue}$3_B$} (T out 3);
        \path[->] (x1) edge[std, ourBlack] node[edge descriptor] {$3_{1R}$} (x1 out 1);
        \path[->] (x1) edge[std, ourBlack] node[edge descriptor] {$3_{1B}$} (x1 out 2);
        \path[->] (x2) edge[std, ourBlack] node[edge descriptor] {$3_{2R}$} (x2 out 1);
        \path[->] (x2) edge[std, ourBlack] node[edge descriptor] {$3_{2B}$} (x2 out 2);
        \path[->] (x3) edge[std, ourBlack] node[edge descriptor] {$3_{3R}$} (x3 out 1);
        \path[->] (x3) edge[std, ourBlack] node[edge descriptor] {$3_{3B}$} (x3 out 2);
        \path[-] (x4) edge[std, ourBlack] node[edge descriptor] {$3_{4R}$} (x4 out 1);
        \path[-] (x4) edge[std, ourBlack] node[edge descriptor] {$3_{4B}$} (x4 out 2);

        \def\xadj{0}
       
        \node[invisible node] (2inR)    at (1+\xadj, -0.75) {};    
        \node[invisible node] (2inB)    at (2+\xadj, -0.75) {};
        \node[invisible node] (2in1)    at (3+\xadj, -0.75) {};

        \node[shorthand] (D3R)          at (1+\xadj, -1.5) {$D_{3R}$};
        \node[shorthand] (D3B)          at (2+\xadj, -1.5) {$D_{3B}$};
        \node[shorthand] (D31)          at (3+\xadj, -1.5) {$D_{31}$};

        \path[->] (2inR) edge[std, ourRed] node[edge descriptor] {\color{red}$3_R$} (D3R);
        \path[->] (2inB) edge[std, ourBlue] node[edge descriptor] {\color{blue}$3_B$} (D3B);
        \path[->] (2in1) edge[std, ourBlack] node[edge descriptor] {$3_{1R}$} (D31);

        \node[invisible node] (2inG)    at (0.25+\xadj, -1.75+0.5) {};    
        \node[invisible node] (2inP)    at (0.25+\xadj, -2.25+0.5) {};
        \node[invisible node] (2outG)   at (3.75+\xadj, -1.75+0.5) {};
        \node[invisible node] (2outP)   at (3.75+\xadj, -2.25+0.5) {};

        \path[->] (2inG) edge[std, ourGreen] node[edge descriptor] {\color{applegreen}$6_G$} (D3R);
        \path[->] (2inP) edge[std, ourPurple] node[edge descriptor] {\color{blue-violet}$6_P$} (D3R);
        \path[->] (D31) edge[std, ourGreen] node[edge descriptor] {\color{applegreen}$6_G$} (2outG);
        \path[->] (D31) edge[std, ourPurple] node[edge descriptor] {\color{blue-violet}$6_P$} (2outP);

        \path[-] (D3R) edge[std, ourGreen, bend left] node[edge descriptor] {\color{applegreen}$6_G$} (D3B);
        \path[-] (D3R) edge[std, ourPurple, bend right] node[edge descriptor] {\color{blue-violet}$6_P$} (D3B);
        \path[-] (D3B) edge[std, ourGreen, bend left] node[edge descriptor] {\color{applegreen}$6_G$} (D31);
        \path[-] (D3B) edge[std, ourPurple, bend right] node[edge descriptor] {\color{blue-violet}$6_P$} (D31);

        \node[invisible node] (2outR)   at (1+\xadj, -2.75+0.5) {$\cdot 74$};    
        \node[invisible node] (2outB)   at (2+\xadj, -2.75+0.5) {$\cdot 10$};
        \node[invisible node] (2out1)   at (3+\xadj, -2.75+0.5) {$\cdot 2$};

        \path[->] (D3R) edge[std, ourRed] node[edge descriptor] {\color{red}$3_R$} (2outR);
        \path[->] (D3B) edge[std, ourBlue] node[edge descriptor] {\color{blue}$3_B$} (2outB);
        \path[->] (D31) edge[std, ourBlack] node[edge descriptor] {$3_{1R}$} (2out1);

        \def\yadj{0.75}
        \def\xadj{}
        \node[invisible node](D6P in B)     at (\xadj+0.25, -3.75+\yadj){};
        \node[invisible node](D6P in R)     at (\xadj+0.25, -4.25+\yadj){};
        \node[invisible node](D6P in P)     at (\xadj+1, -3.25+\yadj){};
        \node[shorthand] (D6P)              at (\xadj+1, -4+\yadj) {$D_{6p}$};
        \node[invisible node](D6P out R)    at (\xadj+1.75, -4+\yadj){$\cdot 2$};

        \path[->] (D6P in P) edge[std, ourPurple] node[edge descriptor] {\color{blue-violet}$6_P$} (D6P);
        \path[->] (D6P in B) edge[std, ourBlue] node[edge descriptor] {\color{blue}$3_B$} (D6P);
        \path[->] (D6P in R) edge[std, ourRed] node[edge descriptor] {\color{red}$3_R$} (D6P);
        \path[->] (D6P) edge[std, ourRed] node[edge descriptor] {\color{red}$3_R$} (D6P out R);

        \node[invisible node](D6P out P)    at (\xadj+1, -4.75+\yadj){$\cdot 72$};

        \path[->] (D6P) edge[std, ourPurple] node[edge descriptor] {\color{blue-violet}$6_P$} (D6P out P);

        \def\xadj{}
    
        \node[invisible node](D6G in B)     at (\xadj+2.25, -3.75+\yadj){};
        \node[invisible node](D6G in R)     at (\xadj+2.25, -4.25+\yadj){};
        \node[invisible node](D6G in G)     at (\xadj+3, -3.25+\yadj){};
        \node[shorthand] (D6G)              at (\xadj+3, -4+\yadj) {$D_{6g}$};
        \node[invisible node](D6G out R)    at (\xadj+3.75, -4+\yadj){$\cdot 2$};

        \path[->] (D6G in G) edge[std, ourGreen] node[edge descriptor] {\color{applegreen}$6_P$} (D6G);
        \path[->] (D6G in B) edge[std, ourBlue] node[edge descriptor] {\color{blue}$3_B$} (D6G);
        \path[->] (D6G in R) edge[std, ourRed] node[edge descriptor] {\color{red}$3_R$} (D6G);
        \path[->] (D6G) edge[std, ourRed] node[edge descriptor] {\color{red}$3_R$} (D6G out R);

        \node[invisible node](D6G out G)    at (\xadj+3, -4.75+\yadj){$\cdot 21$};

        \path[->] (D6G) edge[std, ourGreen] node[edge descriptor] {\color{applegreen}$6_G$} (D6G out G);

        \def\xadj{-0.5}
        \def\yadj{-2.5+0.75}
        
        \node[shorthand]      (short split 12)  at (1.5+\xadj, -2.75+\yadj){$S_{B12}$};
        \node[invisible node] (short red 12)    at (0.375+\xadj, -2.25+\yadj) {};
        \node[invisible node] (short green 12)  at (0.375+\xadj, -2.75+\yadj) {};
        \node[invisible node] (short purple 12) at (0.375+\xadj, -3.25+\yadj) {};
        \node[invisible node] (short out 12)  at (1.5+\xadj, -3.625+\yadj) {$\cdot 4$};
        \node[invisible node]      (many B12)  at (1.875+\xadj, -2.75+\yadj){$\cdot 2$};

        \path[->] (short red 12) edge[std, ourRed] node[edge descriptor] {$\color{red}3_R$} (short split 12);
        \path[->] (short green 12) edge[std, ourGreen] node[edge descriptor] {$\color{applegreen}6_G$} (short split 12);
        \path[->] (short purple 12) edge[std, ourPurple] node[edge descriptor] {$\color{blue-violet}6_P$} (short split 12);
        \path[->] (short split 12) edge[std, ourBlue] node[edge descriptor] {$\color{blue}12_B$} (short out 12);

        \def\xadj{1.5}
        \def\yadj{-4.5+0.75}
        
        \node[shorthand]      (short split 12)  at (\xadj+1.5, \yadj-0.75){$S_{G12}$};
        \node[invisible node] (short red 12)    at (\xadj+0.375, \yadj-0.25) {};
        \node[invisible node] (short green 12)  at (\xadj+0.375, \yadj-0.75) {};
        \node[invisible node] (short purple 12) at (\xadj+0.375, \yadj-1.25) {};
        \node[invisible node] (short out 12 1)  at (\xadj+1.5, \yadj-1.625) {$\cdot 2$};
        \node[invisible node]      (many G12)  at (1.875+\xadj, -0.75+\yadj){$\cdot 3$};

        \path[->] (short red 12) edge[std, ourRed] node[edge descriptor] {$\color{red}3_R$} (short split 12);
        \path[->] (short green 12) edge[std, ourBlue] node[edge descriptor] {$\color{blue}3_B$} (short split 12);
        \path[->] (short purple 12) edge[std, ourPurple] node[edge descriptor] {$\color{blue-violet}6_P$} (short split 12);
        \path[->] (short split 12) edge[std, ourGreen] node[edge descriptor] {$\color{applegreen}12_G$} (short out 12 1);

        \def\yadj{-1.375}
        \node[invisible node] (OR1 in top 1)    at (-1.75, -5.25+1+\yadj) {};
        \node[invisible node] (OR1 in top 2)    at (-1, -5.25+1+\yadj) {};
        \node[invisible node] (OR1 in top 3)    at (-0.25, -5.25+1+\yadj) {};
        \node[shorthand] (OR1)                  at (-1, -6+1+\yadj) {OR$_1$};
        \node[invisible node] (OR1 in red)      at (-1.75, -6+1+\yadj){};
        
        \path[->] (OR1 in top 1) edge[std, ourBlack] node[edge descriptor] {$3_{1R}$} (OR1);
        \path[->] (OR1 in top 2) edge[std, ourBlack] node[edge descriptor] {$3_{2R}$} (OR1);
        \path[->] (OR1 in top 3) edge[std, ourBlue] node[edge descriptor] {$\color{blue}3_B$} (OR1);
        \path[->] (OR1 in red) edge[std, ourRed] node[edge descriptor] {\color{red}$3_R$} (OR1);

        \node[invisible node] (OR2 in top 1)    at (0.25, -5.25+1+\yadj) {};
        \node[invisible node] (OR2 in top 2)    at (1, -5.25+1+\yadj) {};
        \node[invisible node] (OR2 in top 3)    at (1.75, -5.25+1+\yadj) {};
        \node[shorthand] (OR2)                  at (1, -6+1+\yadj) {OR$_2$};
        \node[invisible node] (OR2 in red)      at (0.25, -6+1+\yadj){};
        
        \path[->] (OR2 in top 1) edge[std, ourBlack] node[edge descriptor] {$3_{1B}$} (OR2);
        \path[->] (OR2 in top 2) edge[std, ourBlue] node[edge descriptor] {$\color{blue}3_B$} (OR2);
        \path[->] (OR2 in top 3) edge[std, ourBlue] node[edge descriptor] {$\color{blue}3_B$} (OR2);
        \path[->] (OR2 in red) edge[std, ourRed] node[edge descriptor] {\color{red}$3_R$} (OR2);

        \node[invisible node] (OR3 in top 1)    at (2.25, -5.25+1+\yadj) {};
        \node[invisible node] (OR3 in top 2)    at (3, -5.25+1+\yadj) {};
        \node[invisible node] (OR3 in top 3)    at (3.75, -5.25+1+\yadj) {};
        \node[shorthand] (OR3)                  at (3, -6+1+\yadj) {OR$_3$};
        \node[invisible node] (OR3 in red)      at (2.25, -6+1+\yadj){};
        
        \path[->] (OR3 in top 1) edge[std, ourBlack] node[edge descriptor] {$3_{1R}$} (OR3);
        \path[->] (OR3 in top 2) edge[std, ourBlack] node[edge descriptor] {$3_{2B}$} (OR3);
        \path[->] (OR3 in top 3) edge[std, ourBlack] node[edge descriptor] {$3_{3R}$} (OR3);
        \path[->] (OR3 in red) edge[std, ourRed] node[edge descriptor] {\color{red}$3_R$} (OR3);

        \node[invisible node] (OR4 in top 1)    at (4.25, -5.25+1+\yadj) {};
        \node[invisible node] (OR4 in top 2)    at (5, -5.25+1+\yadj) {};
        \node[invisible node] (OR4 in top 3)    at (5.75, -5.25+1+\yadj) {};
        \node[shorthand] (OR4)                  at (5, -6+1+\yadj) {OR$_4$};
        \node[invisible node] (OR4 in red)      at (4.25, -6+1+\yadj){};
        
        \path[->] (OR4 in top 1) edge[std, ourBlack] node[edge descriptor] {$3_{3B}$} (OR4);
        \path[->] (OR4 in top 2) edge[std, ourBlue] node[edge descriptor] {$\color{blue}3_B$} (OR4);
        \path[->] (OR4 in top 3) edge[std, ourBlue] node[edge descriptor] {$\color{blue}3_B$} (OR4);
        \path[->] (OR4 in red) edge[std, ourRed] node[edge descriptor] {\color{red}$3_R$} (OR4);

        \def\yadj{-4.125-4+0.875}
        \node[SWAP] (SWAP1)                 at (4, 0+\yadj) {\SWAP};

        \path[->] (OR3) edge[std, ourBlack] node[edge descriptor] {$12_{O3}$} (SWAP1.north west);
        \path[->] (OR4) edge[std, ourBlack] node[edge descriptor] {$12_{O4}$} (SWAP1.north east);

        \node[SWAP] (SWAP2)                 at (2, -0.5+\yadj) {\SWAP};

        \path[->] (OR2) edge[std, ourBlack] node[edge descriptor] {$12_{O2}$} (SWAP2.north west);
        \path[->] (SWAP1.south west) edge[std, ourBlack] (SWAP2.north east);

        \node[SWAP] (SWAP3)                 at (0, -1+\yadj) {\SWAP};

        \path[->] (OR1) edge[std, ourBlack] node[edge descriptor] {$12_{O1}$} (SWAP3.north west);
        \path[->] (SWAP2.south west) edge[std, ourBlack] (SWAP3.north east);

        \node[SWAP] (SWAP4)                 at (4, -1+\yadj) {\SWAP};

        \path[->] (SWAP2.south east) edge[std, ourBlack] (SWAP4.north west);
        \path[->] (SWAP1.south east) edge[std, ourBlack] (SWAP4.north east);

        \node[SWAP] (SWAP5)                 at (2, -1.5+\yadj) {\SWAP};

        \path[->] (SWAP3.south east) edge[std, ourBlack] (SWAP5.north west);
        \path[->] (SWAP4.south west) edge[std, ourBlack] (SWAP5.north east);

        \node[SWAP] (SWAP6)                 at (4, -2+\yadj) {\SWAP};
        
        \path[->] (SWAP5.south east) edge[std, ourBlack] (SWAP6.north west);
        \path[->] (SWAP4.south east) edge[std, ourBlack] (SWAP6.north east);

        \coordinate (in 1)        at (-1, -2.125+\yadj);
        \coordinate (in 2)        at (1, -2.125+\yadj);
        \coordinate (in 3)        at (3, -2.125+\yadj);
        \node[graph node] (in 4)        at (5, -3+\yadj){};

        \path[-] (SWAP3.south west) edge[std, ourBlack] (in 1);
        \path[-] (SWAP5.south west) edge[std, ourBlack] (in 2);
        \path[-] (SWAP6.south west) edge[std, ourBlack] (in 3);
        \path[-] (SWAP6.south east) edge[std, ourBlack] node[edge descriptor] {$12_{O4}$} (in 4);

        \def\yadjnew{5-3.5+0.375}
        \node[shorthand]      (tension)     at (2, -12.5+\yadjnew){$T_e$};
        \node[invisible node] (red)         at (0.75, -0.375-11.75+\yadjnew) {};
        \node[invisible node] (green)       at (0.75, -0.75-11.75+\yadjnew) {};
        \node[invisible node] (purple)      at (0.75, -1.125-11.75+\yadjnew) {};

        \path[->] (red) edge[std, ourRed] node[edge descriptor] {$\color{red}3_R$} (tension);
        \path[->] (green) edge[std, ourGreen] node[edge descriptor] {$\color{applegreen}6_G$} (tension);
        \path[->] (purple) edge[std, ourPurple] node[edge descriptor] {$\color{blue-violet}6_P$} (tension);
        \path[<-] (tension) edge[in = -90, out = 130, std, ourBlack] node[edge descriptor] {$12_{O1}$} (in 1);
        \path[<-] (tension) edge[in = -90, out = 105, std, ourBlack] node[edge descriptor] {$12_{O2}$} (in 2);
        \path[<-] (tension) edge[in = -90, out = 75, std, ourBlack] node[edge descriptor] {$12_{O3}$} (in 3);

        \node[graph node] (out 4)        at (4, -3+\yadj){};
        
        \path[-] (tension) edge[in = -90, out = 50, std, ourBlack] node[edge descriptor] {$12_{O4'}$} (out 4);
        
    \end{tikzpicture}
    }
    
    \caption{
        A DPS polycule which is schedulable iff there is some assignment that simultaneously satisfies at least 3 of $(x_1\lor x_2), (\overline{x_1}), (x_1 \lor \overline{x_2} \lor x_3),$ and $ (\overline{x_3})$ (possible in this case). 
        Note that re-unifying the bottom-most $12_{O4}$ edge with the $12_{O4'}$ edge will make a polycule which is schedulable iff all 4 clauses can be satisfied (and hence has no valid schedule here). Similarly, breaking the $12_{O3}$ edge in the middle and connecting the two new ends to pendant nodes will create a polycule which is schedulable iff 2 clauses can be satisfied.        
        Connections between layers are omitted for clarity but flow from top to bottom, starting with the variable layer, then three layers concerned with the duplication of variables, the \OR layer, the sorting network, and the tensioning layer.
    }

    \label{fig:worked-eg:Pdsz}
\end{figure*} %

\begin{definition}[$\mathcal P_{d\varphi k}$ polycules]
	\label{def:gadgets-assemble}
	The decision polycule $\mathcal P_{d\varphi k}=(P,R,f)$ is constructed in layers as follows:
	\begin{thmenumerate}{def:gadgets-assemble}
	\item \textbf{Variable layer:}\\
		The variable layer consists of a True Clock and a variable gadget with outputs $3_{iR}$ and $3_{iB}$ for each variable $x_i\in X$.
	\item \textbf{Duplication layer:}\\
		The duplication layer duplicates outputs of the variable layer: 
		$D_3$ duplicators create one $3_{iR}$ edge for each $x_i\in C$, and one $3_{iB}$ edge for each $\overline{x_i}\in C$.
		They also create as many $\color{red}3_R$ and $\color{blue}3_B$ edges as are needed, while $D_6$ duplicators do the same for $\color{applegreen}6_G$ and $\color{blue-violet}6_P$ edges.
		Unused edges are connected to pendent nodes.
	\item \textbf{Clause layer:}\\
		The clause layer consists of one \OR gadget for each clause $c_j\in C$. \OR gadgets have three inputs, each corresponding to a literal in $c_j$: $3_{iR}$ for $x_i$, $3_{iB}$ for $\overline{x_i}$. 
		For clauses with less than 3 literals, $\color{blue}3_B$ edges fill the \OR gadget's unused inputs.
	\item \textbf{Sorting network:}\\
		The next layer is a single gadget -- a sorting network.
		The $12_O$ output edges of the clause layer will each be the input to one of $m$ channels, each of which terminates with a $12_O$ output edge. This gadget also consumes $\color{blue}12_B$ and $\color{applegreen}12_G$ edges created by $S_{B12}$ and $S_{G12}$ gadgets.
	\item \textbf{Tension layer:}\\
		The tensioning layer attaches tension gadgets to the leftmost $k$ outputs of the sorting layer -- any other outputs from this layer and any spare inputs to the tensioning layer are connected to pendant nodes.
	\end{thmenumerate}
\end{definition}

\subsection{The True Clock \& Colour Slots}
\label{sec:TrueClock}

\begin{figure}[hbt]
    \centering
    \begin{tikzpicture}[scale=2]
            
        \node[graph node] (T) at (0.25, 0) {T};
        \node[graph node] (x1) at (1.25, 0) {$x_1$};
        \node[graph node] (x2) at (2.25, 0) {$x_2$};
        \node[invisible node] (right) at (3, 0) {};
        \node[invisible node] (low left 1) at (-0.25, -0.75) {};    
        \node[invisible node] (low left 2) at (0.25, -0.75) {};
        \node[invisible node] (low left 3) at (0.75, -0.75) {};
        \node[invisible node] (low mid 1) at (0.875, -0.75) {};
        \node[invisible node] (low mid 2) at (1.625, -0.75) {};
        \node[invisible node] (low right 1) at (1.875, -0.75) {};
        \node[invisible node] (low right 2) at (2.625, -0.75) {};

        \path[-] (T) edge[std, ourGreen] node[edge descriptor] {\color{applegreen}$6_G$} (x1);
        \path[-] (x1) edge[std, ourPurple] node[edge descriptor] {\color{blue-violet}$6_P$} (x2);    
        \path[->] (x2) edge[std, ourGreen] node[edge descriptor] {\color{applegreen}$6_G$} (right);
        \path[->] (T) edge[std, ourPurple] node[edge descriptor] {\color{blue-violet}$6_P$} (low left 1);
        \path[->] (T) edge[std, ourRed] node[edge descriptor] {\color{red}$3_R$} (low left 2);
        \path[->] (T) edge[std, ourBlue] node[edge descriptor] {\color{blue}$3_B$} (low left 3);
        \path[->] (x1) edge[std, ourBlack] node[edge descriptor] {$3_{1R}$} (low mid 1);
        \path[->] (x1) edge[std, ourBlack] node[edge descriptor] {$3_{1B}$} (low mid 2);
        \path[->] (x2) edge[std, ourBlack] node[edge descriptor] {$3_{2r}$} (low right 1);
        \path[->] (x2) edge[std, ourBlack] node[edge descriptor] {$3_{2B}$} (low right 2);

        \node[shorthand] (T short) at (0.25, -1.25) {T};
        \node[invisible node] (short left 1) at (-0.25, -2) {};
        \node[invisible node] (short left 3) at (0.25, -2) {}; 
        \node[invisible node] (short left 4) at (0.75, -2) {};
        
        \path[->] (T short) edge[std, ourPurple] node[edge descriptor] {\color{blue-violet}$6_P$} (short left 1);
        \path[->] (T short) edge[std, ourRed] node[edge descriptor] {\color{red}$3_R$} (short left 3);
        \path[->] (T short) edge[std, ourBlue] node[edge descriptor] {\color{blue}$3_B$} (short left 4);
        
        \node[shorthand] (xi short) at (1.25,-1.25) {$x_1$};
        \node[invisible node] (short right low 1) at (0.875, -2) {};
        \node[invisible node] (short right low 2) at (1.625, -2) {};

        \path[-]  (T short) edge[std, ourGreen] node[edge descriptor] {\color{applegreen}$6_G$} (xi short);

        \node[shorthand] (xi2 short) at (2.25,-1.25) {$x_2$};
        \node[invisible node] (short2 right right) at (3, -1.25) {};
        \node[invisible node] (short2 right low 1) at (1.875, -2) {};
        \node[invisible node] (short2 right low 2) at (2.625, -2) {};

        \path[-] (xi short) edge[std, ourPurple] node[edge descriptor] {\color{blue-violet}$6_P$} (xi2 short);
        \path[->] (xi short) edge[std, ourBlack] node[edge descriptor] {$3_{1R}$} (short right low 1);
        \path[->] (xi short) edge[std, ourBlack] node[edge descriptor] {$3_{1B}$} (short right low 2);
        
        \path[->] (xi2 short) edge[std, ourGreen] node[edge descriptor] {\color{applegreen}$6_G$} (short2 right right);
        \path[->] (xi2 short) edge[std, ourBlack] node[edge descriptor] {$3_{2r}$} (short2 right low 1);
        \path[->] (xi2 short) edge[std, ourBlack] node[edge descriptor] {$3_{2B}$} (short2 right low 2);

    \end{tikzpicture}
    \caption{
    Gadgets for the True Clock and for sample variables $x_1$ and $x_2$ (top, left to right), with shorthand versions shown below. Gadgets are shown connected as they would be in a sample variable layer, and their colours and schedules are discussed in \wref{sec:TrueClock}. Further variables can be added to the right, and must be added in pairs to conserve the $\color{applegreen}6_G$ output edge (though the final variable may be connected to pendant nodes if not otherwise needed).
    }
    \label{fig:variables}
\end{figure}
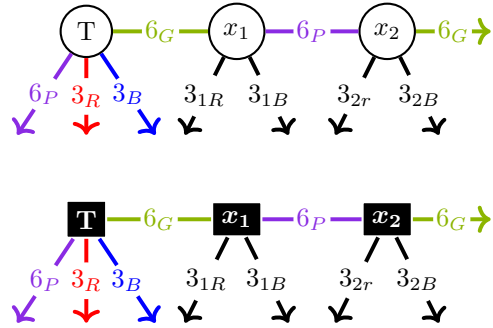 %

\wref{fig:variables} introduces gadgets representing sample variables $x_1$ and $x_2$, as well as a special variable: the True Clock $T$, which acts as a drumbeat for the polycule as a whole. 
Variables, including $T$, have four relationships: $[3, 3, 6, 6]$, so their local schedules must all have the following form: $[\splitaftercomma{3_a, 3_b, 6_a, 3_a, 3_b, 6_b}]$. 
As schedules are cyclic, we can choose to start this schedule with the $3_a$ edge of $T$ without loss of generality. We then assign names to these edges as dictated by the True Clock.

\begin{definition}[Slots]
    \label{def:slots}
	We call days $t\in\N_0$ with $t\equiv 0 \pmod 3$ the \textcolor{red}{red} slots, days $t\equiv 1\pmod 3$  \textcolor{blue}{blue} slots, days $t\equiv 2\pmod 6$ \textcolor{applegreen}{green} slots, and days $t\equiv 5 \pmod 6$ \textcolor{blue-violet}{purple} slots.
\end{definition}

The local schedule of the True Clock is therefore given by
[\color{red}$3_R$\color{black},
\color{blue}$3_B$\color{black},
\color{applegreen}$6_G$\color{black},
\color{red}$3_R$\color{black},
\color{blue}$3_B$\color{black},
\color{blue-violet}$6_P$\color{black}].
As $\color{red}3_R$ is scheduled on day 0, it will always be assigned in red slots, 
with $\color{blue}3_B$, $\color{applegreen}6_G$, and $\color{blue-violet}6_P$ edges also restricted to slots of their respective colours.
We will sometimes represent this by underlining elements or gaps in a schedule, \eg, 
[\color{red}$\SLOT$\color{black},
\color{blue}$\SLOT$\color{black},
\color{applegreen}$\SLOT$\color{black},
\color{red}$\SLOT$\color{black},
\color{blue}$\SLOT$\color{black},
\color{blue-violet}$\SLOT$\color{black}], 
or by referring to edges as being red, blue, green, or purple.

All gadgets introduced below will be constructed such that the lengths of their schedules are integer multiples of 6. 
In the final polycule $\mathcal P_{d\varphi k}$ they will be connected (usually through intermediaries) to the True Clock, such that their edges must stick to certain slots.
\newcommand\slotgood{slot-respecting\xspace}%
To keep correctness proofs of individual gadgets readable, we call a schedule $S$
that schedules all coloured edges in slots of the given colour \emph{\slotgood}.

\paragraph{Drawing Conventions}
\label{para:in/out}
Gadgets are connected by input and output edges, represented by incoming and outgoing arrows respectively -- each one being half of a relationship between two people from different gadgets. 
In addition to their frequencies, input and output edges share restrictions on their permissible schedules, carrying them from one gadget to another as discussed in the proofs associated with each gadget.

In shorthand gadgets, vertical incoming edges are the primary \emph{input} to a gadget, encoding the value of some variable or logical function; horizontal inputs contain edges of fixed colour, which we will refer to as \emph{constants}. 
Similarly, vertical outgoing edges represent primary \emph{outputs} that encode the result of the gadget, while horizontal outgoing edges represent incidentally created constants which may either be used by other gadgets or connected to pendent vertices.

Some incoming and outgoing edges will have end labels of the form ``$\cdot i$'', indicating $i$ connections of the given type, each between different people.

\subsection{Variables}
\label{sec:variables}

\wref{fig:variables} also introduces the gadget for a sample variable, $x_1$, which again has four relationships:
[$3_{1R}, 3_{1B},$
$\color{applegreen}6_G$,
$\color{blue-violet}6_P$].
The key property of variable gadgets is summarized in the following lemma.

\begin{lemma}[Variable gadget schedules]
\label{lem:variable-6-colour}
    Any valid global schedule must yield a local schedule for the variable gadget $x_i$ of the form  
    $[3_a, 3_b, {\color{applegreen}6_G}, 
    3_a, 3_b, {\color{blue-violet}6_P}]$.
\end{lemma}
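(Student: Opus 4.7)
The plan is a density-saturation argument combined with propagating the True Clock's colour conventions. First, I would observe that the four edges incident to $x_i$ have total local density $\tfrac13+\tfrac13+\tfrac16+\tfrac16 = 1$: in every 6-day window, each of the two frequency-3 edges must occur at least twice and each of the two frequency-6 edges at least once, for a total of six required incidences at $x_i$. Since the matching constraint permits at most one incident edge per day, every day of the window must schedule exactly one incident edge of $x_i$, and each edge attains exactly its minimum count.

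Next, I would apply the same saturation argument to the True Clock $T$, whose frequency multiset is also $[3,3,6,6]$. Up to rotation of the time axis, $T$'s local schedule is then forced to be $[{\color{red}3_R}, {\color{blue}3_B}, {\color{applegreen}6_G}, {\color{red}3_R}, {\color{blue}3_B}, {\color{blue-violet}6_P}]$, which fixes the slot colours of \wref{def:slots}. The $\color{applegreen}6_G$ and $\color{blue-violet}6_P$ edges at $x_i$ are shared, possibly transitively along the chain of variable gadgets in \wref{fig:variables}, with $T$. Applying saturation inductively along this chain, and using that in a saturated schedule any frequency-6 edge appearing at its minimum rate is confined to a single residue class modulo $6$, these two edges of $x_i$ must land in green slots ($t\equiv 2 \pmod 6$) and purple slots ($t\equiv 5 \pmod 6$) respectively.

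Finally, the four remaining days of each period, $t \equiv 0,1,3,4 \pmod 6$, must accommodate two occurrences of $3_{iR}$ and two of $3_{iB}$. Two occurrences of a frequency-3 edge inside any 6-day window, subject to the consecutive-gap bound of $3$, force exactly a 3-day separation. Hence one of $\{3_{iR}, 3_{iB}\}$ fills $\{t \equiv 0,3 \pmod 6\}$ and the other fills $\{t \equiv 1,4 \pmod 6\}$, which after labelling $3_a,3_b$ accordingly yields the pattern $[3_a,3_b,{\color{applegreen}6_G},3_a,3_b,{\color{blue-violet}6_P}]$.

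The main obstacle is the colour propagation step: a priori $x_i$'s $\color{applegreen}6_G$ edge could sit in any residue class modulo $6$, so one must carefully invoke the chain of shared frequency-6 edges back to $T$ together with the saturation property to lock it into the green slot (and symmetrically for $\color{blue-violet}6_P$). Once this colour alignment is in place, the rest is routine combinatorial bookkeeping on the four leftover slots.
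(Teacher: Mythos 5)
Your proposal is correct and follows essentially the same route as the paper's proof: both rely on local density $D=1$ to saturate the schedule (forcing each edge to appear at exactly its minimum rate) and then propagate slot colours from the True Clock along the shared chain of frequency-6 edges. The only cosmetic difference is the order of the bookkeeping — you fix the $6_G$/$6_P$ residue classes first and then place the frequency-3 edges, whereas the paper first observes the $[3_a,3_b,\SLOT]$ repeating pattern and then fills in the 6-edges — but the underlying argument is identical.
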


\begin{proof}
    Note that $x_i$ has local density $D=1$, so $3_{iR}$ and $3_{iB}$ must be scheduled exactly once in each 3-day period, forcing every 3 days to be of the form $[3_a, 3_b, \SLOT]$, and every 6-day schedule to be of the form $[3_a, 3_b, \SLOT, 3_a, 3_b, \SLOT]$; this leaves two remaining slots, which must contain $\color{applegreen}6_G$ and $\color{blue-violet}6_P$. 
    
    The $\color{applegreen}6_G$ edge of the first variable, $x_1$, is shared with $T$ such that it must be green, so schedules for $x_1$
    are of the form
    $[3_a, 3_b, \color{applegreen}6_G\color{black}, 3_a, 3_b, \color{blue-violet}\SLOT\color{black}]$, which must be completed as $[3_a, 3_b, \color{applegreen}6_G\color{black}, 3_a, 3_b, \color{blue-violet}6_P\color{black}]$.  
    This proceeds for $x_2$, whose $\color{blue-violet}6_P$ edge is shared with $x_1$ such that it must be purple, 
    forcing the partial schedule $[3_a, 3_b, \color{applegreen}\SLOT\color{black}, 3_a, $
    $3_b,$
    $\color{blue-violet}6_P\color{black}]$ which likewise must be completed $[3_a, 3_b, \color{applegreen}6_G\color{black}, 3_a, 3_b, \color{blue-violet}6_P\color{black}]$. 
    
    Further pairs of variables are each connected to the $\color{applegreen}6_G$ edge returned by the previous pair, so their schedules must be of the same form.
    Thus, by induction, schedules for each variable gadget $x_i$ must be of the form $[3_a, 3_b, \color{applegreen}6_G\color{black}, 3_a,$
    $ 3_b, \color{blue-violet}6_P\color{black}]$.
\end{proof}

According to \wref{lem:variable-6-colour}, the incident $\color{applegreen}6_G$ edge is used to restrict the valid schedules for $x_1$, leaving two possibilities: 
$[
    \mathunderline{red}{3_{1R}},
    \mathunderline{blue}{3_{1B}},
    \color{applegreen}6_G\color{black},
    \mathunderline{red}{3_{1R}},
    \mathunderline{blue}{3_{1B}},
    \color{blue-violet}6_P\color{black}
]$ 
and
$[
    \mathunderline{red}{3_{1B}},
    \mathunderline{blue}{3_{1R}},
    \color{applegreen}6_G\color{black},
    \mathunderline{red}{3_{1B}},
    \mathunderline{blue}{3_{1R}},
    \color{blue-violet}6_P\color{black}
]$. 
The former schedule, where $3_{1R}$ is scheduled in red slots, corresponds to a variable assignment where $x_1$ is True, whereas the the second schedule corresponds to $x_1$ being assigned False.

This technique of connecting $\color{red}3_R$, $ \color{blue}3_B$, $\color{applegreen}6_G$, or $\color{blue-violet}6_P$ edges to people in a gadget to limit their valid local schedules and force relationships between edges, slots, and particular meanings will be used extensively in what follows. 

Note that $3_{iB}$ has the opposite value to $3_{iR}$, so using $3_{iB}$ edges in the polycule corresponds to the negated literal $\overline{x_i}$, just as $3_{iR}$ edges correspond to the literal $x_i$.

\subsection{Duplication of Variables and Constants}
\label{sec:duplication}

Variables may appear in multiple clauses, while the constant $\color{red}3_R$, $\color{blue}3_B$, $\color{applegreen}6_G$, and $\color{blue-violet}6_P$ edges are used in multiple gadgets, engendering a need for the duplication of variables and constants. 

\subsubsection{3-Duplicators}
\label{sec:D3}

A gadget for duplicating edges with period 3 is shown in \wref{fig:3-splitter} and proven to accurately reproduce 
its input by \wref{lem:D3}.

\begin{figure}[htb]
    \centering
    \begin{tikzpicture}[scale=2]
        \node[invisible node] (top) at (0, 0.75) {};
        
        \node[invisible node] (top in) at (-0.75, 0) {};
        \node[invisible node] (top out) at (0.75, 0) {};
        \node[graph node] (D1) at (0, 0) {a};

        \path[->] (top) edge[std, ourBlack] node[edge descriptor] {$3_a$} (D1);
        \path[->] (top in) edge[std, ourGreen] node[edge descriptor] {\color{applegreen}$6_G$} (D1);
        \path[->] (D1) edge[std, ourPurple] node[edge descriptor] {\color{blue-violet}$6_P$} (top out);

        \node[invisible node] (low in) at (-1.5, -0.75) {};
        \node[graph node] (D2) at (-0.75, -0.75) {b};
        \node[graph node] (D3) at (0, -0.75) {c};
        \node[graph node] (D4) at (0.75, -0.75) {d};
        \node[invisible node] (low out) at (1.5, -0.75) {};

        \path[-] (D1) edge[std, ourBlack] node[edge descriptor] {$9_{b1}$} (D2);
        \path[-] (D1) edge[std, ourBlack] node[edge descriptor] {$9_{b2}$} (D3);
        \path[-] (D1) edge[std, ourBlack] node[edge descriptor] {$9_{b3}$} (D4);

        \path[->] (low in) edge[std, ourPurple] node[edge descriptor] {\color{blue-violet}$6_P$} (D2);
        \path[-] (D2) edge[std, ourGreen] node[edge descriptor] {\color{applegreen}$6_G$} (D3);
        \path[-] (D3) edge[std, ourPurple] node[edge descriptor] {\color{blue-violet}$6_P$} (D4);
        \path[->] (D4) edge[std, ourGreen] node[edge descriptor] {\color{applegreen}$6_G$} (low out);

        \node[invisible node] (output 1) at (-1, -1.5) {};
        \node[invisible node] (repeat 1) at (-0.5, -1.5) {$\cdot2$};
        \node[invisible node] (output 2) at (-0.25, -1.5) {};
        \node[invisible node] (repeat 2) at (0.25, -1.5) {$\cdot2$};        
        \node[invisible node] (output 3) at (0.5, -1.5) {};
        \node[invisible node] (repeat 3) at (1, -1.5) {$\cdot2$};

        \path[->] (D2) edge[std, ourBlack] node[edge descriptor] {$3_a$} (output 1);
        \path[->] (D2) edge[std, ourBlack] node[edge descriptor] {$9_b$} (repeat 1);
        \path[->] (D3) edge[std, ourBlack] node[edge descriptor] {$3_a$} (output 2);
        \path[->] (D3) edge[std, ourBlack] node[edge descriptor] {$9_b$} (repeat 2);
        \path[->] (D4) edge[std, ourBlack] node[edge descriptor] {$3_a$} (output 3);
        \path[->] (D4) edge[std, ourBlack] node[edge descriptor] {$9_b$} (repeat 3);

        \def\yadj{2.125}

        \node[shorthand] (D3 short)             at (2.5, -2.5+\yadj) {$D_{3}$};
        \node[invisible node] (short input a)   at (2.5, -1.75+\yadj) {};
        \node[invisible node] (short input G)   at (1.75, -2.25+\yadj) {};    
        \node[invisible node] (short input P)   at (1.75, -2.75+\yadj) {};
        \node[invisible node] (short output a)  at (2.5, -3.25+\yadj) {$\cdot 3$};
        \node[invisible node] (short output G)  at (3.25, -2.25+\yadj) {};    
        \node[invisible node] (short output P)  at (3.25, -2.75+\yadj) {};
        
        \path[->] (short input a) edge[std, ourBlack] node[edge descriptor] {$3_a$} (D3 short);
        \path[->] (short input G) edge[std, ourGreen] node[edge descriptor] {\color{applegreen}$6_G$} (D3 short);
        \path[->] (short input P) edge[std, ourPurple] node[edge descriptor] {\color{blue-violet}$6_P$} (D3 short);
        \path[->] (D3 short) edge[std, ourBlack] node[edge descriptor] {$3_a$} (short output a);
        \path[->] (D3 short) edge[std, ourGreen] node[edge descriptor] {\color{applegreen}$6_G$} (short output G);
        \path[->] (D3 short) edge[std, ourPurple] node[edge descriptor] {\color{blue-violet}$6_P$}(short output P);
        
    \end{tikzpicture}
\caption{A gadget for duplicating input edges with frequency 3, with a shorthand version below. Note that the input can be duplicated indefinitely many times by repeating the second layer using the $9_{b}$ edges from the previous layer and a $\color{blue-violet}6_P$ or $\color{applegreen}6_P$ edge from two layers above.}
\label{fig:3-splitter}
\end{figure}
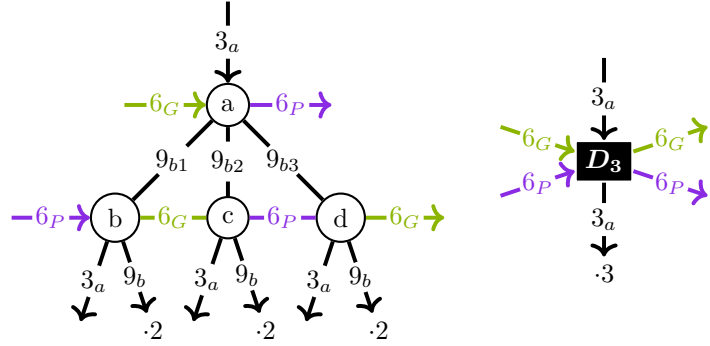 %

\begin{lemma}[3-Duplicator gadget schedules]
    \label{lem:D3}
    Any \slotgood schedule must yield a local schedule for each node in any 3-duplicator gadget $D_3$ of the same form: either
    \[[
        \mathunderline{red}{3_a},
        \mathunderline{blue}{9_{b}},
        \color{applegreen}6_G\color{black},
        \mathunderline{red}{3_a},
        \mathunderline{blue}{9_{b'}},
        \color{blue-violet}6_P\color{black},
        \mathunderline{red}{3_a},
        \mathunderline{blue}{9_{b''}},
        \color{applegreen}6_G\color{black},
        \mathunderline{red}{3_a},
        \mathunderline{blue}{9_{b}},
        \color{blue-violet}6_P\color{black},
        \mathunderline{red}{3_a},
        \mathunderline{blue}{9_{b'}},
        \color{applegreen}6_G\color{black},
        \mathunderline{red}{3_a},
        \mathunderline{blue}{9_{b''}},
        \color{blue-violet}6_P\color{black}
    ]\quad\text{or}
    \] 
    \[[
        \mathunderline{red}{9_{b}},
        \mathunderline{blue}{3_a},
        \color{applegreen}6_G\color{black},
        \mathunderline{red}{9_{b'}},
        \mathunderline{blue}{3_a},
        \color{blue-violet}6_P\color{black},
        \mathunderline{red}{9_{b''}},
        \mathunderline{blue}{3_a},
        \color{applegreen}6_G\color{black},
        \mathunderline{red}{9_{b}},
        \mathunderline{blue}{3_a},
        \color{blue-violet}6_P\color{black},
        \mathunderline{red}{9_{b'}},
        \mathunderline{blue}{3_a},
        \color{applegreen}6_G\color{black},
        \mathunderline{red}{9_{b''}},
        \mathunderline{blue}{3_a},
        \color{blue-violet}6_P\color{black}
    ].\]
\end{lemma}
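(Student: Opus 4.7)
The plan is to argue first that every node of $D_3$ has edge-density exactly $1$ over an $18$-day window, and then to combine this with the slot-respecting assumption and the incident colour-constant edges to force the local schedules into one of the two listed forms, up to relabeling of the three frequency-$9$ edges.

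First I would verify that at node $a$ the reciprocals of the incident frequencies sum to $\tfrac13 + \tfrac16 + \tfrac16 + 3 \cdot \tfrac19 = 1$, and similarly at nodes $b$, $c$, $d$ (each carries one $3_a$, one $\color{applegreen}6_G$, one $\color{blue-violet}6_P$, and three frequency-$9$ edges). Since density equals $1$, in any $18$-day period ($18 = \operatorname{lcm}(3,6,9)$) every slot at each node must be used, and each edge $e$ of frequency $f_e$ appears exactly $18/f_e$ times. Combined with the frequency lower bounds, this means the $3_a$ edge must form an arithmetic progression of step $3$, each frequency-$6$ edge an AP of step $6$, and each frequency-$9$ edge a pair $\{p,p+9\}\pmod{18}$.

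Next I would use the slot-respecting assumption: the three green slots $\{2,8,14\}$ are taken at each node by its $\color{applegreen}6_G$ edge, and the three purple slots $\{5,11,17\}$ by its $\color{blue-violet}6_P$ edge. The remaining twelve slots split into six red slots $\{0,3,6,9,12,15\}$ and six blue slots $\{1,4,7,10,13,16\}$, and must be filled by $3_a$ (six occurrences) together with the three frequency-$9$ edges (two occurrences each). A step-$3$ AP of six points in $\{0,\ldots,17\}$ is either precisely the red set or precisely the blue set, yielding the two cases of the lemma. The three frequency-$9$ edges then fill the opposite colour via the unique partition into pairs nine apart, namely $\{1,10\}, \{4,13\}, \{7,16\}$ (or its red analogue), and this partition forces the strict cyclic interleaving $b, b', b'', b, b', b''$ as claimed.

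Finally, I would observe that the same density accounting applies verbatim at nodes $b$, $c$, $d$, and that the shared $9_{b1}, 9_{b2}, 9_{b3}$ edges tie the four local schedules together: whichever colour each $9_{bi}$ inherits from $a$'s schedule it also occupies at its other endpoint, which in turn pushes that endpoint's $3_a$ edge into the opposite colour — so all four nodes of $D_3$ end up in the same case. The subtlest step is the combinatorial pigeonhole argument identifying the unique partition of the six non-$3_a$ slots into three pairs exactly nine apart; once that is in place the cyclic interleaving, and hence the statement of the lemma, follows.
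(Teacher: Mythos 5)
Your proof is correct and takes essentially the same route as the paper: local density $1$ plus the slot-respecting assumption pins each task's occurrences to an exact spacing, forcing $3_a$ into all-red or all-blue slots and the three frequency-$9$ edges into the complementary colour, and the shared frequency-$9$ edges then propagate the choice of case from node $a$ to the other three nodes. Your phrasing via step-$3$ arithmetic progressions and the unique partition of the six non-$3_a$ slots into pairs nine apart is a slightly more explicit rendering of the paper's colour-mixing contradiction, but the underlying combinatorics is identical.
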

\begin{proof}
    Each node in $D_3$ has tasks 
    [$3_a,$
    $\color{blue-violet}6_P$, $\color{applegreen}6_G$,
    $9_{b1}, 9_{b2}, 9_{b3}$]
    and has local
    density $D=1$, so each task with frequency $f$ must appear exactly once every $f$ days. 
    Further, in any \slotgood schedule, tasks $\color{blue-violet}6_P$ and $\color{applegreen}6_G$ are scheduled in purple and green slots respectively, forcing partial schedules of the form
        [$\color{red}\SLOT$,
        $\color{blue}\SLOT$,
        $\color{applegreen}6_G$,
        $\color{red}\SLOT$,
        $\color{blue}\SLOT$,
        $\color{blue-violet}6_P$].

    Considering node $a$, note that if incident edge $3_a$ is scheduled on a combination of red and blue days then either it will appear more than once in some 3-day period or its constraint will be violated.
    This demonstrates that schedules must be of the form
    \[[\mathunderline{red}{3_a},
        \color{blue}\SLOT\color{black},
        \color{applegreen}6_G\color{black},
        \mathunderline{red}{3_a},
        \color{blue}\SLOT\color{black},
        \color{blue-violet}6_P\color{black},
        \mathunderline{red}{3_a},
        \color{blue}\SLOT\color{black},
        \color{applegreen}6_G\color{black},
        \mathunderline{red}{3_a},
        \color{blue}\SLOT\color{black},
        \color{blue-violet}6_P\color{black},
        \mathunderline{red}{3_a},
        \color{blue}\SLOT\color{black},
        \color{applegreen}6_G\color{black},
        \mathunderline{red}{3_a},
        \color{blue}\SLOT\color{black},
        \color{blue-violet}6_P\color{black}] 
    \quad\text{or}
    \]
    \[
       [\color{red}\SLOT\color{black},
        \mathunderline{blue}{3_a},
        \color{applegreen}6_G\color{black},
        \color{red}\SLOT\color{black},
        \mathunderline{blue}{3_a},
        \color{blue-violet}6_P\color{black},
        \color{red}\SLOT\color{black},
        \mathunderline{blue}{3_a},
        \color{applegreen}6_G\color{black},
        \color{red}\SLOT\color{black},
        \mathunderline{blue}{3_a},
        \color{blue-violet}6_P\color{black},
        \color{red}\SLOT\color{black},
        \mathunderline{blue}{3_a},
        \color{applegreen}6_G\color{black},
        \color{red}\SLOT\color{black},
        \mathunderline{blue}{3_a},
        \color{blue-violet}6_P\color{black}],
    \] 
    depending on the colour of the incident $3_a$ edge.
    In either case, $9_{b1}$, $9_{b2}$ and $9_{b3}$ must occupy the remaining slots, which match the schedules shown in the lemma for some mapping of $9_{b1}$, $9_{b2}$ and $9_{b3}$ to $9_{b}$, $9_{b'}$ and $9_{b''}$.

    Now consider an arbitrary node $p\neq a$, with an incident $9_{b}$ node. If the $3_a$ edge incident to $a$ is red, the schedule for $n$ must be of the form
    \[[\color{red}\SLOT\color{black},
        \mathunderline{blue}{9_{b}},
        \color{applegreen}6_G\color{black},
        \color{red}\SLOT\color{black},
        \color{blue}\SLOT\color{black},
        \color{blue-violet}6_P\color{black},
        \color{red}\SLOT\color{black},
        \color{blue}\SLOT\color{black},
        \color{applegreen}6_G\color{black},
        \color{red}\SLOT\color{black},
        \mathunderline{blue}{9_{b}},
        \color{blue-violet}6_P\color{black},
        \color{red}\SLOT\color{black},
        \color{blue}\SLOT\color{black},
        \color{applegreen}6_G\color{black},
        \color{red}\SLOT\color{black},
        \color{blue}\SLOT\color{black},
        \color{blue-violet}6_P\color{black}],
    \]
    with $9_{b'}$, $9_{b''}$, and $3_a$ filling the remaining slots. If either $9_{b'}$ or $9_{b''}$ are ever scheduled in a red slot, the constraint of $3_a$ will be violated, therefore either partial schedule leads to the full schedule 
    \[[
        \mathunderline{red}{3_a},
        \mathunderline{blue}{9_{b}},
        \color{applegreen}6_G\color{black},
        \mathunderline{red}{3_a},
        \mathunderline{blue}{9_{b'}},
        \color{blue-violet}6_P\color{black},
        \mathunderline{red}{3_a},
        \mathunderline{blue}{9_{b''}},
        \color{applegreen}6_G\color{black},
        \mathunderline{red}{3_a},
        \mathunderline{blue}{9_{b}},
        \color{blue-violet}6_P\color{black}
        \mathunderline{red}{3_a},
        \mathunderline{blue}{9_{b'}},
        \color{applegreen}6_G\color{black},
        \mathunderline{red}{3_a},
        \mathunderline{blue}{9_{b''}},
        \color{blue-violet}6_P\color{black}
    ]\] 
    matching the schedule of $a$, and of the lemma.
    If the $3_a$ edge incident to $a$ is blue, the same logic applies, with all $3_a$ edges also being blue and the $9_b$ nodes being red.
\end{proof}

\subsubsection{6-Duplicators}
\label{sec:D6}
\wref{fig:6-splitter} and \wref{lem:D6} introduce a gadget which duplicates incident $\color{applegreen}6_G$ or $\color{blue-violet}6_P$ edges.

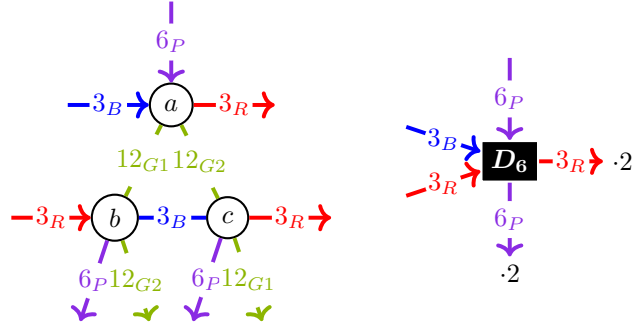
\begin{figure}[hbt]
    \centering
    \begin{tikzpicture}[scale=2]

        \node[invisible node] (top) at (0, 0.75) {};
        \node[invisible node] (left1) at (-0.75, 0) {};
        \node[graph node] (D1) at (0, 0) {$a$};
        \node[invisible node] (right1) at (0.75, 0) {};
        \node[invisible node] (left2) at (-1.125, -0.75) {};
        \node[graph node] (D2) at (-0.375, -0.75) {$b$};
        \node[graph node] (D3) at (0.375, -0.75) {$c$};
        \node[invisible node] (right2) at (1.125, -0.75) {};
        \node[invisible node] (low left 1) at (-0.625, -1.5) {};
        \node[invisible node] (low left 2) at (-0.125, -1.5) {};
        \node[invisible node] (low mid 1) at (0.125, -1.5) {};
        \node[invisible node] (low mid 2) at (0.625, -1.5) {};
        
        \path[->] (top) edge[std, ourPurple] node[edge descriptor] {\color{blue-violet}$6_P$} (D1);
        \path[->] (left1) edge[std, ourBlue] node[edge descriptor] {\color{blue}$3_B$} (D1);
        \path[->] (D1) edge[std, ourRed] node[edge descriptor] {\color{red}$3_R$} (right1);
        \path[-] (D1) edge[std, ourGreen] node[edge descriptor] {\color{applegreen}$12_{G1}$} (D2);
        \path[-] (D1) edge[std, ourGreen] node[edge descriptor] {\color{applegreen}$12_{G2}$} (D3);
        \path[->] (left2) edge[std, ourRed] node[edge descriptor] {\color{red}$3_R$} (D2);
        \path[-] (D2) edge[std, ourBlue] node[edge descriptor] {\color{blue}$3_B$} (D3);
        \path[->] (D3) edge[std, ourRed] node[edge descriptor] {\color{red}$3_R$} (right2);
        \path[->] (D2) edge[std, ourPurple] node[edge descriptor] {\color{blue-violet}$6_P$} (low left 1);
        \path[->] (D2) edge[std, ourGreen] node[edge descriptor] {\color{applegreen}$12_{G2}$} (low left 2);
        \path[->] (D3) edge[std, ourPurple] node[edge descriptor] {\color{blue-violet}$6_P$} (low mid 1);
        \path[->] (D3) edge[std, ourGreen] node[edge descriptor] {\color{applegreen}$12_{G1}$} (low mid 2);        
        \node[invisible node](short top)        at (2.25, 0.75-0.375){};
        \node[shorthand] (short D)              at (2.25, 0-0.375) {$D_6$};
        \node[invisible node](input top)        at (1.5, 0.25-0.375){};
        \node[invisible node](input bottom)     at (1.5, -0.25-0.375){};
        \node[invisible node](unused constant)  at (3, 0-0.375){$\cdot 2$};
        \node[invisible node](output)           at (2.25, -0.75-0.375){$\cdot 2$};

        \path[->] (short top) edge[std, ourPurple] node[edge descriptor] {\color{blue-violet}$6_P$} (short D);
        \path[->] (input top) edge[std, ourBlue] node[edge descriptor] {\color{blue}$3_B$} (short D);
        \path[->] (input bottom) edge[std, ourRed] node[edge descriptor] {\color{red}$3_R$} (short D);
        \path[->] (short D) edge[std, ourRed] node[edge descriptor] {\color{red}$3_R$} (unused constant);
        \path[->] (short D) edge[std, ourPurple] node[edge descriptor] {\color{blue-violet}$6_P$} (output);
        
    \end{tikzpicture}
\caption{A gadget for duplicating \color{blue-violet}$6_P$ \color{black} input edges, with a shorthand version below. Note that if additional \color{blue-violet}$6_P$ \color{black}edges are needed, additional layers can be added, re-using constants from above. Also note that \color{applegreen}$6_G$ \color{black} edges can be duplicated with the same gadget simply by replacing the topmost \color{blue-violet}$6_P$ \color{black} edge with a \color{applegreen}$6_G$ \color{black} edge, forcing the $12_1$ and $12_2$ edges to be purple.}
\label{fig:6-splitter}
\end{figure} %

\begin{lemma}[6-Duplicator gadget schedules]
    \label{lem:D6}
    Any \slotgood schedule must yield a local schedule for each node in any 6-duplicator gadget $D_6$ of the form
    \[[\color{red}3_R\color{black}, 
        \color{blue}3_B\color{black}, 
        \color{applegreen}12_{G1}\color{black}, 
        \color{red}3_R\color{black}, 
        \color{blue}3_B\color{black}, 
        \color{blue-violet}6_P\color{black},  
        \color{red}3_R\color{black}, 
        \color{blue}3_B\color{black}, 
        \color{applegreen}12_{G2}\color{black}, 
        \color{red}3_R\color{black},
        \color{blue}3_B\color{black}, \color{blue-violet}6_P\color{black}].
    \]
\end{lemma}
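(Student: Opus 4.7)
The plan is to exploit tight density at every node of $D_6$. A quick tally shows that each of $a$, $b$, $c$ has incident edges with frequencies $3,3,6,12,12$, giving local density $\frac13+\frac13+\frac16+\frac1{12}+\frac1{12}=1$. Density-$1$ is no-slack: in any $12$-day window each frequency-$f$ edge appears exactly $12/f$ times and consecutive occurrences are exactly $f$ days apart. So once a periodic length-12 template is established at one node, it must repeat exactly, with no wiggle room between windows.

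Second, I would pin down the coloured edges using the slot-respecting assumption. Within a period of length $12$ there are $4$ red slots ($\equiv 0\pmod 3$), $4$ blue slots ($\equiv 1\pmod 3$), $2$ purple slots ($\equiv 5\pmod 6$), and $2$ green slots ($\equiv 2\pmod 6$). The $\color{red}3_R$ edge at each node must occupy all $4$ red slots, $\color{blue}3_B$ all $4$ blue, and $\color{blue-violet}6_P$ both purple, consuming $10$ of the $12$ slots. The two remaining slots are the two green ones, which must therefore accommodate the two incident $12_G$ edges at that node. At node $a$ both $12_G$ edges are incident to $a$, so by the matching constraint they cannot share a day; hence exactly one goes in slot $\equiv 2\pmod{12}$ and the other in slot $\equiv 8\pmod{12}$, and by periodicity this placement repeats with period $12$.

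Third, I would propagate the choice at $a$ to the neighbours via the shared-edge constraint. Up to the harmless relabelling symmetry between $12_{G1}$ and $12_{G2}$ at $a$, assume $12_{G1}$ (the $a$–$b$ edge) is scheduled at times $\equiv 2\pmod{12}$ and $12_{G2}$ (the $a$–$c$ edge) at times $\equiv 8\pmod{12}$. Reading the same edge from $b$'s side forces $b$ to see $12_{G1}$ in slot $\equiv 2\pmod{12}$; $b$'s remaining green slot, $\equiv 8\pmod{12}$, then must hold $b$'s other $12_G$ edge, which by the gadget's drawing is the outgoing copy labelled $12_{G2}$. The same argument at $c$ places the shared $12_{G2}$ in slot $\equiv 8\pmod{12}$ and the outgoing copy labelled $12_{G1}$ in slot $\equiv 2\pmod{12}$. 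In all three cases, the local schedule matches the one stated in the lemma.

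The step I expect to require the most care is the periodicity argument at node $a$: one must rule out the possibility that $12_{G1}$ and $12_{G2}$ ``swap'' green slots from one $12$-day window to the next. This is where the density-$1$ bookkeeping is essential—such a swap would stretch the gap between two consecutive occurrences of $12_{G1}$ from $12$ to $18$ days, violating $f=12$. Everything else is routine slot counting plus propagation of shared-edge colours from $a$ to $b$ and $c$.
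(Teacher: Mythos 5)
Your proof is correct and takes essentially the same route as the paper: invoke local density $1$ to force each frequency-$f$ task to recur exactly $12/f$ times per $12$-day window, pin the coloured edges into their coloured slots, and observe that the two remaining green slots must host $12_{G1}$ and $12_{G2}$ in a fixed alternation, with the anti-swap step you give (one of the two green edges would suffer an $18$-day gap, violating $f=12$). The only minor difference is that the paper assumes slot-respecting only for the \emph{input} edges of the gadget (\eg, $\color{blue}3_B$ and $\color{blue-violet}6_P$ at node $a$) and then \emph{derives} from density and the $f=3$ constraint that $\color{red}3_R$ must fill the red slots, whereas you read the slot-respecting hypothesis as already covering every labelled edge; both readings are consistent with the paper's definition, and the conclusion is the same.
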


\begin{proof}
    Each node in $D_6$ has tasks 
    [$\color{red}3_R$, $\color{blue}3_B$, $\color{blue-violet}6_P$, 
    $\color{applegreen}12_{G1}$, $\color{applegreen}12_{G2}$]. It also has
    density $D=1$, so each task with frequency $f$ must appear exactly once every $f$ days. 
    
    Consider node $a$, which has inputs $\color{blue}3_B$ and $\color{blue-violet}6_P$. In any \slotgood schedule these are scheduled in slots of their respective colours, which
    forces partial schedules for $a$ to be of the form 
        [$\color{red}\SLOT$, $\color{blue}3_B$, $\color{applegreen}\SLOT$,
        $\color{red}\SLOT$, $\color{blue}3_B$, $\color{blue-violet}6_P$]. 
    Exactly two of these slots must be filled by $\color{red}3_R$, and if these are not both red slots, the constraint of $\color{red}3_R$ will be violated. Thus, the schedule for $a$ must be of the form
    \[
        [{\color{red}3_R},
         {\color{blue}3_B,} 
         {\color{applegreen}\SLOT},
         {\color{red}3_R}, 
         {\color{blue}3_B},
         {\color{blue-violet}6_P},
         {\color{red}3_R}, 
         {\color{blue}3_B},
         {\color{applegreen}\SLOT},
         {\color{red}3_R},
         {\color{blue}3_B}, 
         {\color{blue-violet}6_P}].
    \]
    Two spaces remain, which must then contain $\color{applegreen}12_{G1}$ and $\color{applegreen}12_{G2}$, as shown in the Lemma.

    Now consider an arbitrary node other than $a$. All such nodes will have inputs
    $\color{applegreen}12_{Ga}$ and $\color{red}3_R$, forcing their partial schedules to be of the form
    \[
       [
        {\color{red}3_R},
        {\color{blue}\SLOT},
        {\color{applegreen}12_{Ga}},
        {\color{red}3_R},
        {\color{blue}\SLOT},
        {\color{blue-violet}\SLOT},
        {\color{red}3_R},
        {\color{blue}\SLOT},
        {\color{applegreen}\SLOT},
        {\color{red}3_R},
        {\color{blue}\SLOT},
        {\color{blue-violet}\SLOT}]
    .\]
    As with $a$, exactly four of these slots must schedule $\color{blue}3_B$ edges, and these must be the blue spaces for the $\color{blue}3_B$ constraint not to be violated, forcing schedules of the form
    \[
        [
        {\color{red}3_R},
        {\color{blue}3_B},
        {\color{applegreen}12_{Ga}},
        {\color{red}3_R},
        {\color{blue}3_B},
        {\color{blue-violet}\SLOT},
        {\color{red}3_R},
        {\color{blue}3_B},
        {\color{applegreen}\SLOT},
        {\color{red}3_R},
        {\color{blue}3_B},
        {\color{blue-violet}\SLOT]}.
    \]
    One of these slots must contain the $\color{applegreen}12_{Gb}$ edge, with the other two scheduling the $\color{blue-violet}6_P$ edge. If the $\color{applegreen}12_{Gb}$ edge is not scheduled in the remaining green slot, the constraint on the $\color{blue-violet}6_P$ edge will be violated, so the schedule must be as shown in the Lemma.
\end{proof}

\subsection{Clauses}
\label{sec:OR}
Clauses in $C$ are disjunctions of at most 3 literals, \ie, the logical OR of at most 3 variables, any of which may be negated. 
A gadget which determines the truth value of a clause given the values of its variables is shown in \wref{fig:OR}. 
\wref{lem:OR} and \wref{rem:OR-no-purple} show that the output of this gadget ($12_O$) can be scheduled in blue slots iff the corresponding clause is evaluated to be True, though it can always be scheduled in green slots. 

\begin{figure}[hbt]
    \centering
    \begin{tikzpicture}[scale=2]
        \node[invisible node] (x1) at (-1, 0) {};
        \node[invisible node] (x2) at (0, 0) {};
        \node[invisible node] (x3) at (1, 0) {};

        \node[graph node] (I1) at (-1, -1) {$I_1$};
        \node[graph node] (I2) at (0, -1) {$I_2$};
        \node[graph node] (I3) at (1, -1) {$I_3$};

        \path[->] (x1) edge[std, ourBlack] node[edge descriptor] {$3_{1R}$} (I1);
        \path[->] (x2) edge[std, ourBlack] node[edge descriptor] {$3_{2R}$} (I2);
        \path[->] (x3) edge[std, ourBlack] node[edge descriptor] {$3_{3R}$} (I3);

        \node[graph node] (OR) at (0, -2) {OR};

        \path[-] (I1) edge[std, ourBlack] node[edge descriptor] {$12_{1}$} (OR);
        \path[-] (I2) edge[std, ourBlack] node[edge descriptor] {$12_{2}$} (OR);
        \path[-] (I3) edge[std, ourBlack] node[edge descriptor] {$12_{3}$} (OR);

        \node[invisible node] (red) at (-1, -2){};

        \path[->] (red) edge[std, ourRed] node[edge descriptor] {\color{red}$3_R$} (OR);

        \node[graph node] (fill1) at (1, -1.75){$f_1$};
        \node[graph node] (fill2) at (1, -2.25){$f_2$};

        \path[-] (OR) edge[std, ourBlack] node[edge descriptor] {$6_{1}$} (fill1);
        \path[-] (OR) edge[std, ourBlack] node[edge descriptor] {$6_{2}$} (fill2);

        \node[invisible node] (out) at (0, -3){};

        \path[->] (OR) edge[std, ourBlack] node[edge descriptor] {$12_O$} (out);

        \node[invisible node] (short x1)  at (2, -0.5) {};
        \node[invisible node] (short x2)  at (3, -0.5) {};
        \node[invisible node] (short x3)  at (4, -0.5) {};
        \node[shorthand]      (short OR)  at (3, -1.5) {OR};
        \node[invisible node] (short r)   at (2, -1.5){};
        \node[invisible node] (short out) at (3, -2.25){};
        
        \path[->] (short x1) edge[std, ourBlack] node[edge descriptor] {$3_{1R}$} (short OR);
        \path[->] (short x2) edge[std, ourBlack] node[edge descriptor] {$3_{2R}$} (short OR);
        \path[->] (short x3) edge[std, ourBlack] node[edge descriptor] {$3_{3R}$} (short OR);
        \path[->] (short r) edge[std, ourRed] node[edge descriptor] {\color{red}$3_R$} (short OR);
        \path[->] (short OR) edge[std, ourBlack] node[edge descriptor] {$12_O$} (short out);
    \end{tikzpicture}
    \caption{A gadget which computes $x_1 \lor x_2 \lor x_3$ (left), along with a shorthand version (right). To compute $x_1 \lor x_2 \lor \overline{x_3}$, replace the incoming $3_{3R}$ edge with a $3_{3B}$ edge. To instead compute $x_1 \lor x_2$, replace the incoming $3_{3R}$ edge with a $\color{blue}3_B$ edge.  According to \wref{lem:OR} and \wref{rem:OR-no-purple}, $12_O$ will be scheduled in green slots if all inputs are assigned False and may be scheduled in green or blue slots if any input is assigned True.}
    \label{fig:OR}
\end{figure}
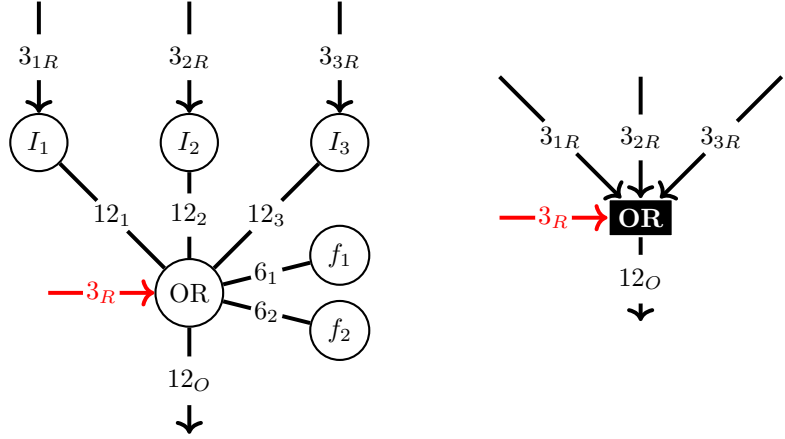 %

Recall the intuitive meaning of slots assigned to the outputs from variable gadgets: a schedule that schedules $3_{iR}$ in red slots corresponds to a variable assignment where $x_i$ is True.
Unfortunately, the output of the OR gadget has to be encoded with a different (and weaker) invariant.

\begin{lemma}[OR gadget schedules]
    \label{lem:OR}
	In any \slotgood schedule where $3_{1R}$, $3_{2R}$, and $3_{3R}$ are scheduled in red or blue slots, the local schedule of the \OR gadget has the following form:
	\begin{itemize}
	\item If all of $3_{1R}$, $3_{2R}$, and $3_{3R}$ are scheduled in blue slots,\\ then $12_O$ will be scheduled in green or purple slots.
	\item If at least one of $3_{1R}$, $3_{2R}$, and $3_{3R}$ are scheduled in red slots,\\ then $12_O$ will be scheduled in green, purple, or blue slots.
	\end{itemize}
\end{lemma}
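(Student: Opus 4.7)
My plan is to exploit the fact that the OR vertex has local density exactly $1$, which essentially forces its 12-day schedule, and to combine this with the \slotgood\ assumption to determine the slot colour of $12_O$.

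The first step I would take is to compute the density at OR: its incident edges $\color{red}3_R$, $12_1,12_2,12_3,6_1,6_2,12_O$ contribute frequencies summing to $\frac13+\frac3{12}+\frac26+\frac1{12}=1$, so every day of a 12-day cycle at OR is occupied. Under \slotgood, the four red slots are taken by $\color{red}3_R$, leaving the four blue, two green, and two purple slots for the remaining seven tasks (with multiplicities summing to~$8$). Next I would observe that each $6_i$, which appears twice per 12-day cycle exactly six days apart, is forced into one of the non-red paired slots $(1,7),(4,10)$ (blue), $(2,8)$ (green), or $(5,11)$ (purple); the red pairs $(0,6),(3,9)$ are unavailable.

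For the first bullet I would use a pigeonhole argument. If $3_{jR}$ is scheduled in the blue slots, then at $I_j$ no blue slot is free, and at OR no red slot is free, so $12_j$ must fall in a green or purple slot. With all three $3_{jR}$ blue this forces three of the four green/purple slots at OR to be used by the $12_j$'s. Placing either $6_i$ in a green/purple pair would consume two of those four scarce slots, contradicting the need for three. Hence both $6_1,6_2$ occupy blue pairs and exhaust the blue slots, so $12_O$ must take the single leftover green or purple slot.

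For the second bullet, the conclusion that $12_O$ is green, purple, or blue is immediate since $\color{red}3_R$ blocks every red slot at OR. The accompanying realizability content -- that blue is achievable once some $3_{jR}$ is red -- I would verify by exhibiting one explicit 12-day schedule, e.g.\ $6_1$ at $(1,7)$, $6_2$ at $(2,8)$, $12_2$ at $5$, $12_3$ at $11$, $12_1$ at $4$, $12_O$ at $10$, which uses the fact that a red $3_{1R}$ frees a blue slot at $I_1$ for $12_1$. I expect the main obstacle to be the careful pigeonhole in Case~1: recognizing that the three blue-input $12_j$'s together with any $6_i$ in a green/purple pair would overflow the four green/purple slots is precisely what rules out $12_O$ being blue there.
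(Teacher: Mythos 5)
Your proposal is correct and follows essentially the same route as the paper's proof: use the local density $D=1$ at the OR node to force a fully packed 12-day cycle, use the red-blocked slots plus the $I_i$ inverter constraints to pin the $12_j$ edges to green/purple when the inputs are blue, and then rule out a blue $12_O$ in the first case via a slot-counting contradiction involving $6_1,6_2$. The one genuine refinement you offer is in that last step: the paper merely asserts that assuming $12_O$ blue ``immediately causes a constraint violation when scheduling either $6_1$ or $6_2$,'' whereas you make the mechanism explicit by observing that each $6_i$ must occupy one of the four non-red spaced pairs $(1,7),(4,10),(2,8),(5,11)$, that three green/purple slots are already spent on the $12_j$'s, and hence neither $6_i$ can take a green/purple pair without overflow, forcing both into the two blue pairs and leaving $12_O$ the lone remaining green or purple slot. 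That pigeonhole formulation is a slightly cleaner, more self-contained presentation of the same contradiction; the remainder of your argument (the triviality of the second bullet and the realizability example) mirrors the paper.
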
 %

The lemma only covers the clause $x_1 \vee x_2 \vee x_3$, but all other clauses can be handled similarly: For a literal $x_j$, we use $3_{jR}$ as input, and for negated literals $\overline {x_j}$, we use the $3_{jB}$ edge instead of $3_{jR}$.

\begin{proof}
    The node labelled \OR has tasks 
    [$\color{red}3_R$,
    $6_1, 6_2, 12_1, 12_2, 12_3, 12_O$] 
    and local density $D=1$, so each task with frequency $f$ must appear exactly once every $f$ days. 
    Any \slotgood schedule must schedule the $\color{red}3_R$ edge in red slots  so schedules for the \OR node must be of the form 
    [\color{red}$3_R$\color{black},
    \color{blue}$\SLOT$\color{black},
    \color{applegreen}$\SLOT$\color{black},
    \color{red}$3_R$\color{black},
    \color{blue}$\SLOT$\color{black},
    \color{blue-violet}$\SLOT$\color{black}].

    Consider an inverter node $I_i$, $i=1,2,3$, which has tasks
    [$3_{iR}$, $12_i$] 
    where $3_{iR}$ is red or blue by assumption.
    Given these constraints, if the input edge $3_{iR}$ is scheduled in red slots, 
    then partial schedules for $I_i$ must be of the form
    [\coloredunderline{red}{$3_{iR}$},
    \color{blue}$\SLOT$\color{black},
    \color{applegreen}$\SLOT$\color{black},
    \coloredunderline{red}{$3_{iR}$},
    \color{blue}$\SLOT$\color{black},
    \color{blue-violet}$\SLOT$\color{black}] 
    and the $12_i$ edge must be scheduled in either green, purple, or blue slots.
    Similarly, if the input edge $3_{iR}$ is scheduled in blue slots 
    then partial schedules for $I_i$ must be of the form
    [\color{red}$\SLOT$\color{black},
    \coloredunderline{blue}{$3_{iR}$},
    \color{applegreen}$\SLOT$\color{black},
    \color{red}$\SLOT$\color{black},
    \coloredunderline{blue}{$3_{iR}$},
    \color{blue-violet}$\SLOT$\color{black}], 
    restricting the $12_i$ edge to green, purple, or red slots.
    However, the $12_i$ edge is also connected to the \OR node which has no empty red slots, further restricting it to green or purple slots.

    Suppose that all inputs $3_{1R}$, $3_{2R}$, $3_{3R}$ are scheduled in blue slots. 
    By the reasoning above, $12_1$, $12_2$, $12_3$ are then scheduled in green or purple slots.
    This will force schedules for the \OR node to be of the form
    \[
    [\color{red}3_R\color{black},
    \color{blue}\SLOT\color{black},
    \mathunderline{applegreen}{12_a},
    \color{red}3_R\color{black},
    \color{blue}\SLOT\color{black},
    \mathunderline{blue-violet}{12_b},
    \color{red}3_R\color{black},
    \color{blue}\SLOT\color{black},
    \mathunderline{applegreen}{12_c},
    \color{red}3_R\color{black},
    \color{blue}\SLOT\color{black},
    \color{blue-violet}\SLOT\color{black}]
    \quad\text{or}
    \]
    \[
    [\color{red}3_R\color{black},
    \color{blue}\SLOT\color{black},
    \mathunderline{applegreen}{12_a},
    \color{red}3_R\color{black},
    \color{blue}\SLOT\color{black},
    \mathunderline{blue-violet}{12_b},
    \color{red}3_R\color{black},
    \color{blue}\SLOT\color{black},
    \color{applegreen}\SLOT\color{black},
    \color{red}3_R\color{black},
    \color{blue}\SLOT\color{black},
    \mathunderline{blue-violet}{12_c}]
   	\]
    (for some mapping of $12_1$, $12_2$, and $12_3$ onto $12_a$, $12_b$, and $12_c$).
    Assume towards a contradiction that the $12_O$ edge is blue. This immediately causes a constraint violation when scheduling either $6_1$ or $6_2$, which demonstrates that the $12_O$ edge cannot be blue, and the schedule for the \OR node must be of the form
    \[[\color{red}3_R\color{black},
    \mathunderline{blue}{6_a},
    \mathunderline{applegreen}{12_a},
    \color{red}3_R\color{black},
    \mathunderline{blue}{6_b},
    \mathunderline{blue-violet}{12_b},
    \color{red}3_R\color{black},
    \mathunderline{blue}{6_a},
    \mathunderline{applegreen}{12_c},
    \color{red}3_R\color{black},    
    \mathunderline{blue}{6_b},
    \mathunderline{blue-violet}{12_d}]\]
    (likewise, for some mapping of $12_1$, $12_2$, and $12_3$, and $12_O$ onto $12_a$, $12_b$, $12_c$, and $12_d$).
    This ensures that the $12_O$ edge must be scheduled in either green or purple slots if all input edges are blue.

    Now suppose that one input edge, $3_{tR}$, is scheduled in red slots, while the other two, $3_{iR}$ and $3_{i'R}$ may be either red or blue.
    Consider the schedule
    \[[\color{red}3_R\color{black},
    \mathunderline{blue}{12_t},
    \mathunderline{applegreen}{6_a},
    \color{red}3_R\color{black},
    \mathunderline{blue}{6_b},
    \mathunderline{blue-violet}{12_i},
    \color{red}3_R\color{black},
    \mathunderline{blue}{12_O},
    \mathunderline{applegreen}{6_a},
    \color{red}3_R\color{black},    
    \mathunderline{blue}{6_b},
    \mathunderline{blue-violet}{12_{i'}}],\]
    where $12_O$ is scheduled in blue slots and no constraints are violated.
    However, even if $3_{tR}$ is assigned red, the corresponding edge $12_t$ may be also be scheduled in green, or purple slots, permitting schedules of the form
    \[[\color{red}3_R\color{black},
    \mathunderline{blue}{6_a},
    \mathunderline{applegreen}{12_a},
    \color{red}3_R\color{black},
    \mathunderline{blue}{6_b},
    \mathunderline{blue-violet}{12_b},
    \color{red}3_R\color{black},
    \mathunderline{blue}{6_a},
    \mathunderline{applegreen}{12_c},
    \color{red}3_R\color{black},    
    \mathunderline{blue}{6_b},
    \mathunderline{blue-violet}{12_d}],\]
    in which $12_O$ is scheduled in green or purple slots. 
    This shows that if one or more inputs are red, then $12_O$ may be scheduled in blue, green, or purple slots.
\end{proof} %

\begin{remark}[No purple output for OR]
\label{rem:OR-no-purple}
	Note that in $\mathcal P_{d\varphi k}$, the $12_O$ edge of an \OR gadget will always be connected to the $I$ node of a $D_{12}$ gadget (discussed below, see \wref{fig:12-duplicator}). 
	Due to this node's $\color{blue-violet} 6_P$ edge, $I$ has no empty purple slots, so the purple slots in \wref{lem:OR} for $12_O$ are not actually possible once the gadgets is part of the overall polycule.
\end{remark}

\subsection{Sorting Networks}
\label{sec:sorting network}
The previous subsection introduced gadgets whose output edges ($12_{O1}, 12_{O2}, \ldots, 12_{Om}$) each capture the truth value of one clause from $\varphi$ by being 
blue \emph{or} green if the clause can be satisfied, but restricted to green slots if it cannot.
The green ambiguity will be resolved by a gadget that applies \emph{Tension} to the system by forcing a subset of $k$ of the $12_O$ edges to be scheduled in blue slots in any valid schedule (\wref{sec:Tension}). 
However, this Tension gadget requires us to pick a \emph{fixed} $k$-subset; to be able to capture the difference between \emph{any} $k$ clauses being potentially blue and at most $k-1$ being blue, we build a \emph{sorting network} gadget.
This moves edges scheduled in green slots to the right, which in turn moves edges scheduled in blue slots to the left. We can then apply tension to the leftmost $k$ outputs to obtain a polycule which is schedulable iff the corresponding $\varphi$ has at least $k$ simultaneously satisfiable clauses.

Sorting networks have two components: \emph{wires}, which carry data, and \emph{comparators}\footnote{sometimes called modules or comparator modules}, which compare two inputs $a$ and $b$, re-ordering them if necessary. More specifically, the left output always contains $\max\{a,b\}$ and the right output $\min\{a,b\}$.
Note that since we are sorting Boolean values, we can compute these as $a\lor b$ and $a\land b$, respectively.

\wref{fig:worked-eg:Pdsz} includes the sorting network for 4 inputs where the wires are represented by \emph{channels} of $12_O$ edges, and comparators by \SWAP gadgets (introduced below).
We use the simple $\Theta(m^2)$-size insertion/bubble sort network~\cite[\S5.3.4]{Knuth1998}.
While asymptotically better sorting networks are obviously available, this simple method is sufficient for our reduction as we are only aiming for polynomial time overall.
We give the general construction here for reference.

\pagebreak[2]

\begin{definition}[DPS sorting networks]
	\label{def:sorting-networks}
	Our DPS sorting network takes $m$ input $12_O$ edges arranged in vertical channels which are connected by layers of \SWAP nodes and terminate with output $12_O$ edges. $12_{O1}$ and $12_{O2}$ are connected by $1$ \SWAP node in layer $0$,  $12_{O2}$ and $12_{O3}$ are connected by $2$ \SWAP nodes in layers $1$ and $-1$, and $12_{Ok}$ and $12_{O(k+1)}$ are connected by $k$ \SWAP nodes in layers $k-1, k-3, \ldots, 1-k$. 
\end{definition}

The goal of the rest of this subsection is to establish the following lemma for the sorting network:

\begin{lemma}[Sorting Network gadget schedules]
\label{lem:sort}
	Consider a DPS sorting network with $m$ each of input $12_O$ edges and output $12_O$ edges. 
	\begin{thmenumerate}{lem:sort}
	\item \label{lem:sort-schedule-properties}
		(sufficient input) for all $\ell$, every \slotgood schedule in which the leftmost $\ell$ output edges are scheduled in blue slots must schedule at least $\ell$ input edges in blue slots.
	\item \label{lem:sort-exists-schedule}
		(sorted output) for any assignment $C:[m]\to \{\text{green}, \text{blue}\}$ of colours to input edges such that at least $\ell$ are blue, there exists a \slotgood schedule $S$ for the sorting Network that assigns in which the leftmost $\ell$ output edges are scheduled in blue slots.
	\end{thmenumerate}
\end{lemma}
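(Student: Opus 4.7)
The plan is to reduce this lemma to a per-\SWAP comparator property (to be established for the \SWAP gadget in \wref{sec:SWAP}): in any \slotgood schedule of a single \SWAP gadget with input edges $a,b$ and output edges $L,R$, the number of blue-scheduled edges among $\{L,R\}$ equals the number of blue-scheduled edges among $\{a,b\}$, with the left output $L$ scheduled in blue whenever at least one of $a,b$ is; moreover, for each of the four input colorings (green/blue on $a$ and $b$), a \slotgood schedule realizing this $(\max,\min)$ behavior exists. Under this property, each \SWAP is exactly a Boolean comparator (with blue~$\succ$~green), and the arrangement of \SWAPs in \wref{def:sorting-networks} is the standard insertion/bubble-sort comparator network.

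For part~\wref{lem:sort-schedule-properties}, I would walk through the network layer by layer from inputs to outputs. Since each \SWAP preserves the blue-count on its two channels (the multiset of output colors equals the multiset of input colors), the total number of blue-scheduled wires is invariant across layers. Thus the number of blue-scheduled input edges of the network equals the number of blue-scheduled output edges. If the leftmost $\ell$ outputs are blue, then the output-blue-count is at least $\ell$, forcing the input-blue-count to also be at least $\ell$.

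For part~\wref{lem:sort-exists-schedule}, I would invoke the standard correctness of insertion/bubble-sort comparator networks: with each comparator implementing $(\max,\min)$ under blue~$=1$ and green~$=0$, the network sorts its input, placing all blues in the leftmost positions. Given any input with at least $\ell$ blues, the sorted output therefore has blue in positions $1,\ldots,\ell$. The required global schedule is assembled layer by layer by picking, at each \SWAP, a local schedule that realizes the $(\max,\min)$ behavior on its two inputs (which exists by the comparator property).

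The main obstacle will be establishing the comparator property for a single \SWAP gadget itself: one must verify, by careful case analysis of all \slotgood schedules of the gadget (analogous to but more intricate than the proofs for the OR and duplicator gadgets above), that the $(\max,\min)$ behavior is both forced by the gadget's structure and achievable for every input coloring.
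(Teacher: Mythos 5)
Your high-level decomposition (reduce to a per-\SWAP comparator property, then argue about blue-counts across layers) matches the paper's strategy, but the specific comparator property you posit is strictly stronger than what the \SWAP gadget actually provides, and this is a real gap.

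You claim that in \emph{any} \slotgood schedule, the number of blue outputs of a \SWAP equals the number of blue inputs, and that $L$ is blue whenever at least one input is blue. Neither of these is forced by the gadget. The \SWAP gadget (via the $\lor$ and $\land$ subgadgets) only enforces the implications \emph{output}~$\Rightarrow$~\emph{input}: if $12_{O\lor}$ is blue then at least one input is blue, and if $12_{O\land}$ is blue then both inputs are blue. The converses do not hold --- a schedule is permitted to schedule both inputs in blue slots yet output one or even both outputs in green slots. In other words, the gadget may ``swallow'' blue values but never create them; the $(\max,\min)$ behavior is \emph{achievable}, not \emph{forced}. Consequently your argument for part~(a), which relies on exact blue-count conservation layer by layer, doesn't go through as written.

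Fortunately, the weaker one-directional implications are all that is needed. For part~(a), walking backward from outputs to inputs, the implications above show that the blue-count cannot \emph{decrease} when you replace a \SWAP's two outputs by its two inputs (both outputs blue $\Rightarrow$ both inputs blue; exactly one output blue $\Rightarrow$ at least one input blue), which yields ``at least $\ell$ blue inputs'' from ``at least $\ell$ blue outputs.'' You should replace the equality claim with this monotonicity claim. Your argument for part~(b) is correct and essentially the paper's: since the $(\max,\min)$ behavior \emph{is} achievable for every input coloring, you can build a forward schedule that makes each \SWAP act as a true comparator and then invoke correctness of the insertion/bubble-sort network. Just be careful that what you establish about the \SWAP gadget cleanly separates what is forced (for part~(a)) from what is merely possible (for part~(b)), since conflating them is exactly where your current write-up overreaches.
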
%

The proof relies on the \SWAP gadgets, which in turn require a few auxiliary gadgets,
so we introduce the latter first before we return to \wref{lem:sort} in \wref{sec:sort}.

\subsubsection{12-Edge Duplicator}
\label{sec:D12}

We begin by introducing our first auxiliary gadget: a frequency-12 edge duplicator, $D_{12}$, shown in \wref{fig:12-duplicator}.

\begin{figure}[hbt]
    \centering
    \begin{tikzpicture}[scale=2]
        \node[invisible node] (input) at (0, 0.25) {};
        \node[graph node] (invert) at (0, -0.75){$I$};
        \node[invisible node] (red) at (-1, -0.25) {};
        \node[invisible node] (blue) at (-1, -0.75) {};
        \node[invisible node] (purple) at (-1, -1.25) {};
        \node[graph node] (spare) at (0.75, -0.75) {};
        \node[graph node] (D12) at (0, -2) {$D_{12}$};
        \node[invisible node] (red2) at (-1, -1.5) {};
        \node[invisible node] (blue2) at (-1, -2) {};
        \node[invisible node] (purple2) at (-1, -2.5) {};
        \node[invisible node] (out1) at (-0.3333, -3) {};
        \node[invisible node] (out2) at (0.3333, -3) {};
        
        \path[->] (input) edge[std, ourBlack] node[edge descriptor] {$12_{O}$} (invert);
        \path[->] (red) edge[std, ourRed] node[edge descriptor] {$\color{red}3_R$} (invert);
        \path[->] (blue) edge[std, ourBlue] node[edge descriptor] {$\color{blue}6_B$} (invert);
        \path[->] (purple) edge[std, ourPurple] node[edge descriptor] {$\color{blue-violet}6_P$} (invert);
        \path[-] (invert) edge[std, ourBlack] node[edge descriptor] {$12_{O'}$} (spare);
        \path[-] (invert) edge[std, ourBlack] node[edge descriptor] {$6_{\overline{O}}$} (D12);
        \path[->] (red2) edge[std, ourRed] node[edge descriptor] {$\color{red}3_R$} (D12);
        \path[->] (blue2) edge[std, ourBlue] node[edge descriptor] {$\color{blue}6_B$} (D12);
        \path[->] (purple2) edge[std, ourPurple] node[edge descriptor] {$\color{blue-violet}6_P$} (D12);
        \path[->] (D12) edge[std, ourBlack] node[edge descriptor] {$12_{O}$} (out1);
        \path[->] (D12) edge[std, ourBlack] node[edge descriptor] {$12_{O'}$} (out2);

        \node[invisible node] (short in)        at (3, -0.5-0.125) {};
        \node[shorthand] (short D12)            at (3, -1.5-0.125) {$D_{12}$};
        \node[invisible node] (short red)       at (2, -1-0.125) {$\cdot2$};
        \node[invisible node] (short blue)      at (2, -1.5-0.125) {$\cdot2$};
        \node[invisible node] (short purple)    at (2, -2-0.125) {$\cdot2$};
        \node[invisible node] (short out)       at (3, -2.5-0.125) {$\cdot2$};

        \path[->] (short in) edge[std, ourBlack] node[edge descriptor] {$12_{O}$} (short D12);
        \path[->] (short red) edge[std, ourRed] node[edge descriptor] {$\color{red}3_R$} (short D12);
        \path[->] (short blue) edge[std, ourBlue] node[edge descriptor] {$\color{blue}6_B$} (short D12);
        \path[->] (short purple) edge[std, ourPurple] node[edge descriptor] {$\color{blue-violet}6_P$} (short D12);
        \path[->] (short D12) edge[std, ourBlack] node[edge descriptor] {$12_{O}$} (short out);

    \end{tikzpicture}
    \caption{A gadget for duplicating input edges with frequency 12 (left), and a shorthand version (right). Note that while it can be easily modified to have three $12_O$ outputs, two are sufficient for our purposes. Also note that the $12_O$ outputs will not both be scheduled concurrently with the $12_O$ input, merely in a slot of the same colour (green or blue).}
    \label{fig:12-duplicator}
\end{figure}
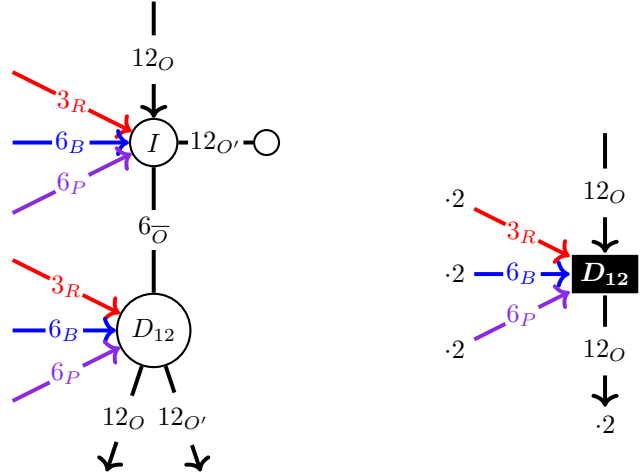%

\begin{lemma}[12-duplicator gadget schedules]
    Any \slotgood schedule must yield a local schedule for the $D_{12}$ gadget where all edges labelled $12_O$ are scheduled in slots of the same colour.
    \label{lem:D12}
\end{lemma}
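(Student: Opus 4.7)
The plan is to leverage the local density of each node being exactly~$1$, so that every task with frequency $f$ must occupy exactly one slot in every $f$-day window. I would carry this out first for the inverter node~$I$ and then for the $D_{12}$ node, and finally couple the two via the shared $6_{\overline O}$ edge.

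First I would tabulate, over a 12-day window at node~$I$: the four red slots are forced to hold the four copies of $\color{red}3_R$, two purple slots are forced to hold the two copies of $\color{blue-violet}6_P$, and two of the four blue slots are forced to hold $\color{blue}6_B$ (since its frequency-6 constraint requires the two occurrences to be 6 days apart, so it picks one of the two possible ``blue sub-classes''). Counting tasks, this leaves exactly four free slots in the 12-day window -- two blue and two green -- to accommodate the three remaining tasks at $I$: $12_O$, $12_{O'}$, and $6_{\overline O}$. The $6_{\overline O}$ edge, having frequency 6, must be scheduled in a pair of slots exactly 6 days apart; among the four free slots, the only such pairs are either the two green slots (days 2, 8) or the two blue slots (which are also 6 apart by the parity argument). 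Hence at $I$, exactly one of the following holds: $(\text{i})$ $6_{\overline O}$ occupies the two green slots and the two frequency-12 edges $12_O$, $12_{O'}$ occupy the two free blue slots, or $(\text{ii})$ $6_{\overline O}$ occupies the two free blue slots and $12_O$, $12_{O'}$ occupy the two green slots.

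Next I would repeat the identical argument for the $D_{12}$ node: the three coloured-constant edges fix red/purple and half of the blue slots, leaving two blue and two green free slots for $6_{\overline O}$ and the two outgoing $12_O$ edges. By the same 6-apart argument, $6_{\overline O}$ is either in the two free blue slots (forcing both outgoing $12_O$ into green) or in the two green slots (forcing both outgoing $12_O$ into blue).

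Finally, the key coupling: $6_{\overline O}$ is a single edge between $I$ and $D_{12}$, so its colour is the same from both endpoints' perspectives. If $6_{\overline O}$ is green, then case~(i) holds at $I$ and the blue case holds at $D_{12}$, so all four $12_O$-labelled edges (the input $12_O$ at $I$, the spare $12_{O'}$ at $I$, and the two outputs at $D_{12}$) are scheduled in blue slots; if $6_{\overline O}$ is blue, all four are scheduled in green slots. Either way, all edges labelled $12_O$ share a common colour. The only subtlety is the parity check ensuring that ``the two free blue slots'' are genuinely 6 days apart (hence valid for a frequency-6 edge); this is immediate because once $\color{blue}6_B$ picks one sub-class $\{1,7\}$ or $\{4,10\}$, the complementary pair is itself 6-periodic. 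I expect no other obstacle; the argument is essentially a bookkeeping exercise on the density-1 slot budget.
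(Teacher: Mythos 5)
Your proof is correct and follows essentially the same approach as the paper: both use the density-1 slot-budget accounting to pin down the free slots (two blue, two green) at each of $I$ and $D_{12}$, observe that the frequency-6 edge $6_{\overline O}$ must occupy a monochromatic 6-apart pair, and then use the shared $6_{\overline O}$ edge to force the same colour choice at both nodes, pushing all $12_O$ edges into the complementary colour. Your explicit parity check that the two free blue slots are indeed 6 apart is a detail the paper leaves implicit, but otherwise the arguments coincide.
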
%

\begin{proof}%
    Nodes $I$ and $D_{12}$ each have tasks    
    [$\color{red}3_R$,
    $\color{blue}6_B$,
    $\color{blue-violet}6_P$,
    $6_{\overline{O}}, 12_O, 12_{O'}$] and density $D=1$, so each
    task with frequency $f$ must appear exactly once in every $f$ day period in the schedules for either node.
    Further, tasks $\color{red}3_R$, $\color{blue}6_B$, and $\color{blue-violet}6_P$ are indirectly connected to the True Clock such that their partial schedules must either be of the form
        \[[
        {\color{red}3_R},
        {\color{blue}{6_B}},
        {\color{applegreen}{\SLOT}},
        {\color{red}3_R},
        {\color{blue}{\SLOT}},
        {\color{blue-violet}{6_P}},
        {\color{red}3_R},
        {\color{blue}{6_B}},
        {\color{applegreen}{\SLOT}},
        {\color{red}3_R},
        {\color{blue}{\SLOT}},
        {\color{blue-violet}{6_P}}
        ],
    \quad\text{or}
        \]
        \[[
        {\color{red}3_R},
        {\color{blue}{\SLOT}},
        {\color{applegreen}{\SLOT}},
        {\color{red}3_R},
        {\color{blue}{6_b}},
        {\color{blue-violet}{6_P}},
        {\color{red}3_R},
        {\color{blue}{\SLOT}},
        {\color{applegreen}{\SLOT}},
        {\color{red}3_R},
        {\color{blue}{6_B}},
        {\color{blue-violet}{6_P}} 
        ].\]
    Note that all empty slots in both schedules are either blue or green. 
    
    Assume towards a contradiction that the $6_{\overline{O}}$ edge is scheduled in a mixture of blue and green slots and note that, as it is also scheduled twice in each 12-day period, its constraint must be violated.
    This means that either the $6_{\overline{O}}$ edge is consistently green and all $12_O$ and $12_{O'}$ edges are blue, or the $6_{\overline{O}}$ edge is consistently blue and all $12_O$ and $12_{O'}$ edges are green. Either way, all $12_O$ and $12_{O'}$ edges are scheduled in slots of the same colour. 
\end{proof}%

\subsubsection{OR2 Gadget}

Next, we introduce the $\lor$ gadget, shown in \wref{fig:OR2}. 
Note that the surrounding construction in the sorting layer 
provides weaker guarantees than for the OR gadget of the clause layer,
requiring a slightly different approach.

\begin{figure}[htb]
    \centering
    \begin{tikzpicture}[scale=2]
        \node[invisible node]   (input 1)   at (-0.375, 1) {};
        \node[invisible node]   (input 2)   at (0.375, 1) {};
        \node[graph node]       (OR)        at (0, 0){$\lor$};
        \node[invisible node]   (red)       at (-1, 0.25) {};
        \node[invisible node]   (purple)    at (-1, -0.25) {};
        \node[graph node]       (spare 1)   at (0.75, 0.25) {};
        \node[graph node]       (spare 2)   at (0.75, -0.25) {};
        \node[graph node]       (invert)    at (0, -1.5){$I$};{};
        \node[invisible node]   (red2)      at (-1, -0.5) {};
        \node[invisible node]   (purple2)   at (-1, -1) {};
        \node[invisible node]   (blue2 6)   at (-1, -1.5) {};
        \node[invisible node]   (blue2 12)  at (-1, -2) {};
        \node[invisible node]   (green2 12) at (-1, -2.5) {};
        \node[invisible node]   (out)       at (0, -2.5) {};
        
        \path[->] (input 1) edge[std, ourBlack] node[edge descriptor] {$12_{O1}$} (OR);
        \path[->] (input 2) edge[std, ourBlack] node[edge descriptor] {$12_{O2}$} (OR);
        \path[->] (red) edge[std, ourRed] node[edge descriptor] {$\color{red}3_R$} (OR);
        \path[->] (purple) edge[std, ourPurple] node[edge descriptor] {$\color{blue-violet}6_P$} (OR);
        \path[-] (OR) edge[std, ourBlack] node[edge descriptor] {$12_S$} (spare 1);
        \path[-] (OR) edge[std, ourBlack] node[edge descriptor] {$6_S$} (spare 2);
        \path[-] (OR) edge[std, ourBlack] node[edge descriptor] {$12_{\lor '}$} (invert);
        \path[->] (red2) edge[std, ourRed] node[edge descriptor] {$\color{red}3_R$} (invert);
        \path[->] (purple2) edge[std, ourPurple] node[edge descriptor] {$\color{blue-violet}6_P$} (invert);
        \path[->] (blue2 6) edge[std, ourBlue] node[edge descriptor] {$\color{blue}6_B$} (invert);
        \path[->] (blue2 12) edge[std, ourBlue] node[edge descriptor] {$\color{blue}12_B$} (invert);
        \path[->] (green2 12) edge[std, ourGreen] node[edge descriptor] {$\color{applegreen}12_G$} (invert);
        \path[->] (invert) edge[std, ourBlack] node[edge descriptor] {$12_{O\lor}$} (out);
    
    \begin{scope}[shift={(.5,0)}]
        \node[invisible node]   (short input 1)   at (2-0.375, 1) {};
        \node[invisible node]   (short input 2)   at (2+0.375, 1) {};
        \node[shorthand]        (short OR)        at (2, -0.75){$\lor$};
        \node[invisible node]   (short red)       at (1.25, 0.25) {$\cdot2$};
        \node[invisible node]   (short purple)    at (1.25, -0.25) {$\cdot2$};
        \node[invisible node]   (short blue 6)    at (1.25, -0.75) {};
        \node[invisible node]   (short blue 12)   at (1.25, -1.25) {};
        \node[invisible node]   (short green)     at (1.25, -2) {};
        \node[invisible node]   (short out)       at (2, -2.5) {};

        \path[->] (short input 1) edge[std, ourBlack] node[edge descriptor] {$12_{O1}$} (short OR);
        \path[->] (short input 2) edge[std, ourBlack] node[edge descriptor] {$12_{O2}$} (short OR);
        \path[->] (short red) edge[std, ourRed] node[edge descriptor] {$\color{red}3_R$} (short OR);
        \path[->] (short purple) edge[std, ourPurple] node[edge descriptor] {$\color{blue-violet}6_P$} (short OR);
        \path[->] (short blue 6) edge[std, ourBlue] node[edge descriptor] {$\color{blue}6_B$} (short OR);
        \path[->] (short blue 12) edge[std, ourBlue] node[edge descriptor] {$\color{blue}12_B$} (short OR);
        \path[->] (short green) edge[std, ourGreen] node[edge descriptor] {$\color{applegreen}12_G$} (short OR);
        \path[->] (short OR) edge[std, ourBlack] node[edge descriptor] {$12_{O\lor}$} (short out);
    \end{scope}
        \end{tikzpicture}
    \caption{A gadget for computing $y_1\lor y_2$ (left), along with a shorthand version (right). $12_{O1}$, $12_{O2}$, and $12_{O\lor}$ can always be scheduled in green slots, but may be scheduled in blue slots only if the corresponding logical term is assigned True. Again, ambiguity introduced by True logical terms being scheduled in green slots will be addressed in \wref{sec:Tension}. Note that while \wref{fig:OR} also calculates a logical OR, its incoming edges both have different frequencies and a different mapping from truth values to edge colours.}
    \label{fig:OR2}
\end{figure}
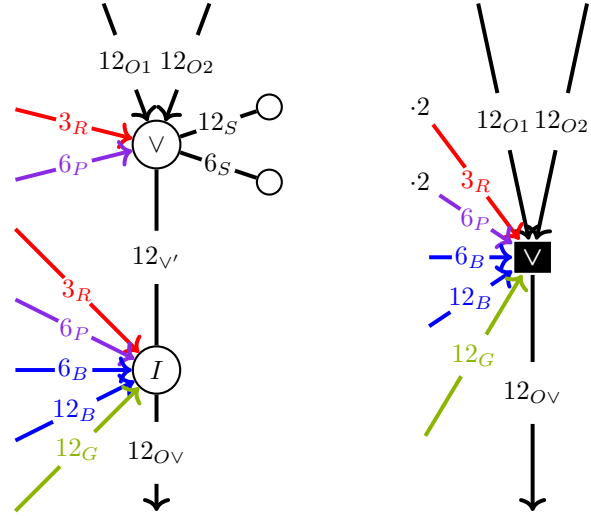%

\begin{lemma}[$\lor$ gadget schedules]
	\label{lem:OR2}
	Any \slotgood schedule must yield a local schedule for the $\lor$ gadget where
	the $12_{O\lor}$ edge is either blue or green. 
	Further, if the $12_{O\lor}$ edge is blue then either $12_{O1}$, $12_{O2}$, or both are also blue.
\end{lemma}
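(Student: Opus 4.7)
The plan is to carry out a slot-counting argument at each of the two nodes of the $\lor$ gadget. A quick check shows that both nodes have local density exactly $1$: at the $\lor$ node the reciprocal frequencies sum as $4\cdot\tfrac{1}{12}+\tfrac{1}{3}+2\cdot\tfrac{1}{6}=1$, and the same holds at $I$. By pigeonhole, every incident edge of frequency $f$ must then be scheduled exactly once in any window of $f$ consecutive days.

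First I would handle node $I$ over a single 12-day period, which contains $4$ red, $4$ blue, $2$ green, and $2$ purple slots. In any \slotgood schedule, the colour-fixed inputs $\color{red}3_R$, $\color{blue-violet}6_P$, $\color{blue}6_B$, $\color{blue}12_B$, $\color{applegreen}12_G$ collectively occupy all $4$ red and $2$ purple slots, $3$ of the $4$ blue slots, and $1$ of the $2$ green slots. This leaves exactly one blue and one green slot to be filled by the remaining edges $12_{\lor'}$ and $12_{O\lor}$, one each. Consequently $\{12_{O\lor},12_{\lor'}\}$ is partitioned as one blue and one green slot; in particular $12_{O\lor}$ is blue or green (the first claim), and if $12_{O\lor}$ is blue then $12_{\lor'}$ must be green.

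The analogous count at the $\lor$ node, with $\color{red}3_R$ and $\color{blue-violet}6_P$ filling the red and purple slots, leaves $4$ blue and $2$ green slots to accommodate $\{12_{O1},12_{O2},12_S,12_{\lor'}\}$ once each and $6_S$ twice. Since $12_{O1}$ and $12_{O2}$ originate from $D_{12}$ duplicators or previous $\lor$ gadgets, they are themselves restricted to blue or green slots (cf.\ \wref{lem:D12} and \wref{rem:OR-no-purple}). Assume for the second claim that $12_{O\lor}$ is blue; by the previous paragraph $12_{\lor'}$ is green, consuming one of the two green slots at the $\lor$ node. Since $6_S$ requires two slots exactly $6$ days apart, it cannot use the single remaining green slot, so that slot is available to at most one of $12_{O1},12_{O2}$, forcing the other to be blue, as claimed.

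The argument is essentially careful bookkeeping, so I expect no real obstacle; the only minor subtlety is verifying that $6_S$ (whose colour is not externally constrained) cannot absorb the lone green slot when $12_{\lor'}$ is green, which is automatic from its frequency-$6$ spacing requirement.
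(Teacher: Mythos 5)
Your proof is correct and takes essentially the same density-$1$ slot-bookkeeping approach as the paper, just analysing the two nodes in the opposite order (you do $I$ first, which cleanly yields the one-blue-one-green partition of $\{12_{O\lor},12_{\lor'}\}$; the paper does $\lor$ first and cases on whether both inputs are green). You have also correctly identified and handled the one subtlety that the paper glosses over somewhat tersely, namely that $6_S$, having frequency $6$ at a density-$1$ node, is forced onto an exactly $6$-periodic pair of slots and hence cannot occupy a lone leftover green slot.
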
%

\begin{proof}%
    Consider the node labelled ``$\lor$'' which, by \wref{fig:OR2}, has tasks
    [$\color{red}3_R$,
    $\color{blue-violet}6_P$,
    $6_S, 12_{S},$
    $12_{\lor'},$
    $12_{O1}, 12_{O2}$] 
    and density $D=1$, so each task with frequency $f$ must appear exactly once in every $f$ day period.
    Further, $\color{red}3_R$ and $\color{blue-violet}6_P$ are indirectly connected to the True Clock such that partial schedules must be of the form
    \[[
        {\color{red}3_R},
        {\color{blue}{\SLOT}},
        {\color{applegreen}{\SLOT}},
        {\color{red}3_R},
        {\color{blue}{\SLOT}},
        {\color{blue-violet}{6_P}},
        {\color{red}3_R},
        {\color{blue}{\SLOT}},
        {\color{applegreen}{\SLOT}},
        {\color{red}3_R},
        {\color{blue}{\SLOT}},
        {\color{blue-violet}{6_P}} 
       ].
	\]
    Suppose that neither $12_{O1}$ nor $12_{O2}$ is blue~-- instead, schedule both in the remaining green slots. This forces partial schedules to be of the form
        \[[
        {\color{red}3_R},
        {\color{blue}{\SLOT}},
        {\mathunderline{applegreen}{12_O}},
        {\color{red}3_R},
        {\color{blue}{\SLOT}},
        {\color{blue-violet}{6_P}},
        {\color{red}3_R},
        {\color{blue}{\SLOT}},
        {\mathunderline{applegreen}{12_{O'}}},
        {\color{red}3_R},
        {\color{blue}{\SLOT}},
        {\color{blue-violet}{6_P}} 
       ],\]
    that is, schedules where all remaining slots are blue, including the slot assigned to $12_{\lor'}$.

    Now consider the alternative, scheduling some $12_O$ input edges ($12_{O1}$ or $12_{O2}$ or both) in blue slots. Under this assumption, partial schedules must be of the form
        \[[
        {\color{red}3_R},
        {\mathunderline{blue}{12_O}},
        {\color{applegreen}{\SLOT}},
        {\color{red}3_R},
        {\color{blue}{\SLOT}},
        {\color{blue-violet}{6_P}},
        {\color{red}3_R},
        {\color{blue}{\SLOT}},
        {\color{applegreen}{\SLOT}},
        {\color{red}3_R},
        {\color{blue}{\SLOT}},
        {\color{blue-violet}{6_P}} 
        ]
    \quad\text{or}\]
        \[[
        {\color{red}3_R},
        {\color{blue}{\SLOT}},
        {\color{applegreen}{\SLOT}},
        {\color{red}3_R},
        {\mathunderline{blue}{12_O}},
        {\color{blue-violet}{6_P}},
        {\color{red}3_R},
        {\color{blue}{\SLOT}},
        {\color{applegreen}{\SLOT}},
        {\color{red}3_R},
        {\color{blue}{\SLOT}},
        {\color{blue-violet}{6_P}} 
       ].\]
    Note that either case allows schedules where $12_{\lor'}$ is green, and also schedules where $12_{\lor'}$ is blue. Thus it is always possible for $12_{\lor'}$ to be scheduled in blue slots, but if $12_{O1}$ or $12_{O2}$ are scheduled in blue slots then $12_{\lor'}$ must be scheduled in green slots.

    Now consider schedules for node $I$, which has tasks
    [$\color{red}3_R$,
    $\color{blue-violet}6_P$,
    $\color{blue}6_B$,
    $\color{blue}12_B$,
    $\color{applegreen}12_G$,
    $12_{\lor'} 12_{O\lor}$], 
    and has density $D=1$, with the same implication.
    Partial schedules for $I$ have two free slots in each 12-day period, with all other slots taken by $\color{red}3_R$,
    $\color{blue-violet}6_P$,
    $\color{blue}6_B$,
    $\color{blue}12_B$, and
    $\color{applegreen}12_G$ -- all of which must be scheduled in slots of their corresponding colour due to their indirect connections to the True Clock.
    These free slots are blue and green, so either $12_{\lor'}$ is blue and $12_{O\lor}$ must be green, or $12_{\lor'}$ is green and $12_{O\lor}$ must be blue. 

    It, therefore, follows that if both $12_{O1}$ and $12_{O2}$ are green then $12_{\lor'}$ will be blue and $12_{O\lor}$ must be green, while if either $12_{O1}$ or $12_{O2}$ are scheduled in blue slots then $12_{\lor'}$ and $12_{O\lor}$ may both be scheduled in either green or blue slots.
\end{proof} %

\subsubsection{AND2 Gadget}

We will now introduce the $\land$ gadget, shown in \wref{fig:AND}. 

\begin{figure}[hbt]
    \centering
    \begin{tikzpicture}[scale=2]
        \node[invisible node]   (input 1)   at (-0.375, 1) {};
        \node[invisible node]   (input 2)   at (0.375, 1) {};
        \node[graph node]       (AND)       at (0, 0){$\land$};
        \node[invisible node]   (red)       at (-0.75, 0.25) {};
        \node[invisible node]   (purple)    at (-0.75, -0.25) {};
        \node[graph node]       (spare 1)   at (0.75, 0.25) {};
        \node[graph node]       (spare 2)   at (0.75, -0.25) {};
        \node[graph node]       (invert)    at (0, -1){$I$};
        \node[graph node]       (spare 3)   at (0.75, -1) {};
        \node[invisible node]   (red2)      at (-0.75, -0.5) {};
        \node[invisible node]   (blue2)     at (-0.75, -1) {};
        \node[invisible node]   (purple2)   at (-0.75, -1.5) {};
        \node[invisible node]   (out)       at (0, -2) {};
        
        \path[->] (input 1) edge[std, ourBlack] node[edge descriptor] {$12_{O1}$} (AND);
        \path[->] (input 2) edge[std, ourBlack] node[edge descriptor] {$12_{O2}$} (AND);
        \path[->] (red) edge[std, ourRed] node[edge descriptor] {$\color{red}3_R$} (AND);
        \path[->] (purple) edge[std, ourPurple] node[edge descriptor] {$\color{blue-violet}6_P$} (AND);
        \path[-] (AND) edge[std, ourBlack] node[edge descriptor] {$12_{S}$} (spare 1);
        \path[-] (AND) edge[std, ourBlack] node[edge descriptor] {$12_{S'}$} (spare 2);
        \path[-] (AND) edge[std, ourBlack] node[edge descriptor] {$6_\land$} (invert);
        \path[->] (red2) edge[std, ourRed] node[edge descriptor] {$\color{red}3_R$} (invert);
        \path[->] (blue2) edge[std, ourBlue] node[edge descriptor] {$\color{blue}6_B$} (invert);
        \path[->] (purple2) edge[std, ourPurple] node[edge descriptor] {$\color{blue-violet}6_P$} (invert);
        \path[-] (invert) edge[std, ourBlack] node[edge descriptor] {$12_{\land'}$} (spare 3);
        \path[->] (invert) edge[std, ourBlack] node[edge descriptor] {$12_{O\land}$} (out);
	\begin{scope}[shift={(0.5,0)}]
        \node[invisible node]   (short input 1)   at (2-0.375, 0.5) {};
        \node[invisible node]   (short input 2)   at (2+0.375, 0.5) {};
        \node[shorthand]        (short AND)       at (2, -0.5){$\land$};
        \node[invisible node]   (short red)       at (1.25, 0) {$\cdot2$};
        \node[invisible node]   (short blue)      at (1.25, -0.5) {};
        \node[invisible node]   (short purple)    at (1.25, -1) {$\cdot2$};
        \node[invisible node]   (short out)       at (2, -1.5) {};

        \path[->] (short input 1) edge[std, ourBlack] node[edge descriptor] {$12_{O1}$} (short AND);
        \path[->] (short input 2) edge[std, ourBlack] node[edge descriptor] {$12_{O2}$} (short AND);
        \path[->] (short red) edge[std, ourRed] node[edge descriptor] {$\color{red}3_R$} (short AND);
        \path[->] (short blue) edge[std, ourBlue] node[edge descriptor] {$\color{blue}6_B$} (short AND);
        \path[->] (short purple) edge[std, ourPurple] node[edge descriptor] {$\color{blue-violet}6_P$} (short AND);
        \path[->] (short AND) edge[std, ourBlack] node[edge descriptor] {$12_{O\land}$} (short out);
    \end{scope}
        \end{tikzpicture}
    \caption{A gadget for computing $y_1\land y_2$ (left), along with a shorthand version (right). Here, $12_{Oi}$ will be scheduled in blue or green slots if $y_i$ is True and restricted to green slots if $y_i$ is False. Similarly, $12_{O\land}$ will be scheduled in blue or green slots if $y_1\land y_2$ is True and restricted to green slots otherwise. As with other such gadgets, ambiguities resulting from edges corresponding to True variables being scheduled in green slots will be addressed in \wref{sec:Tension}.}
    \label{fig:AND}
\end{figure}
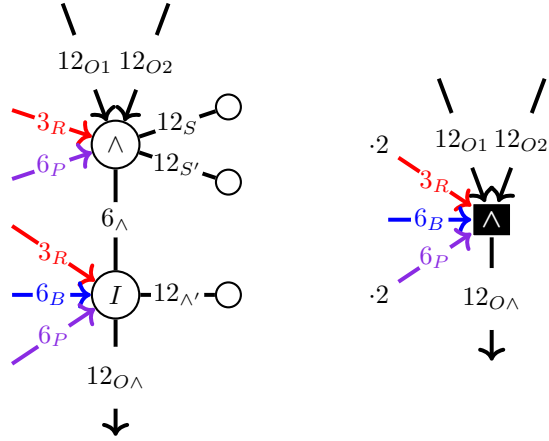 %

\begin{lemma}[$\land$ gadget schedules]
	\label{lem:AND}
	Any \slotgood schedule must yield a local schedule for the $\land$ gadget where
	the $12_{O\land}$ edge is either blue or green.
	Further, if the $12_{O\land}$ edge is blue then both $12_{O1}$ and $12_{O2}$ must also be blue.
\end{lemma}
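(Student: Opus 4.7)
The plan is to mimic the structure of the proof of \wref{lem:OR2}, carrying out local-schedule analyses at each of the two nodes of the gadget (the $\land$ node and the inverter $I$) and combining them via the shared internal edge $6_\land$. Both nodes have local density exactly $1$, so every task of frequency $f$ must be scheduled exactly once per $f$ days; in particular, in any 12-day window each task occupies the expected number of slots.

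First I would pin down the partial schedule at the $\land$ node. Its incident constants $\color{red}3_R$ and $\color{blue-violet}6_P$ fill all four red slots and both purple slots of every 12-day window, leaving exactly four blue and two green slots for the inputs $12_{O1}$, $12_{O2}$, the spare edges $12_S$, $12_{S'}$, and the internal edge $6_\land$ (twice). The structural observation driving the whole argument is that, to satisfy its frequency-$6$ constraint, $6_\land$ must occupy a \emph{monochromatic} pair of slots exactly six days apart -- either the unique green pair or one of the two blue pairs. A short case distinction on the colours of the inputs then gives: if either of $12_{O1}, 12_{O2}$ is green, at least one of the two green slots is used and $6_\land$ is forced to be blue; only when both inputs are blue is it possible to place $6_\land$ in the two green slots (with $12_S$, $12_{S'}$ absorbing two blue slots) -- the alternative of $6_\land$ blue also remains available in this subcase.

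Next I would analyse the inverter $I$, whose tasks are the three constants $\color{red}3_R$, $\color{blue}6_B$, $\color{blue-violet}6_P$, the shared $6_\land$, and the two 12-edges $12_{\land'}$, $12_{O\land}$. In any slot-respecting schedule the three constants occupy all red slots, all purple slots, and one of the two blue pairs, leaving exactly two green and two blue slots free. Again $6_\land$ must be placed as a monochromatic pair, so it either takes both green slots -- in which case $12_{\land'}$ and $12_{O\land}$ land in the two remaining blue slots -- or it takes the free blue pair -- in which case $12_{\land'}$ and $12_{O\land}$ land in the green slots. Hence $12_{O\land}$ is blue if and only if $6_\land$ is green at $I$, and otherwise $12_{O\land}$ is green.

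Combining the two analyses through the shared edge $6_\land$ yields the lemma: $12_{O\land}$ is always blue or green; and if $12_{O\land}$ is blue, then $6_\land$ must be green, which by the first analysis forces both $12_{O1}$ and $12_{O2}$ to be blue. The only delicate bookkeeping is tracking the two monochromatic-pair constraints at node $I$ ($6_\land$ and $\color{blue}6_B$) simultaneously, making sure that $6_\land$ really does have a free blue pair available when it needs one; this is a routine check on the four positions $1,4,7,10$ rather than a genuine obstacle.
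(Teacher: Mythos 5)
Your proposal is correct and follows essentially the same route as the paper's proof: density-$1$ slot counting at both the $\land$ node and the inverter $I$, the observation that the internal $6_\land$ edge must occupy a monochromatic pair of slots, propagation from the inputs to $6_\land$ at the $\land$ node, and then from $6_\land$ to $12_{O\land}$ at $I$. The only cosmetic difference is that you phrase the key constraint as a ``monochromatic pair exactly six days apart,'' whereas the paper argues by contradiction that a mixed blue/green placement of $6_\land$ would violate its frequency; these are the same observation.
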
 %

\begin{proof}%
    Consider the node labelled ``$\land$'' in \wref{fig:AND}, which has tasks
    $[\splitaftercomma{{\color{red}3_R},
    {\color{blue-violet}6_P},
    6_{\land}, 12_{O1}, 12_{O2}, 12_{S},12_{S'}}]$
    and density $D=1$, forcing each task with frequency $f$ to appear exactly once in every $f$ day period.
    Further, $\color{red}3_R$ and $\color{blue-violet}6_P$ are indirectly connected to the True Clock such that partial schedules must be of the form
        [$\color{red}3_R$,
        $\color{blue}{\SLOT}$,
        $\color{applegreen}{\SLOT}$,
        $\color{red}3_R$,
        $\color{blue}{\SLOT}$,
        $\color{blue-violet}{6_P}$,
        $\color{red}3_R$,
        $\color{blue}{\SLOT}$,
        $\color{applegreen}{\SLOT}$,
        $\color{red}3_R$,
        $\color{blue}{\SLOT}$,
        $\color{blue-violet}{6_P}$].

    Assume towards a contradiction that the $6_{\land}$ edge is scheduled in a mixture of blue and green slots. Under this assumption, either $6_{\land}$ occurs more than once in some 6-day periods, or its constraint must violated. Thus, it must be scheduled consistently in either blue slots or green slots.

    Suppose that some $12_O$ edge, either $12_{O1}$ or $12_{O2}$, is scheduled in a green slot. In this case, partial schedules for $\land$ will either be of the form
        \[[
        {\color{red}3_R},
        {\color{blue}{\SLOT}},
        {\mathunderline{applegreen}{12_O}},
        {\color{red}3_R},
        {\color{blue}{\SLOT}},
        {\color{blue-violet}{6_P}},
        {\color{red}3_R},
        {\color{blue}{\SLOT}},
        {\color{applegreen}{\SLOT}},
        {\color{red}3_R},
        {\color{blue}{\SLOT}},
        {\color{blue-violet}{6_P}} 
        ]
    \quad\text{or}\]
        \[[
        {\color{red}3_R},
        {\color{blue}{\SLOT}},
        {\color{applegreen}{\SLOT}},
        {\color{red}3_R},
        {\color{blue}{\SLOT}},
        {\color{blue-violet}{6_P}},
        {\color{red}3_R},
        {\color{blue}{\SLOT}},
        {\mathunderline{applegreen}{12_O}},
        {\color{red}3_R},
        {\color{blue}{\SLOT}},
        {\color{blue-violet}{6_P}} 
       ].\]
    Either way, $6_\land$ must then be scheduled in a blue slot to avoid constraint violations.    
    If we instead suppose that both $12_{O1}$ and $12_{O2}$ are scheduled in blue slots, then two valid schedules for $\land$ are 
        \[[
        {\color{red}3_R},
        {\mathunderline{blue}{6_{\land}}},
        {\mathunderline{applegreen}{12_S}},
        {\color{red}3_R},
        {\mathunderline{blue}{12_{O1}}},
        {\color{blue-violet}{6_P}},
        {\color{red}3_R},
        {\mathunderline{blue}{6_{\land}}},
        {\mathunderline{applegreen}{12_{S'}}},
        {\color{red}3_R},
        {\mathunderline{blue}{12_{O2}}},
        {\color{blue-violet}{6_P}} 
       ] 
    \quad\text{and}\]        
        \[[
        {\color{red}3_R},
        {\mathunderline{blue}{12_{O1}}},
        {\mathunderline{applegreen}{6_{\land}}},
        {\color{red}3_R},
        {\mathunderline{blue}{12_{O2}}},
        {\color{blue-violet}{6_P}},
        {\color{red}3_R},
        {\mathunderline{blue}{12_{S}}},
        {\mathunderline{applegreen}{6_{\land}}},
        {\color{red}3_R},
        {\mathunderline{blue}{12_{S'}}},
        {\color{blue-violet}{6_P}} 
        ],
       \]
    demonstrating that in this case $6_P$ may be scheduled in either blue or green slots.

    Now consider the node labelled $I$, which has tasks     
    [$\color{red}3_R$,
    $\color{blue}6_B$,
    $\color{blue-violet}6_P$,
    $6_{\land}, 12_{O\land}, 12_{O\land'}$], 
    and density $D=1$ with the same implication as above for the $\land$ node.
    Again, partial schedules of the coloured nodes ($\color{red}3_R$, $\color{blue}6_B$, and $\color{blue-violet}6_P$)
    are restricted by their indirect connections to the True Clock, and must either be of the form
        \[[
        {\color{red}3_R},
        {\color{blue}{6_B}},
        {\color{applegreen}{\SLOT}},
        {\color{red}3_R},
        {\color{blue}{\SLOT}},
        {\color{blue-violet}{6_P}},
        {\color{red}3_R},
        {\color{blue}{6_B}},
        {\color{applegreen}{\SLOT}},
        {\color{red}3_R},
        {\color{blue}{\SLOT}},
        {\color{blue-violet}{6_P}} 
        ]
    \quad\text{or}\]
        \[[
        {\color{red}3_R},
        {\color{blue}{\SLOT}},
        {\color{applegreen}{\SLOT}},
        {\color{red}3_R},
        {\color{blue}{6_B}},
        {\color{blue-violet}{6_P}},
        {\color{red}3_R},
        {\color{blue}{\SLOT}},
        {\color{applegreen}{\SLOT}},
        {\color{red}3_R},
        {\color{blue}{6_B}},
        {\color{blue-violet}{6_P}} 
       ].\]
    In either case, if $6_{\land}$ is scheduled in blue slots then only green slots remain for $12_{O\land}$ and $ 12_{O\land'}$. Likewise, if $6_{\land}$ is scheduled in green slots then $12_{O\land}$ and $ 12_{O\land'}$ must be scheduled in the remaining blue slots. Thus $12_{O\land}$ must be either green or blue, and further can only be blue if $6_{\land}$ is green, which is only possible if both $12_{O1}$ and $12_{O2}$ are blue, as claimed.
\end{proof} %

\subsubsection{Slot Splitting Gadgets}
\label{sec:making-12s}

Several of the gadgets introduced in this section have constant input edges which are scheduled in slots with a specific colour, but with a larger frequency than those produced by the gadgets introduced in \wref{sec:duplication}.
These can be produced by the simple slot-splitter gadgets shown in \wref{fig:B6,B12,G12}.

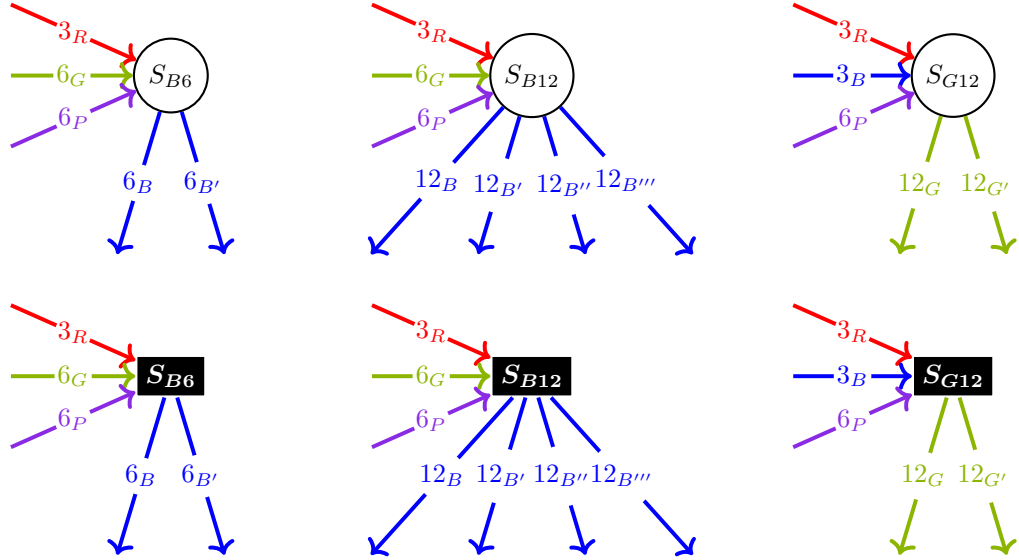
\begin{figure}[hbt]
    \centering
    \begin{tikzpicture}[scale=2]
        \def\xadj{-.9}
    
        \node[graph node] (split 6)      at (\xadj+0, -0.75){$S_{B6}$};
        \node[invisible node] (red 6)    at (\xadj+-1.125, -0.25) {};
        \node[invisible node] (green 6)  at (\xadj+-1.125, -0.75) {};
        \node[invisible node] (purple 6) at (\xadj+-1.125, -1.25) {};
        \node[invisible node] (out 6 1)  at (\xadj+-0.375, -2) {};
        \node[invisible node] (out 6 2)  at (\xadj+0.375, -2) {};

        \path[->] (red 6) edge[std, ourRed] node[edge descriptor] {$\color{red}3_R$} (split 6);
        \path[->] (green 6) edge[std, ourGreen] node[edge descriptor] {$\color{applegreen}6_G$} (split 6);
        \path[->] (purple 6) edge[std, ourPurple] node[edge descriptor] {$\color{blue-violet}6_P$} (split 6);
        \path[->] (split  6) edge[std, ourBlue] node[edge descriptor] {$\color{blue}6_B$} (out 6 1);
        \path[->] (split 6) edge[std, ourBlue] node[edge descriptor] {$\color{blue}6_{B'}$} (out 6 2);

        \node[shorthand]      (short split 6)  at (\xadj+0, -2.75){$S_{B6}$};
        \node[invisible node] (short red 6)    at (\xadj+-1.125, -2.25) {};
        \node[invisible node] (short green 6)  at (\xadj+-1.125, -2.75) {};
        \node[invisible node] (short purple 6) at (\xadj+-1.125, -3.25) {};
        \node[invisible node] (short out 6 1)  at (\xadj+-0.375, -4) {};
        \node[invisible node] (short out 6 2)  at (\xadj+0.375, -4) {};

        \path[->] (short red 6) edge[std, ourRed] node[edge descriptor] {$\color{red}3_R$} (short split 6);
        \path[->] (short green 6) edge[std, ourGreen] node[edge descriptor] {$\color{applegreen}6_G$} (short split 6);
        \path[->] (short purple 6) edge[std, ourPurple] node[edge descriptor] {$\color{blue-violet}6_P$} (short split 6);
        \path[->] (short split 6) edge[std, ourBlue] node[edge descriptor] {$\color{blue}6_B$} (short out 6 1);
        \path[->] (short split 6) edge[std, ourBlue] node[edge descriptor] {$\color{blue}6_{B'}$} (short out 6 2);

        \node[graph node] (split 12)        at (1.5, -0.75){$S_{B12}$};
        \node[invisible node] (red 12)      at (0.375, -0.25) {};
        \node[invisible node] (green 12)    at (0.375, -0.75) {};
        \node[invisible node] (purple 12)   at (0.375, -1.25) {};
        \node[invisible node] (out 12 1)    at (0.375, -2) {};
        \node[invisible node] (out 12 2)    at (1.125, -2) {};
        \node[invisible node] (out 12 3)    at (1.875, -2) {};
        \node[invisible node] (out 12 4)    at (2.625, -2) {};

        \path[->] (red 12) edge[std, ourRed] node[edge descriptor] {$\color{red}3_R$} (split 12);
        \path[->] (green 12) edge[std, ourGreen] node[edge descriptor] {$\color{applegreen}6_G$} (split 12);
        \path[->] (purple 12) edge[std, ourPurple] node[edge descriptor] {$\color{blue-violet}6_P$} (split 12);
        \path[->] (split 12) edge[std, ourBlue] node[edge descriptor] {$\color{blue}12_B$} (out 12 1);
        \path[->] (split 12) edge[std, ourBlue] node[edge descriptor] {$\color{blue}12_{B'}$} (out 12 2);
        \path[->] (split 12) edge[std, ourBlue] node[edge descriptor] {$\color{blue}12_{B''}$} (out 12 3);
        \path[->] (split 12) edge[std, ourBlue] node[edge descriptor] {$\color{blue}12_{B'''}$} (out 12 4);

        \node[shorthand]      (short split 12)  at (1.5, -2.75){$S_{B12}$};
        \node[invisible node] (short red 12)    at (0.375, -2.25) {};
        \node[invisible node] (short green 12)  at (0.375, -2.75) {};
        \node[invisible node] (short purple 12) at (0.375, -3.25) {};
        \node[invisible node] (short out 12 1)  at (0.375, -4) {};
        \node[invisible node] (short out 12 2)  at (1.125, -4) {};
        \node[invisible node] (short out 12 3)  at (1.875, -4) {};
        \node[invisible node] (short out 12 4)  at (2.625, -4) {};
        
        \path[->] (short red 12) edge[std, ourRed] node[edge descriptor] {$\color{red}3_R$} (short split 12);
        \path[->] (short green 12) edge[std, ourGreen] node[edge descriptor] {$\color{applegreen}6_G$} (short split 12);
        \path[->] (short purple 12) edge[std, ourPurple] node[edge descriptor] {$\color{blue-violet}6_P$} (short split 12);
        \path[->] (short split 12) edge[std, ourBlue] node[edge descriptor] {$\color{blue}12_B$} (short out 12 1);
        \path[->] (short split 12) edge[std, ourBlue] node[edge descriptor] {$\color{blue}12_{B'}$} (short out 12 2);
        \path[->] (short split 12) edge[std, ourBlue] node[edge descriptor] {$\color{blue}12_{B''}$} (short out 12 3);
        \path[->] (short split 12) edge[std, ourBlue] node[edge descriptor] {$\color{blue}12_{B'''}$} (short out 12 4);

        \def\xadj{4.3}
        
        \node[graph node] (split 12)        at (\xadj+0, -0.75){$S_{G12}$};
        \node[invisible node] (red 12)      at (\xadj+-1.125, -0.25) {};
        \node[invisible node] (blue 12)     at (\xadj+-1.125, -0.75) {};
        \node[invisible node] (purple 12)   at (\xadj+-1.125, -1.25) {};
        \node[invisible node] (out 12 1)    at (\xadj+-0.375, -2) {};
        \node[invisible node] (out 12 2)    at (\xadj+0.375, -2) {};

        \path[->] (red 12) edge[std, ourRed] node[edge descriptor] {$\color{red}3_R$} (split 12);
        \path[->] (blue 12) edge[std, ourBlue] node[edge descriptor] {$\color{blue}3_B$} (split 12);
        \path[->] (purple 12) edge[std, ourPurple] node[edge descriptor] {$\color{blue-violet}6_P$} (split 12);
        \path[->] (split  12) edge[std, ourGreen] node[edge descriptor] {$\color{applegreen}12_G$} (out 12 1);
        \path[->] (split 12) edge[std, ourGreen] node[edge descriptor] {$\color{applegreen}12_{G'}$} (out 12 2);

        \def\xadj{2.8}
        \def\yadj{-2}

        \node[shorthand]      (short split 12)  at (\xadj+1.5, \yadj-0.75){$S_{G12}$};
        \node[invisible node] (short red 12)    at (\xadj+0.375, \yadj-0.25) {};
        \node[invisible node] (short green 12)  at (\xadj+0.375, \yadj-0.75) {};
        \node[invisible node] (short purple 12) at (\xadj+0.375, \yadj-1.25) {};
        \node[invisible node] (short out 12 1)  at (\xadj+1.125, \yadj-2) {};
        \node[invisible node] (short out 12 2)  at (\xadj+1.875, \yadj-2) {};

        \path[->] (short red 12) edge[std, ourRed] node[edge descriptor] {$\color{red}3_R$} (short split 12);
        \path[->] (short green 12) edge[std, ourBlue] node[edge descriptor] {$\color{blue}3_B$} (short split 12);
        \path[->] (short purple 12) edge[std, ourPurple] node[edge descriptor] {$\color{blue-violet}6_P$} (short split 12);
        \path[->] (short split 12) edge[std, ourGreen] node[edge descriptor] {$\color{applegreen}12_G$} (short out 12 1);
        \path[->] (short split 12) edge[std, ourGreen] node[edge descriptor] {$\color{applegreen}12_{G'}$} (short out 12 2);
        
    \end{tikzpicture}
    \caption{Gadgets for generating $\color{blue}6_B$ edges (top left) and $\color{blue}12_B$ edges (top centre), and $\color{applegreen}12_G$ edges (top right), with shorthand versions below. 
    Excess $\color{blue}6_B$, $\color{blue}12_B$, and $\color{applegreen}12_G$ edges should be connected to pendant nodes.
    }
    \label{fig:B6,B12,G12}
\end{figure} %

\begin{lemma}[Slot splitting gadget schedules]
    \label{lem:split}
    Any \slotgood schedule must yield a local schedule for the slot splitting gadgets
    where $\color{blue}6_{B}$, $\color{blue}12_{B}$, and $\color{applegreen}12_{G}$ edges produced by $S_{B6}$, $S_{B12}$, and $S_{G12}$ nodes must be scheduled in blue, blue, and green slots, respectively.
\end{lemma}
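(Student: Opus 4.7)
The plan is to treat all three splitting gadgets uniformly via a density-counting argument. Each gadget is a single central node, and for each I would first check that the local density equals exactly $1$: for $S_{B6}$, $\tfrac13 + 4\cdot\tfrac16 = 1$; for $S_{B12}$, $\tfrac13 + 2\cdot\tfrac16 + 4\cdot\tfrac1{12} = 1$; for $S_{G12}$, $2\cdot\tfrac13 + \tfrac16 + 2\cdot\tfrac1{12} = 1$. Density $1$ forces the periodic schedule (of period $6$ for $S_{B6}$ and $12$ for $S_{B12}$, $S_{G12}$) to use every day of the period by exactly one task, with each task of frequency $f$ appearing exactly once in every $f$-day window.

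Next I would invoke the slot-respecting hypothesis: the coloured input edges $\color{red}3_R$, $\color{applegreen}6_G$, $\color{blue-violet}6_P$, and (for $S_{G12}$) $\color{blue}3_B$ are pinned to slots of their respective colours. Then it just remains to count the slots of each colour in the relevant period. For a $6$-day period, by \wref{def:slots} there are $2$ red, $2$ blue, $1$ green, and $1$ purple slot; for a $12$-day period there are $4$ red, $4$ blue, $2$ green, and $2$ purple slots.

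For $S_{B6}$, the coloured inputs claim both red slots, the single green slot, and the single purple slot, leaving exactly the $2$ blue slots for $\color{blue}6_B$ and $\color{blue}6_{B'}$. For $S_{B12}$, the inputs claim all $4$ red slots, both green slots, and both purple slots of the $12$-day period, leaving exactly the $4$ blue slots for the four $\color{blue}12_B$ outputs. For $S_{G12}$, the inputs claim all $4$ red, all $4$ blue, and both purple slots, leaving exactly the $2$ green slots for $\color{applegreen}12_G$ and $\color{applegreen}12_{G'}$. In each case, the output edges must occupy the forced residual slots, which are of the intended colour.

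There is no real obstacle here beyond the slot bookkeeping; unlike the earlier gadget lemmas (\eg, \wref{lem:OR}, \wref{lem:AND}), no case analysis on the colouring of outputs is needed, because density exactly $1$ leaves no free slot anywhere in the period. I would therefore likely present the three gadgets together in a single short proof with a small table of slot-by-colour counts, rather than repeating a near-identical schedule enumeration three times.
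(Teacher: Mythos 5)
Your proposal is correct and matches the paper's argument: the paper likewise pins the coloured input edges to every slot of their colour via the slot-respecting hypothesis (plus their frequency constraints) and then concludes that the only slots left for the outputs have the claimed colour. The explicit density-$1$ bookkeeping and per-colour slot count you add are a clean formalisation of the paper's "all remaining slots are blue/green" step rather than a different route, and your suggestion to unify the three gadgets in one table is a reasonable stylistic variant of the same proof.
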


\begin{proof}
    In addition to their respective output edges, the $S_{B6}$ and $S_{B12}$ nodes each have 
        $\color{red}3_R$,
        $\color{applegreen}{6_G}$, and
        $\color{blue-violet}{6_P}$
    input edges, which are connected indirectly to the True Clock such that they must be scheduled in slots of their respective colour, with partial schedules of the form:
        [$\color{red}3_R$,
        $\color{blue}{\SLOT}$,
        $\color{applegreen}{6_G}$,
        $\color{red}3_R$,
        $\color{blue}{\SLOT}$,
        $\color{blue-violet}{6_P}$,
        $\color{red}3_R$,
        $\color{blue}{\SLOT}$,
        $\color{applegreen}{6_G}$,
        $\color{red}3_R$,
        $\color{blue}{\SLOT}$,
        $\color{blue-violet}{6_P}$].
    In these schedules, all remaining slots are blue, so the output edges ($\color{blue}6_B$ and $\color{blue}6_{B'}$ for the $S_{B6}$ node, and $\color{blue}12_{B}$, $\color{blue}12_{B'}$, $\color{blue}12_{B''}$, $\color{blue}12_{B'''}$ for the $S_{B12}$ node) must be scheduled in blue slots.

    Similarly, the $S_{G12}$ node has 
        $\color{red}3_R$,
        $\color{blue}{3_B}$, and
        $\color{blue-violet}{6_P}$
    input edges, and partial schedules of the form
        [$\color{red}3_R$,
        $\color{blue}{3_B}$,
        $\color{applegreen}{\SLOT}$,
        $\color{red}3_R$,
        $\color{blue}{3_B}$,
        $\color{blue-violet}{6_P}$,
        $\color{red}3_R$,
        $\color{blue}{3_B}$,
        $\color{applegreen}{\SLOT}$,
        $\color{red}3_R$,
        $\color{blue}{3_B}$,
        $\color{blue-violet}{6_P}$], 
    where, as above, all remaining slots are green, so the output edges ($\color{applegreen}12_{G}$ and $\color{applegreen}12_{G'}$) must be scheduled in green slots. 
\end{proof}

\subsubsection{SWAP Gadgets}
\label{sec:SWAP}

With these preparations, we are finally able to build the \SWAP gadgets which act as comparators in the sorting network.
The internal structure of \SWAP gadgets is shown in \wref{fig:swapper}. 

\begin{figure}[htp]
    \centering

    \def\SBsixyadj{1.5}

    \resizebox{\textwidth}{!}{
    
    \begin{tikzpicture}[scale=2]
        \node[invisible node] (OR1) at (-1, 0.25) {};
        \node[invisible node] (OR2) at (1, 0.25){};
        \node[shorthand] (Split1) at (-1, -0.75) {$D_{12}$};
        \node[shorthand] (Split2) at (1, -0.75) {$D_{12'}$};
        \node[shorthand] (OR) at (-1, -2.75) {$\lor$};
        \node[shorthand] (AND) at (1, -2.75) {$\land$};
        \node[invisible node] (out1) at (-1, -3.75) {};
        \node[invisible node] (out2) at (1, -3.75){};
        \node[shorthand] (SB6 right) at (2.75, \SBsixyadj-0.75){$S_{B6}$};
        \node[shorthand] (SB6 mid) at (0, \SBsixyadj-0.75){$S_{B6}$};
        \node[shorthand] (SB6 left) at (-2.75, \SBsixyadj-0.75){$S_{B6}$};
        \node[invisible node] (SB12) at (-1.75, -3.25){};
        \node[invisible node] (SG12) at (-1.75, -3.75){};
        \node[invisible node] (red SB6 left)    at (-3.5, \SBsixyadj-0.25) {};
        \node[invisible node] (green SB6 left)  at (-3.5, \SBsixyadj-0.75) {};
        \node[invisible node] (purple SB6 left) at (-3.5, \SBsixyadj-1.25) {};
        \node[invisible node] (red SB6 mid)    at (-0.75, \SBsixyadj-0.25) {};
        \node[invisible node] (green SB6 mid)  at (-0.75, \SBsixyadj-0.75) {};
        \node[invisible node] (purple SB6 mid) at (-0.75, \SBsixyadj-1.25) {};
        \node[invisible node] (red SB6 right)    at (2, \SBsixyadj-0.25) {};
        \node[invisible node] (green SB6 right)  at (2, \SBsixyadj-0.75) {};
        \node[invisible node] (purple SB6 right) at (2, \SBsixyadj-1.25) {};
        \node[invisible node] (red Split1)    at (-1.75, -0.25) {$\cdot 2$};
        \node[invisible node] (purple Split1) at (-1.75, -1.25) {$\cdot 2$};
        \node[invisible node] (red Split2)    at (1.75, -0.25) {$\cdot 2$};
        \node[invisible node] (purple Split2) at (1.75, -1.25) {$\cdot 2$};
        \node[invisible node] (red OR) at (-1.75, -1.75) {$\cdot 2$};
        \node[invisible node] (purple OR) at (-1.75, -2.25) {$\cdot 2$};
        \node[invisible node] (red AND) at (1.75, -1.75) {$\cdot 2$};
        \node[invisible node] (purple AND) at (1.75, -2.25) {$\cdot 2$};            

        \path[->] (OR1) edge[std, ourBlack] node[edge descriptor] {$12_{O1}$} (Split1);
        \path[->] (OR2) edge[std, ourBlack] node[edge descriptor] {$12_{O2}$} (Split2);
        \path[-] (Split1) edge[std, ourBlack] node[edge descriptor] {$12_{O11}$} (OR);
        \path[-] (Split1) edge[bend left, std, ourBlack] node[edge descriptor] {$12_{O12}$} (AND);
        \path[-] (Split2) edge[bend right, std, ourBlack] node[edge descriptor] {$12_{O21}$} (OR);
        \path[-] (Split2) edge[std, ourBlack] node[edge descriptor] {$12_{O22}$} (AND);
        \path[->] (OR) edge[std, ourBlack] node[edge descriptor] {$12_{O\lor}$} (out1);
        \path[->] (AND) edge[std, ourBlack] node[edge descriptor] {$12_{O\land}$} (out2);
        \path[->] (red SB6 left) edge[std, ourRed] node[edge descriptor] {$\color{red}3_R$} (SB6 left);
        \path[->] (green SB6 left) edge[std, ourGreen] node[edge descriptor] {$\color{applegreen}6_G$} (SB6 left);
        \path[->] (purple SB6 left) edge[std, ourPurple] node[edge descriptor] {$\color{blue-violet}6_P$} (SB6 left);
        \path[->] (red SB6 mid) edge[std, ourRed] node[edge descriptor] {$\color{red}3_R$} (SB6 mid);
        \path[->] (green SB6 mid) edge[std, ourGreen] node[edge descriptor] {$\color{applegreen}6_G$} (SB6 mid);
        \path[->] (purple SB6 mid) edge[std, ourPurple] node[edge descriptor] {$\color{blue-violet}6_P$} (SB6 mid);
        \path[->] (red SB6 right) edge[std, ourRed] node[edge descriptor] {$\color{red}3_R$} (SB6 right);
        \path[->] (green SB6 right) edge[std, ourGreen] node[edge descriptor] {$\color{applegreen}6_G$} (SB6 right);
        \path[->] (purple SB6 right) edge[std, ourPurple] node[edge descriptor] {$\color{blue-violet}6_P$} (SB6 right);
        \path[->] (red Split1) edge[std, ourRed] node[edge descriptor] {$\color{red}3_R$} (Split1);
        \path[->] (purple Split1) edge[std, ourPurple] node[edge descriptor] {$\color{blue-violet}6_P$} (Split1);
        \path[->] (red Split2) edge[std, ourRed] node[edge descriptor] {$\color{red}3_R$} (Split2);
        \path[->] (purple Split2) edge[std, ourPurple] node[edge descriptor] {$\color{blue-violet}6_P$} (Split2);
        \path[-] (SB6 left) edge[out = -80, in = 180, std, ourBlue] node[edge descriptor] {$\color{blue}6_B$} (Split1);
        \path[-] (SB6 right) edge[out = -100, in = 0, std, ourBlue] node[edge descriptor] {$\color{blue}6_{B'}$} (Split2);
        \path[-] (SB6 left) edge[out = -90, in = 180, std, ourBlue] node[edge descriptor] {$\color{blue}6_{B'}$} (OR);
        \path[-] (SB6 right) edge[out = -90, in = 0, std, ourBlue] node[edge descriptor] {$\color{blue}6_B$} (AND);
        \path[-] (SB6 mid) edge[out = -90, in = 0, std, ourBlue] node[edge descriptor] {$\color{blue}6_B$} (Split1);
        \path[-] (SB6 mid) edge[out = -90, in = 180, std, ourBlue] node[edge descriptor] {$\color{blue}6_{B'}$} (Split2);
            \path[->] (SB12) edge[std, ourBlue] node[edge descriptor] {$\color{blue}12_B$} (OR);
            \path[->] (SG12) edge[std, ourGreen] node[edge descriptor] {$\color{applegreen}12_G$} (OR);
        \path[->] (red OR) edge[std, ourRed] node[edge descriptor] {$\color{red}3_R$} (OR);
        \path[->] (purple OR) edge[std, ourPurple] node[edge descriptor] {$\color{blue-violet}6_P$} (OR);
        \path[->] (red AND) edge[std, ourRed] node[edge descriptor] {$\color{red}3_R$} (AND);
        \path[->] (purple AND) edge[std, ourPurple] node[edge descriptor] {$\color{blue-violet}6_P$} (AND);

        \node[invisible node] (short in 1) at (-0.5, -4) {};
        \node[invisible node] (short in 2) at (0.5, -4){};
        \node[shorthand] (SWAP) at (0, -5) {\SWAP};
        \node[invisible node] (short out 1) at (-0.5, -6) {};
        \node[invisible node] (short out 2) at (0.5, -6){};
        \node[invisible node] (short red3) at (-1.15, -4.5) {$\cdot 11$};
        \node[invisible node] (short blue12) at (-1.15, -5.5) {};
        \node[invisible node] (short purple6) at (1.15, -4.5){$\cdot 11$};
        \node[invisible node] (short green6) at (1.25, -5) {$\cdot 3$};
        \node[invisible node] (short green12) at (1.15, -5.5){};

        \path[->] (short in 1) edge[std, ourBlack] node[edge descriptor] {$12_{O1}$} (SWAP);
        \path[->] (short in 2) edge[std, ourBlack] node[edge descriptor] {$12_{O2}$} (SWAP);
        \path[->] (SWAP) edge[std, ourBlack] node[edge descriptor] {$12_{O\lor}$} (short out 1);
        \path[->] (SWAP) edge[std, ourBlack] node[edge descriptor] {$12_{O\land}$} (short out 2);
        \path[->] (short red3) edge[std, ourRed] node[edge descriptor] {$\color{red}3_R$} (SWAP);
        \path[->] (short blue12) edge[std, ourBlue] node[edge descriptor] {$\color{blue}12_B$} (SWAP);
        \path[->] (short purple6) edge[std, ourPurple] node[edge descriptor] {$\color{blue-violet}6_P$} (SWAP);
        \path[->] (short green6) edge[std, ourGreen] node[edge descriptor] {$\color{applegreen}6_G$} (SWAP);
        \path[->] (short green12) edge[std, ourGreen] node[edge descriptor] {$\color{applegreen}12_G$} (SWAP);

    \end{tikzpicture}

    }
    
    \caption{A gadget for comparing and re-ordering two input edges (top), and a shorthand version (bottom).
    The $\color{blue}6_{B}$, $\color{blue}12_{B}$, and $\color{applegreen}12_{G}$ edges used by \SWAP nodes 
    come from slot splitting gadgets, which are not shown because they are shared between multiple \SWAP nodes. 
    If the gadget has as many outputs as it has inputs,  $12_{O1}$ and $12_{O2}$ must be scheduled in blue or green slots and $12_{O\lor}$ and $12_{O\land}$ will match them in all cases save one: if $12_{O1}$ is green and $12_{O2}$ is blue, $12_{O\lor}$ will be blue and $12_{O\land}$ will be green.}
    \label{fig:swapper}
\end{figure}
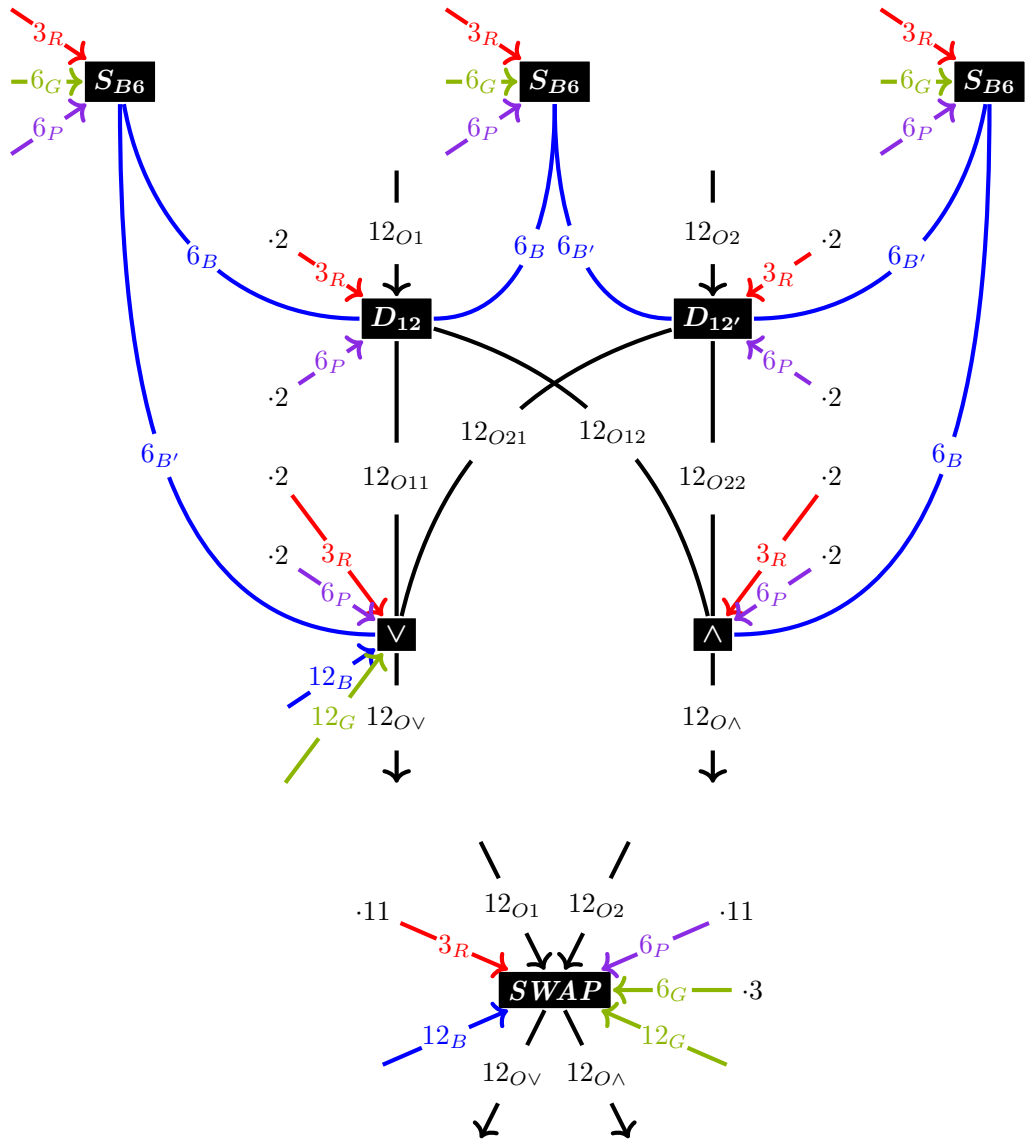%

\begin{lemma}[SWAP gadget schedules]
	\label{lem:swapper}
	Any \slotgood schedule 
	must yield a local schedule for the \SWAP gadget with the following properties: 
	\begin{itemize}
	\item Both output edges, $12_{O\lor}$ and $12_{O\land}$, are scheduled in either green or blue slots.
	\item If $12_{O\lor}$ is blue, then either $12_{O1}$,  $12_{O2}$,  or both are blue.
	\item If $12_{O\land}$ is blue, then both $12_{O1}$ and $12_{O2}$ are blue.
	\end{itemize}
\end{lemma}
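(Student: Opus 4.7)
The plan is to observe that the \SWAP gadget is assembled from two $D_{12}$ duplicators together with one $\lor$ gadget and one $\land$ gadget (whose constant inputs are furnished by the $S_{B6}$, $S_{B12}$, and $S_{G12}$ slot-splitters). Since all three properties required of the output schedule are directly about colours of $12$-edges, the natural strategy is simply to invoke the already established gadget lemmas and trace the colour correspondences through the subgadgets.

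Concretely, I would proceed in three steps. First, apply \wref{lem:split} to confirm that the $\color{blue}6_B$, $\color{blue}12_B$, and $\color{applegreen}12_G$ edges that feed the $\lor$ and $\land$ subgadgets are indeed scheduled in slots of their advertised colours, so that the hypotheses of \wref{lem:OR2} and \wref{lem:AND} are met. Second, apply \wref{lem:D12} to the two duplicators: the output edges $12_{O11},12_{O12}$ of $D_{12}$ must lie in slots of the same colour as $12_{O1}$, and similarly $12_{O21},12_{O22}$ share the colour of $12_{O2}$. In particular $12_{O11},\ldots,12_{O22}$ all lie in blue or green slots.

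Third, with the duplicator correspondences in hand, invoke the subgadget lemmas. By \wref{lem:OR2} applied to the $\lor$ gadget with inputs $12_{O11},12_{O21}$, the edge $12_{O\lor}$ is scheduled blue or green, and if it is blue then at least one of $12_{O11},12_{O21}$ is blue; translating back via $D_{12}$, this gives that $12_{O1}$ or $12_{O2}$ is blue. By \wref{lem:AND} applied to the $\land$ gadget with inputs $12_{O12},12_{O22}$, the edge $12_{O\land}$ is scheduled blue or green, and if it is blue then both $12_{O12}$ and $12_{O22}$ are blue, whence both $12_{O1}$ and $12_{O2}$ are blue. The three bullets of the lemma follow immediately.

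I do not expect a serious obstacle: the lemma is effectively a compositional statement, and each ingredient (duplication preserves colour; $\lor$ is True iff some input is True up to the green/blue ambiguity; $\land$ is True iff both inputs are True, with the same ambiguity) has been proved in isolation. The only subtlety to keep in mind is making explicit that the shared constant-edge suppliers $S_{B6}$, $S_{B12}$, $S_{G12}$ are well defined and that their outputs are consumed by the two $D_{12}$s, the $\lor$ node, and the $\land$ node in a way consistent with the preconditions of \wref{lem:D12}, \wref{lem:OR2}, and \wref{lem:AND}; this is a straightforward bookkeeping check against \wref{fig:swapper}.
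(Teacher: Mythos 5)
Your proposal is correct and follows essentially the same route as the paper's proof: invoke \wref{lem:D12} on the two duplicators to propagate the colours of $12_{O1}$ and $12_{O2}$, then apply \wref{lem:OR2} and \wref{lem:AND} to the $\lor$ and $\land$ subgadgets to obtain all three bullets. The explicit check via \wref{lem:split} that the constant edges are correctly coloured is a sensible addition, but the paper treats it as subsumed in the slot-respecting hypothesis.
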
 %

Note that this is again a weaker invariant than a true $\lor$ and $\land$;
\SWAP correctly sorts its inputs, sending green to the right and blue to the left, but it is allowed to ``swallow'' a blue value, turning it green.
Fortunately, this is good enough for our purposes, as we will go on to demonstrate.

\begin{proofof}{\wref{lem:swapper}}
	Assume a \slotgood global schedule $S$ is given, \ie, 
 all coloured edges are scheduled in slots of their respective colours.
	At $D_{12}$, the $12_{O11}$ and $12_{O12}$ edges must be scheduled in slots of the same colour as $12_{O1}$;
	similarly, at $D_{12'}$, edges $12_{O21}$ and $12_{O22}$ must be scheduled in slots of the same colour as $12_{O2}$ (both by \wref{lem:D12}). 
    By \wref{lem:OR2}, $12_{O\lor}$ must be green or blue, and can only be blue if at least one of its inputs, $12_{O11}$ and $12_{O21}$, is also blue.
	Similarly, by \wref{lem:AND}, $12_{O\land}$ must be green or blue, and further can only be blue if both of its inputs, $12_{O12}$ and $12_{O22}$, are also blue.
\end{proofof} %

\subsubsection{Sorting Network Schedules}
\label{sec:sort}

We now finally show that under the assumptions enforced on the schedule by the surrounding gadgets,
our sorting network correctly groups green edges on the right, with enough blue edges on the left to suit our purposes.

\begin{proofof}{\wpref{lem:sort}}
	(a) Assume a \slotgood schedule $S$ that schedules the leftmost $\ell$ output edges in blue slots.
	Then, starting at the outputs of the sorting network, move backwards through the sorting network, one \SWAP gadget at a time (sweeping from bottom to top in \wref{fig:worked-eg:Pdsz}).
	We will show by induction that while moving through the sorting network in this way, there are always at least $\ell$ blue edges between all channels.
    Most importantly, this holds at the inputs -- implying the claim.
	
	The inductive basis at the output level is true by assumption.
	When moving over each \SWAP gadget, we replace the two outputs of that \SWAP with its two inputs.
	By \wref{lem:swapper}, there are several different valid colour combinations for the inputs, 
	but we always have at least as many blue values among the inputs as are among the outputs:
	If both outputs are green or blue, this is obvious. If exactly one output is blue it will either be the $12_{O\lor}$ input (implying that at least one input is blue), or the $12_{O\land}$ input (implying that both inputs are blue, but the $\lor$ gadget chose a green output). 
	Hence the number of blue values cannot drop when moving over a \SWAP gadget.
	
	(b) We build the local \slotgood schedule inductively by again moving through the sorting network one \SWAP node at a time, only now we will move forwards (that is, top to bottom in \wref{fig:worked-eg:Pdsz}).
    We will also restrict $S$ to the special case $S'$, where each \SWAP node has as many blue output $12_O$ edges as it has blue input $12_O$ edges (possible due to \wref{lem:swapper}). Note that not all schedules satisfying the lemma will necessarily have this property, but it is sufficient to show that one such schedule exists.

	In this class of schedules $S'$, \SWAP gadgets behave exactly like the comparator modules in a sorting network for binary inputs: if both inputs are blue then both outputs will be blue, if both inputs are green then both outputs will be green, and if exactly one input is blue then the $12_{O\lor}$ output will be blue and the $12_{O\land}$ output will be green (all by \wref{lem:swapper}).
	By the correctness of the insertion/bubble sort network, we hence end up with an output layer with exactly $\ell$ blue edges on the left, followed by $m-\ell$ green edges on the right.
\end{proofof}%

\subsection{Tension}
\label{sec:Tension}
\wref{lem:sort} assumes that the leftmost $k$ output edges of a sorting network must be scheduled in blue slots -- we ensure this by connecting Tension gadgets (shown in \wref{fig:Tension}) to these output edges.

\begin{figure}[hbt]
    \centering
    \begin{tikzpicture}[scale=2]
        \node[graph node]     (tension)     at (1.5, -0.75){$T_e$};
        \node[invisible node] (red)         at (0.5-0.125, -0.25) {};
        \node[invisible node] (green)       at (0.5-0.125, -0.75) {};
        \node[invisible node] (purple)      at (0.5-0.125, -1.25) {};
        \node[invisible node] (in 1)        at (0.5-0.125, 0.25) {};
        \node[invisible node] (in 2)        at (1.16667-0.0625, 0.25) {};
        \node[invisible node] (in 3)        at (1.83333+0.0625, 0.25) {};
        \node[invisible node] (in 4)        at (2.5+0.125, 0.25) {};

        \path[<-] (tension) edge[std, ourBlack] node[edge descriptor] {$12_{O1}$} (in 1);
        \path[<-] (tension) edge[std, ourBlack] node[edge descriptor] {$12_{O2}$} (in 2);
        \path[<-] (tension) edge[std, ourBlack] node[edge descriptor] {$12_{O3}$} (in 3);
        \path[<-] (tension) edge[std, ourBlack] node[edge descriptor] {$12_{O4}$} (in 4);
        \path[->] (red) edge[std, ourRed] node[edge descriptor] {$\color{red}3_R$} (tension);
        \path[->] (green) edge[std, ourGreen] node[edge descriptor] {$\color{applegreen}6_G$} (tension);
        \path[->] (purple) edge[std, ourPurple] node[edge descriptor] {$\color{blue-violet}6_P$} (tension);
        
        \node[shorthand]      (tension)     at (4.5, -0.75){$T_e$};
        \node[invisible node] (red)         at (3.5-0.125, -0.25) {};
        \node[invisible node] (green)       at (3.5-0.125, -0.75) {};
        \node[invisible node] (purple)      at (3.5-0.125, -1.25) {};
        \node[invisible node] (in 1)        at (3.5-0.125, 0.25) {};
        \node[invisible node] (in 2)        at (4.16667-0.0625, 0.25) {};
        \node[invisible node] (in 3)        at (4.83333+0.0625, 0.25) {};
        \node[invisible node] (in 4)        at (5.5+0.125, 0.25) {};

        \path[<-] (tension) edge[std, ourBlack] node[edge descriptor] {$12_{O1}$} (in 1);
        \path[<-] (tension) edge[std, ourBlack] node[edge descriptor] {$12_{O2}$} (in 2);
        \path[<-] (tension) edge[std, ourBlack] node[edge descriptor] {$12_{O3}$} (in 3);
        \path[<-] (tension) edge[std, ourBlack] node[edge descriptor] {$12_{O4}$} (in 4);
        \path[->] (red) edge[std, ourRed] node[edge descriptor] {$\color{red}3_R$} (tension);
        \path[->] (green) edge[std, ourGreen] node[edge descriptor] {$\color{applegreen}6_G$} (tension);
        \path[->] (purple) edge[std, ourPurple] node[edge descriptor] {$\color{blue-violet}6_P$} (tension);
    \end{tikzpicture}
    \caption{A gadget (left) which applies tension to four inputs, ensuring that either $12_{O1}$, $12_{O2}$, $12_{O3}$, and $12_{O4}$ are scheduled in blue slots or no schedule can be found for $T_e$. To apply tension to less than 4 inputs, connect any unneeded inputs to its own pendant node.}
    \label{fig:Tension}
\end{figure}
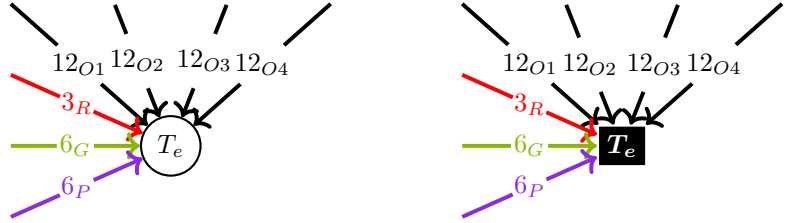 %
\begin{lemma}[Tension gadget schedules]
	\label{lem:tension}
	Any \slotgood schedule must yield a local schedule for a tension gadget $T_e$ where
	each $12_O$ edge must be scheduled in blue slots.
\end{lemma}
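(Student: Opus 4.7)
The plan is a straightforward density-counting argument, in the same spirit as the proofs of \wref{lem:D6} and \wref{lem:D12}, but simpler because every slot at $T_e$ is forced.

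First I would compute the local density of $T_e$. According to \wref{fig:Tension}, $T_e$ has seven incident edges, namely $[{\color{red}3_R}, {\color{applegreen}6_G}, {\color{blue-violet}6_P}, 12_{O1}, 12_{O2}, 12_{O3}, 12_{O4}]$, giving local density
\[
	D \wrel= \frac13 + \frac16 + \frac16 + 4\cdot\frac1{12} \wrel= 1 .
\]
Since $D=1$, any valid schedule must use \emph{every} day at $T_e$ and must schedule each incident edge of frequency $f$ exactly once in every window of $f$ consecutive days.

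Next I would fix an arbitrary window of $12$ consecutive days and count slots by colour according to \wref{def:slots}: there are $4$ red, $4$ blue, $2$ green, and $2$ purple slots. In any \slotgood schedule, the constant inputs ${\color{red}3_R}$, ${\color{applegreen}6_G}$, and ${\color{blue-violet}6_P}$ must be scheduled in slots of the matching colour; by the density argument they occupy \emph{exactly} $4$ red, $2$ green, and $2$ purple slots, which accounts for all red, green, and purple slots in the window. The only slots left unoccupied are therefore the $4$ blue ones.

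Finally, each $12_{Oi}$ edge must appear exactly once in the 12-day window, contributing $4$ placements in total. Since the only available slots are the $4$ blue slots, each $12_{Oi}$ must be scheduled in a blue slot, which is exactly the conclusion of the lemma. There is no real obstacle here; the whole argument is forced by counting.
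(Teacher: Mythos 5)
Your proof is correct and is essentially the same argument as the paper's: both rely on the local density being $1$ together with the colour constraints on the constant edges ${\color{red}3_R}$, ${\color{applegreen}6_G}$, ${\color{blue-violet}6_P}$ to force all non-blue slots, leaving only blue slots for the $12_{Oi}$ edges. The paper writes out the resulting partial schedule explicitly while you count slots by colour, but that is a cosmetic difference.
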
%
\begin{proof}
    The single node of each tension gadget $T_e$, as shown in \wref{fig:Tension}, has tasks
    [$\color{red}3_R$,
    $\color{applegreen}6_G$,
    $\color{blue-violet}6_P$,
    $12_{O1}, 12_{O2}, 12_{O3}, 12_{O4}$] 
    and density $D=1$, so each task with frequency $f$ must appear exactly once in every $f$ day period. 
    Further, the $\color{red}3_R$, $\color{applegreen}6_G$, and $\color{blue-violet}6_P$ edges are indirectly connected to the True Clock such that they must be scheduled in slots of their respective colours. Partial schedules for $T_e$ nodes must therefore be of the form
    [$\color{red}3_R$,
    $\color{blue}\SLOT$,
    $\color{applegreen}6_G$,
    $\color{red}3_R$,
    $\color{blue}\SLOT$,
    $\color{blue-violet}6_P$,
    $\color{red}3_R$,
    $\color{blue}\SLOT$,
    $\color{applegreen}6_G$,
    $\color{red}3_R$,
    $\color{blue}\SLOT$,
    $\color{blue-violet}6_P$].
    All empty slots in schedules of this form are blue, so the remaining $12_{O1}, 12_{O2}, 12_{O3}, 12_{O4}$ edges must be scheduled in blue slots.
\end{proof}%

\subsection{Correctness Proof of Reduction}

With these preparations we can, at long last, prove \wref{lem:dps=3sat}, and hence complete the proof of \wref{thm:sat-inapprox}.

\begin{proofof}{\wpref{lem:dps=3sat}}
	Consider a polycule $\mathcal P_{d\varphi k}$ built from some 3-CNF formula $\varphi$ using the algorithm defined by \wref{def:gadgets-assemble}. 
	
    First, suppose that there is a variable assignment $v:X\to\{\mathrm{True},\mathrm{False}\}$ such that clauses $c_{i_1},\ldots,c_{i_k}$ evaluate to True under $v$ ($1 \le i_1 < i_2 <\cdots<i_k \le m$).
	We construct a \slotgood schedule $S$ for $\mathcal P_{d\varphi k}$ as follows:
	At the variable gadget for each $x_i\in X$, schedule $3_{iR}$ in red slots if $v(x_i) = \mathrm{True}$ and in the blue slots otherwise.
	This fixes a schedule for all edges in the variable and duplication layers (by \wref{lem:D3} and \wref{lem:D6}).
    In the clause layer, we schedule the $12_O$ output edges of the \OR gadgets for clauses $c_{i_j}$, $j=1,\ldots,k$, in blue slots, and schedule all other $12_O$ output edges in green slots 
    (Noting that, by \wref{lem:OR}, this yields a valid schedule for all \OR gadgets).
	In the sorting layer, we now have at least $k$ blue inputs and at most $m-k$ green inputs.
	By \wref{lem:sort-exists-schedule}, we can extend this schedule to the sorting layer such that
	the leftmost $k$ output edges of the sorting layer are scheduled in blue slots.
	This then also yields a valid schedule for the Tension gadgets (\wref{lem:tension}).
	Overall, this shows that a schedule $S$ for $\mathcal P_{d\varphi k}$ exists.
	
	Now assume that we are given a schedule $S$ for $\mathcal P_{d\varphi k}$.
	By applying \wref{lem:variable-6-colour} to the True Clock, we can assign coloured slots to $S$. It then follows from the construction of $\mathcal P_{d\varphi k}$ that $S$ must be \slotgood.
	Set $v(x_i) = \mathrm{True}$ if $S$ schedules $3_{iR}$ in red slots and $v(x_i) = \mathrm{False}$ otherwise.
    Next, show that $v$ satisfies at least $k$ clauses in $\varphi$:
	By \wref{lem:tension}, the $k$ leftmost output edges of the sorting layer must all be scheduled in blue slots.
	By \wref{lem:sort-schedule-properties}, this means that there are (at least) $k$ output edges of the clause layer that are scheduled in blue slots, say those for clauses $c_{i_1},\ldots,c_{i_k}$ for some $1 \le i_1 < i_2 <\cdots<i_k \le m$.
	Considering this along with \wref{lem:OR} implies that for each $c_{i_j}$, at least one input is scheduled in a red slot. By the definition of $v$, this means that clauses $c_{i_1},\ldots,c_{i_k}$ all evaluate to True under $v$, and the claim follows.

	It remains to argue that our reduction can be realised in polynomial time.
	The size of the polycule $\mathcal P_{d\varphi k}$ is clearly polynomial in the size of the formula, with the quadratic sorting network contributing the most persons. 
	All other gadgets, including the \SWAP nodes composing the sorting network, have constant size and thus are easy to implement in polynomial time.
\end{proofof}

\plaincenter{{\bfseries%
	\textcolor{red}{*}%
	\qquad%
	\textcolor{blue}{*}%
	\qquad%
	\textcolor{applegreen}{*}%
	\qquad%
	\textcolor{red}{*}%
	\qquad%
	\textcolor{blue}{*}%
	\qquad%
	\textcolor{blue-violet}{*}%
}}

\section{Computational Complexity}
\label{sec:computational-complexity}

One proof of the NP-hardness of 
the Decision Poly Scheduling (DPS) Problem is that it contains Pinwheel Scheduling as a special case, an NP-hard problem~\cite{pinwheelNPhard}.
We show in \wref{sec:unweighted-poly-scheduling} that OPS also contains the Chromatic Index problem as a special case, which gives another proof of the NP-hardness of DPS using the conversion in \ref{lem:ops-to-dps}.
Since all good things come in threes, our inapproximability result in \wref{sec:inapproximability} gives a third independent proof of NP-hardness by reducing 3SAT to DPS.

Upper bounds on the the complexity of DPS are much less clear.
Similar to other periodic scheduling problems, the characterizing the computational complexity 
of Poly Scheduling is complicated by the fact that there are feasible instances that require an exponentially large schedule.
It is therefore not clear whether Decision Poly Scheduling is in NP since no succinct Yes-certificates are known;
this is unknown even for the more restricted Pinwheel Scheduling Problem~\cite{Kawamura2023}.

The following simple algorithm shows that DPS is at least in PSPACE (see also \cite{GasieniecSmithWild2022}, \cite{KawamuraSoejima2020}):
Given the polycule $\mathcal P_d = (P,R,f)$ with $|P|=n$ and $|R|=m$, 
construct the \emph{configuration graph} $\mathcal G_c = (V,E)$, where $V$ consists of ``countdown vectors'' listing for each edge $e$ how many days remain before $e$ has to be scheduled again. $E$ has an edge for every maximal matching $M$ in $\mathcal M(P,R)$, and leads to a successor configuration where all $e\in M$ have their urgency reset to $f(e)$ and all $e\notin M$ have their countdown decremented.
Feasible schedules for $\mathcal P_d$ correspond to infinite walks in the finite $\mathcal G_c$, and hence must contain a cycle.
Conversely, any cycle forms a valid periodic schedule.
Our algorithm for DPS thus checks in time $O(|V|+|E|)$ whether $\mathcal G_c$ contains a cycle.

The configuration graph $\mathcal G_c$ has single exponential size:
$V = \{ (u_e)_{e\in R} : u_e \in [0..f(e)]\}$ and $E$ has an edge for every matching in $(P,R)$.
So $|E| \le |V| \cdot 2^m$ (since we have at most $2^{|E|}$ matchings) and
$|V| \le \prod_{e\in R} f(e)$.
To further bound this, we use that all $f(e)$ need to be encoded explicitly in binary in the input.
$\prod_{e\in R} f(e) \le \prod_{e\in R} 2^{|f_e|} = 
2^{\sum |f_e|} 
\le 2^N$ for $N$ 
the size of the encoding of the input.

To obtain a PSPACE algorithm, we use the polylog-space $s$-$t$-connectivity algorithm (using Savich's Theorem on the NL-algorithm that guesses the next vertex in the path) on $\mathcal G_c$, computing the required part of the graph on-the-fly when queried; this yields overall polynomial space.

\section{Unweighted Poly Scheduling \& Edge Coloring}
\label{sec:unweighted-poly-scheduling}

Given an OPS instance $\mathcal P_o = (P,R,g)$, one can always obtain a feasible schedule from a proper \emph{edge colouring} $c:E\to[C]$ of the graph $(P,R)$: any round-robin schedule of the $C$ colours is a valid schedule for $\mathcal P_o$, and the number of colours becomes the separation between visits. More formally,
we can define a schedule $S$ via $S(t) = \{e\in R : c(e) \equiv t \pmod C\}$.
An example is shown in \wref{fig:unweighted}.

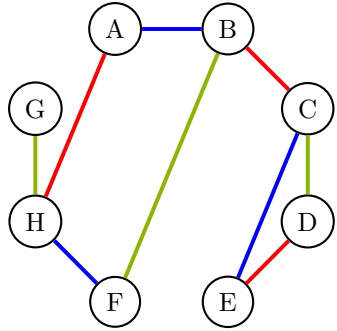
\begin{figure}[htb]
    \centering\begin{tikzpicture}[scale=1.5]
        \node[graph node] (A) at (0, 0) {A};
        \node[graph node] (B) at (1, 0) {B};
        \node[graph node] (C) at (1.707, -0.707) {C};
        \node[graph node] (D) at (1.707, -1.707) {D};
        \node[graph node] (E) at (1, -2.42) {E};
        \node[graph node] (F) at (0, -2.42) {F};
        \node[graph node] (G) at (-0.707, -0.707) {G};
        \node[graph node] (H) at (-0.707, -1.707) {H};

        \path[-] (A) edge[std, C2] (B);
        \path[-] (B) edge[std, C3] (C);
        \path[-] (H) edge[std, C1] (G);
        \path[-] (A) edge[std, C3] (H);
        \path[-] (C) edge[std, C2] (E);
        \path[-] (C) edge[std, C1] (D);
        \path[-] (F) edge[std, C2] (H);
        \path[-] (F) edge[std, C1] (B);
        \path[-] (D) edge[std, C3] (E);
        
    \end{tikzpicture}
	\caption{%
		An unweighted polyamorous scheduling instance (that is, an OPS instance where all edges have growth rate $1$). Edge colours show one optimal schedule, where every edge is visited exactly every three days: 
		[\color{red}3\color{black},
		\color{blue}3\color{black},
		\color{applegreen}3\color{black}], 
		\ie, all red edges are scheduled on days $t$ with $t\equiv 0 \pmod 3$, all blue edges when $t\equiv 1 \pmod 3$ and green edges for $t\equiv 2 \pmod 3$.
	}
	\label{fig:unweighted}
\end{figure}

Such a schedule can yield an arbitrarily bad solution to general instances of $\mathcal P_o$, but it gives optimal solutions for a special case:
The non-hierarchical polycule $\mathcal P_u$, which is an OPS polycule where all growth rates are $g_{i,j} = 1$ (\ie, an unweighted graph).

Recall that any graph with maximal degree $\Delta$ can be edge-coloured with at most $\Delta + 1$ colours and clearly needs at least $\Delta$ colours. 

\begin{proposition}[Unweighted OPS = edge coloring]
\label{pro:unweighted}
	An unweighted OPS problem admits a schedule with heat $h$ if and only if the corresponding graph
	is $h$-edge-colourable.
\end{proposition}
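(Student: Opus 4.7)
The claim asserts that for an unweighted OPS instance, the optimal heat equals the chromatic index of the underlying graph. My plan is to establish both inequalities separately, each via an explicit construction.

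For the ``if'' direction, suppose $(P,R)$ admits a proper edge colouring $c:R\to[h]$ using at most $h$ colours. I will build the round-robin schedule $S(t) = \{e\in R : c(e) \equiv t \pmod h\}$. Since every colour class of a proper edge colouring is a matching, each $S(t)$ is a matching, so $S$ is valid. Each edge $e$ appears exactly on the days $t \equiv c(e) \pmod h$, so $r_S(e) = h$ for every $e$; combined with $g(e)=1$ this gives $h_e(S) = h$ and $h(S) = h$. (If strictly fewer than $h$ colours suffice, one obtains a schedule with heat $<h$, which is still a schedule with heat $\le h$.)

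For the ``only if'' direction, suppose $S$ is a valid schedule with $h(S) \le h$. In the unweighted case, $h_e(S) \le h$ forces $(d+1) \le h$ for any $d$ for which some window $S(t)\cup\cdots\cup S(t+d-1)$ omits $e$; equivalently, every window of $h$ consecutive days contains every edge at least once. In particular, inspecting the window $t=0,1,\ldots,h-1$, every $e\in R$ lies in some $S(t)$ with $t\in[0,h-1]$. Define $c(e) = \min\{\,t\in[0,h-1] : e \in S(t)\,\}$. Then $c:R\to[0,h-1]$ has each colour class contained in the matching $S(t)$, hence each colour class is itself a matching, so $c$ is a proper $h$-edge-colouring.

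I do not anticipate any real obstacle. The only mild subtlety is unwinding the supremum in \wref{def:OPS} to obtain the clean window characterisation ``every $h$ consecutive days cover every edge'' in the unweighted case; once that is observed, both directions are immediate constructions. Note that by Vizing's theorem, combining the two directions recovers $\Delta \le h^* \le \Delta+1$ for the optimal heat $h^*$ of an unweighted polycule, consistent with \wref{thm:color-approx} applied with $g_{\min}=g_{\max}=1$.
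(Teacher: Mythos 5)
Your proof is correct and takes essentially the same approach as the paper's: a round-robin over colour classes gives the schedule from a colouring, and conversely the first $h$ days of any heat-$h$ schedule yield a proper $h$-edge-colouring (assigning each edge, say, the first day it appears). The only difference is cosmetic — you spell out the $\min$ choice of colour and unwind the supremum explicitly — which is a reasonable way to make the paper's terse argument airtight.
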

\begin{proof}
First note that any $k$-edge-colouring immediately corresponds to a schedule that visits every edge every $k$ days, since we can schedule all edges $e$ with $c(e) = i$ on days $t\equiv i \pmod C$.
Moreover, any schedule with height $h$ must visit every edge at least once within the first $h$ days (otherwise it would grow to desire $>h\cdot 1$). We can therefore assign $h$ colours according to these first $h$ days of the schedule; some edges might receive more than one colour, but we can use any of these and retain a valid colouring using $h$ colours.
\end{proof}

Since it is NP-complete to decide whether a graph has chromatic index~$\Delta$ even when the graph is 3-regular~\cite{Holyer1981}, unweighted Poly Scheduling is NP-hard.
This provides a second restricted special case of the problem that is NP-hard.
This observation also gives us the inapproximability result stated in \wref{thm:color-inapprox}:

\begin{proofof}{\wpref{thm:color-inapprox}}
	Assume that there is a polynomial-time algorithm $A$ that achieves an approximation ratio of $\frac43-\varepsilon$ for some $\varepsilon>0$.
	Given an input $(V,E)$ to the 3-Regular Chromatic Index Problem (\ie, given a 3-regular graph, decide whether $\chi_1(G)=3$), we can apply $A$ to $(V,E,g)$, setting $g(e) = 1$ for all $e\in E$. 
	By \wref{pro:unweighted}, $A$ finds an edge colouring with $c \le (\frac{\Delta+1}{\Delta}-\varepsilon) \cdot \chi_1(G)$ colours.
	If $\chi_1(G) = \Delta$, then $c \le \Delta+1 - \varepsilon \Delta < \Delta + 1$, so $c=\Delta$; if $\chi_1(G) = \Delta+1$, then $c\ge \Delta+1$.
	Comparing $c$ to $\Delta$ thus determines $\chi_1(G)$ exactly in polynomial time; in particular, for every 3-regular graph, this decides whether $\chi_1(G)=3$.
	Since 3-Regular Chromatic Index is NP-complete, it follows that P $=$ NP.
\end{proofof}

We close this section with the remark that there are weighted DPS instances where any feasible schedule must ``multi-colour'' some edges, including the polycule shown in  \wref{fig:pentagon}.
For the general problem, we thus cannot restrict our attention to edge colourings (though they may be a valuable tool for future work).

\begin{figure}[hbt]
    \centering
    \begin{tikzpicture}[
    	scale=.8,
    	edge descriptor/.append style={inner sep=1pt},
    ]
        \node[graph node] (A) at (0, 6.87) {A};
        \node[graph node] (B) at (2.43,5.1) {B};
        \node[graph node] (C) at (1.5,2.25) {C};
        \node[graph node] (D) at (-1.5,2.25) {D};
        \node[graph node] (E) at (-2.43,5.1) {E};       
    
        \path[-] (A) edge[std, C1] node[edge descriptor] {$3$} (B);
        \path[-] (B) edge[std, C2] node[edge descriptor] {$3$} (C);
        \path[-] (E) edge[std, C2] node[edge descriptor] {$3$} (D);
        \path[-] (A) edge[std, C3] node[edge descriptor] {$3$} (E);

        \path[-] (C) edge[2 colour 1, C1] node[edge descriptor] {$2$} (D);
        \path[-] (C) edge[2 colour 2, C3] node[edge descriptor] {$2$} (D);
    \end{tikzpicture}
    \caption{A discrete polyamorous scheduling instance which is solvable only by assigning multiple colours to the CD edge}
    \label{fig:pentagon}
\end{figure}
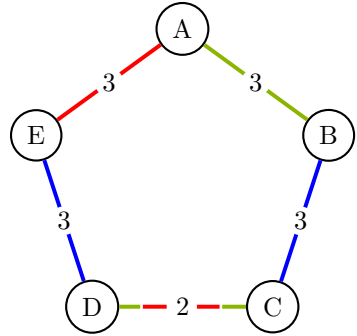

\section{Approximation Algorithms}
\label{sec:approximations}

In this section, we present two efficient polynomial-time approximation algorithms for Poly Scheduling,
thereby proving \wref[Theorems]{thm:color-approx} and~\ref{thm:layer-approx}.
Throughout this section, we assume a fixed instance $\mathcal P_o = (P,R,g)$ of Optimisation Polyamorous Scheduling (OPS) is given.

\subsection{Lower Bounds}

We first collect a few simple lower bounds used in the analysis later; 
note that \wref{sec:fractional} has further lower bounds.

\begin{lemma}[Simple lower bound]
\label{lem:lower-bound-trivial}
    Given an OPS instance $\mathcal P_o = (P,R,g)$, set $g_{\min} =\min_{e\in R} g(e)$,  $g_{\max}=\max_{e\in R} g(e)$,
    and $\Delta = \max_{p\in P} \deg(p)$.
    Any periodic schedule for $\mathcal P_o$ has heat $h\ge \max \{\Delta \cdot g_{\min},\, g_{\max}\}$.
\end{lemma}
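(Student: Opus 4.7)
My plan is to prove the two bounds $h \ge g_{\max}$ and $h \ge \Delta \cdot g_{\min}$ separately, since their maximum gives the claim. Throughout, I may assume $h(S)$ is finite (otherwise the bound holds trivially), which in particular means every edge is scheduled infinitely often and has finite recurrence time $r_S(e)$. I can also assume $S$ is periodic with some period $T$ without loss of generality, as observed in \wref{sec:preliminaries}.

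The bound $h(S) \ge g_{\max}$ is immediate from the definition of heat: for the edge $e^{\star}$ realising $g(e^{\star}) = g_{\max}$, we have $h_{e^{\star}}(S) \ge g(e^{\star})$ by inspecting both cases in the definition of $h_e(S)$ (either the ``otherwise'' branch contributes $g(e^{\star})$, or there is a gap of length $d \ge 1$ contributing $(d+1)g(e^{\star}) \ge g(e^{\star})$). Hence $h(S) \ge h_{e^{\star}}(S) \ge g_{\max}$.

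The more interesting bound is $h(S) \ge \Delta \cdot g_{\min}$. Pick a vertex $p \in P$ of maximum degree $\Delta$, and let $n_e$ denote the number of times edge $e$ is scheduled within one period of length $T$. A standard pigeonhole/averaging argument on the cyclic gaps of edge $e$ (which sum to $T$ and are at most $n_e$ in number) gives $r_S(e) \ge T / n_e$, equivalently $n_e \ge T / r_S(e)$. Since $p$ participates in at most one meeting per day (matching constraint), summing over edges incident to~$p$ yields
\[
    \sum_{e \ni p} \frac{T}{r_S(e)} \wwrel\le \sum_{e \ni p} n_e \wwrel\le T,
    \quad\text{hence}\quad
    \sum_{e \ni p} \frac{1}{r_S(e)} \wwrel\le 1.
\]
There are $\Delta$ terms in this sum, so by averaging, some edge $e^{\dagger} \ni p$ satisfies $1/r_S(e^{\dagger}) \le 1/\Delta$, that is, $r_S(e^{\dagger}) \ge \Delta$. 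Consequently
$h(S) \ge h_{e^{\dagger}}(S) \ge r_S(e^{\dagger}) \cdot g(e^{\dagger}) \ge \Delta \cdot g_{\min}$, which completes the proof.

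There is no real obstacle here; the only subtle point is the clean pigeonhole step converting a local ``one meeting per day'' constraint at a high-degree vertex into a global density inequality $\sum_{e \ni p} 1/r_S(e) \le 1$. This inequality is in fact the seed of the fractional/dual viewpoint developed in \wref{sec:fractional}, so it is worth stating it in this form rather than collapsing it directly into the weaker averaging consequence used above.
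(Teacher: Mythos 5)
Your proof is correct and takes a more elementary route than the paper's. For the $\Delta\cdot g_{\min}$ part, the paper argues through the edge-coloring characterization of \wref{pro:unweighted}: since $\chi_1(P,R)\ge\Delta$, any periodic schedule must give some edge recurrence time at least $\chi_1\ge\Delta$, and that edge has growth rate at least $g_{\min}$. You instead work locally at a single maximum-degree vertex $p$, count occurrences $n_e$ over one period $T$, and derive the density inequality $\sum_{e\ni p}1/r_S(e)\le 1$ before averaging. The two proofs ultimately rest on the same observation --- ``one meeting per day at $p$'' --- but yours is self-contained (no appeal to the chromatic-index proposition) and, as you point out, produces the local density inequality as a byproduct, which is a sharper structural statement that indeed foreshadows the dual-LP bound in \wref{sec:fractional}. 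Neither route is stronger for the lemma itself. One small caveat: the step $n_e\ge T/r_S(e)$ implicitly assumes $r_S(e)\ge 1$, and the inequality $\sum_{e\ni p}1/r_S(e)\le 1$ degenerates if some $e\ni p$ is scheduled on every day of the period ($n_e=T$, $r_S(e)=0$). This is harmless for the conclusion --- for $\Delta\ge 2$ such a schedule starves $p$'s remaining edges and has infinite heat, and for $\Delta=1$ the bound collapses to $g_{\max}$, which you already proved --- but it is worth flagging if you intend to quote the density inequality as a standalone fact later.
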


\begin{proof}
The chromatic number $\chi_1$ of the unweighted graph $(P,R)$ is $\chi_1 \in \{\Delta, \Delta+1\}$.
This means that under any periodic schedule, some edge desires will grow to at least to $\chi_1 \cdot g_{\min} \ge \Delta \cdot g_{\min}$, since we cannot schedule any two edges incident to a degree-$\Delta$ node on the same day.
Moreover, we cannot prevent the weight-$g_{\max}$ edge from growing to heat~$g_{\max}$.
\end{proof}

A second observation is that the lower bound for any subset of the problem is also a lower bound for the problem as a whole:

\begin{lemma}[Subset bound]
\label{lem:lower-bound-subset}
	Given two OPS instances $\mathcal P_o = (P,R,g)$ and $\mathcal P'_o = (P,R',g')$ with $R' \subseteq R$ and $g(e) = g'(e)$ for all $e\in R'$, \ie,
	$\mathcal P'_o$ results from $\mathcal P_o$ by dropping some edges.
	Assume further that any schedule for $\mathcal P'_o$ has heat at least $h^*$.
	Then, any schedule for $\mathcal P_o$ also has heat at least $h^*$.
\end{lemma}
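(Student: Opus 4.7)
The plan is to take any schedule $S$ for the larger instance $\mathcal P_o$ and construct a schedule $S'$ for $\mathcal P'_o$ by simply deleting the dropped edges day by day. Formally, define $S' : \N_0 \to 2^{R'}$ via $S'(t) = S(t) \cap R'$. Since $S(t)$ is a matching in $(P,R)$ and matchings are closed under taking subsets, $S'(t)$ is a matching in $(P,R')$, so $S'$ is a valid schedule for $\mathcal P'_o$.

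Next I would observe that restricting to $R'$ preserves occurrences of every surviving edge: for each $e \in R'$ and each $t\in \N_0$, we have $e \in S'(t) \iff e \in S(t)$. Consequently the recurrence time of $e$ under $S'$ equals its recurrence time under $S$, and in particular $h_e(S') = h_e(S)$ for every $e\in R'$ (using $g'(e) = g(e)$ on $R'$).

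Taking the maximum over $R'$ on both sides and using $R' \subseteq R$ gives
\[
    h(S') \wrel= \max_{e\in R'} h_e(S') \wrel= \max_{e\in R'} h_e(S) \wrel\le \max_{e\in R} h_e(S) \wrel= h(S).
\]
By the hypothesis on $\mathcal P'_o$, the left-hand side is at least $h^*$, so $h(S) \ge h(S') \ge h^*$. Since $S$ was an arbitrary schedule for $\mathcal P_o$, the claim follows.

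There is no real obstacle here: the only thing to verify carefully is that heat is monotone under taking subsets of the edge set, which is immediate because dropping edges cannot shorten any surviving edge's waiting time and in fact leaves it unchanged. This is essentially the same monotonicity principle that allows one to always extend schedules to maximal matchings (as noted in \wref{sec:preliminaries}).
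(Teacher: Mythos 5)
Your proof is correct and follows essentially the same route as the paper: restrict the schedule to $R'$ by dropping edges, observe that every surviving edge's recurrence time (and hence its contribution to the heat) is unchanged, and conclude via the hypothesis on $\mathcal P'_o$. The paper frames it as a contradiction whereas you argue directly, but the substance is identical.
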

\begin{proof}
Suppose there is a schedule $S$ for $\mathcal P_o$ of heat $h'< h$.
We obtain a schedule $S'$ for $\mathcal P'_o$ by dropping all edges $e\notin R'$.
(The resulting schedule may have empty days.)
By construction, when using $S'$ to schedule $\mathcal P'_o$, all edges in $R'$ will grow to the same heat as in $\mathcal P_o$ under $S$, and hence also to heat $h'<h$.
\end{proof}

\subsection{Approximation for Almost Equal Growth Rates}
\label{sec:approx-similar-growth-rates}

We first focus on a special case of OPS instances with ``almost equal weights'', which is used as base for our main algorithm. Let the edge weights satisfy $g_{\min} \le g(e) \le g_{\max}$ for all $e\in R$.
We will show that scheduling a proper edge colouring round-robin gives a $\frac{\Delta+1}{\Delta}\cdot\frac{g_{\max}}{g_{\min}}$ approximation algorithm, establishing \wref{thm:color-approx}.

\begin{proofof}{\wpref{thm:color-approx}}
We compute a proper edge colouring for $(P,R)$ with $\Delta+1$ colours using the algorithm from~\cite{deltaPlusOneEdgeColouringAlg} and schedule these $\Delta+1$ matchings in a round-robin schedule.  
No edge desire will grow higher than $(\Delta+1)\cdot g_{\max}$ in this schedule.
\wref{lem:lower-bound-trivial} shows that $\mathrm{OPT} \ge \max\{\Delta \cdot g_{\min}, g_{\max}\}$.
The edge-colouring schedule is thus never more than a 
$\min\{ \frac{\Delta+1}{\Delta}\cdot\frac{g_{\max}}{g_{\min}} ,\; \Delta + 1 \}$ 
factor worse than OPT.
\end{proofof}

\subsection{Layering Algorithm}

The colouring-based algorithm from \wref{thm:color-approx} can be arbitrarily bad if desire growth rates are vastly different and $\Delta$ is large. 
For these cases, a more sophisticated algorithm achieves a much better guarantee (\wref{thm:layer-approx}).
The algorithm consists of 3 steps:
\begin{enumerate}
\item breaking the graph into layers (by edge growth rates), 
\item solving each layer using \wref{thm:color-approx}, and 
\item interleaving the layer schedules into an overall schedule.
\end{enumerate}

Let $L$ be a parameter to be chosen later.
We define layers of $\mathcal P_o = (P,R,g)$ as follows.
For $i = 0,\ldots, L-1$, set $\mathcal P_i = (P,R_i,g)$ where 
\[
    R_i \wrel= \biggl\{  e\in R \rel: \frac{g_{\max}}{2^{i+1}} < g(e) \le \frac{g_{\max}}{2^i} \biggr\}.
\]
Moreover, $\mathcal P_L = (P,R_L,g)$ with 
$R_L = \bigl\{  e\in R : g(e) \le \frac{g_{\max}}{2^L} \bigr\}$.

Denote by $\Delta_i$, for $i=0,\ldots,L$, the maximal degree in $(P,R_i)$. 
Let $S_i$ be the round-robin-$(\Delta_i+1)$-colouring schedule from \wref{thm:color-approx} applied on the OPS instance $\mathcal P_i$.
If run in isolation on $\mathcal P_i$, schedule $S_i$ has heat $h_i \le (\Delta_i+1)g_{\max}/2^{i} \le (\Delta+1)g_{\max}/2^i$ by the same argument as in \wref{sec:approx-similar-growth-rates}.
Moreover, for $i<L$, $S_i$ is a $2 \frac{\Delta_i+1}{\Delta_i}$-approximation (on $\mathcal P_i$ in isolation);
for $i=L$, we can only guarantee a $(\Delta_L+1)$-approximation.

To obtain an overall schedule $S$ for $\mathcal P$, we schedule the $L+1$ layers in round-robin fashion, and within each layer's allocated days, we advance through its schedule as before, \ie,
$S(t) = S_{(t \bmod (L+1))} \bigl(\lfloor t/(L+1) \rfloor\bigr)$.
Any advance in layer $i$ is now delayed by a factor $(L+1)$. 
Hence $S$ achieves heat at most
\[
		\overline h 
	\wwrel= 
		\max_{i\in[0..L]} (L+1)\cdot h_i
	\wwrel\le
		\max_{i\in[0..L]} (L+1) (\Delta_i+1) \cdot \frac{g_{\max}}{2^{i}}
\]
Using \wref{lem:lower-bound-subset} on the layers and \wref{lem:lower-bound-trivial}, 
we obtain a lower bound for OPT of
\[
    \underline h \wwrel= 
    	\max \left\{ 
    		\max_{i\in[0..L-1]} \Delta_i \cdot \frac{g_{\max}}{2^{i+1}},\;
    		g_{\max}
    		\right\}
\]

We now distinguish two cases for whether the maximum in $\overline h$ is attained for an $i < L$ or for $i=L$.
First suppose $\overline h = (L+1)(\Delta_i+1) g_{\max}/2^i$ for some $i < L$.
Since we also have $\underline h \ge \Delta_i \cdot g_{\max}/2^{i+1}$,
we obtain an approximation ratio of $2(L+1)\frac{\Delta_i+1}{\Delta_i}\le 3(L+1)$ overall in this case.

For the other case, namely $\overline h = (L+1)(\Delta_L+1)g_{\max}/2^L > (L+1)\cdot (\Delta_i+1) g_{\max}/2^i$ for all $i<L$,
we do not have lower bounds on the edge growth rates.
But we still know $\underline h \ge g_{\max}$, so we obtain a $(L+1)(\Delta_L+1)/2^L$-approximation
overall in this case.

Equating the two approximation ratios (and using $\Delta_L \le \Delta$) suggests to choose
$L = \lg(\Delta_L+1) - \lg 3$, which gives an overall approximation ratio
of 
$3 \lg(\Delta_L+1) - 3\lg(3/2) \le 3 \lg(\Delta+1) \le  3\lg n$.
This concludes the proof of \wref{thm:layer-approx}.

\section{Fractional Poly Scheduling}
\label{sec:fractional}

In this section, we generalize the notion of density from Pinwheel Scheduling for the Polyamorous Scheduling Problem. 
For that, we consider the dual of the linear program corresponding to a fractional variant of Poly Scheduling.

\subsection{Linear Programs for Poly Scheduling}

In the fractional Poly Scheduling problem, instead of committing to a single matching $M$ in $(P,R)$ each day, we are allowed to devote an arbitrary \emph{fraction} $y_M \in [0,1]$ of our day to $M$, but then switch to other matchings without cost or delay for the rest of the day (a simple form of scheduling with preemption).
The heat of a fractional schedule is again defined as $\max_{e\in R} r(e) g(e)$, but the recurrence time $r(e)$ now is the maximal time in $S$ before the fraction of days devoted to matchings containing $e$ sum to at least~1. (For a non-preemptive schedule with one matching per day, this coincides with the definition from \wref{sec:preliminaries}.)

Schedules for the fractional problem are substantially easier because there is no need to have different fractions $y_M$ for different days: the schedule obtained by always using the average fraction of time spent on each matching yields the same recurrence times.
We can therefore assume without loss of generality that our schedule is given by $S = S(\{y_M\}_{M\in \mathcal M})$, with $y_M \in [0,1]$ and $\sum_{M\in \mathcal M} y_M = 1$. $S$ schedules the matchings in some arbitrary fixed order, each day devoting the same $y_M$ fraction of the day to $M$.
Then, recurrence times are simply given by $r_{S}(e) = 1\big/ \sum_{M\in\mathcal M : e\in M} y_M$.

With these simplifications, we can state the fractional relaxation of Optimisation Poly Scheduling instance $\mathcal P_o = (P,R,g)$
as an optimisation problem as follows:
\begin{alignat}{4}
	&\min \quad \mathrlap{\bar h} \\
	&\text{s.\,t. } 
	&	\sum_{{M \in \mathcal M}} y_M &\wrel\le 1 \\
	&&	\frac1{\sum_{M \in \mathcal M: e \in M} y_M} \cdot g_e &\wrel\le \bar h  &\qquad& \forall e\in R\\
	&&	y_M & \wrel\in [0,1] &\qquad& \forall M \in \mathcal M
\end{alignat}
Substituting $\bar h=1/\ell$, this is equivalent to the following linear program (LP):
\begin{alignat}{4}
\label{eq:fractional-LP}
	&\max \quad \mathrlap{\ell} \\
	&\text{s.\,t. } 
	&	\sum_{{M \in \mathcal M}} y_M &\wrel\le 1 \\
	&&	\frac1{g_e} \sum_{M \in \mathcal M: e \in M} y_M &\wrel\ge \ell  &\qquad& \forall e\in R\\
	&&	y_M & \wrel\ge 0 &\qquad& \forall M \in \mathcal M
\end{alignat}
The optimal objective value $\ell^*$ of this LP gives $\bar h^* = 1/\ell^*$, the optimal fractional heat.

\begin{lemma}[Fractional lower bound]
\label{lem:fractional-lower-bound}
	Consider an OPS instance $\mathcal P_o = (P,R,g)$ with optimal heat $h^*$ and let $\bar h^* = 1/\ell^*$ where $\ell^*$ is the optimal objective value of the fractional-problem LP from \weqref{eq:fractional-LP}. 
	Then $\bar h^* \le h^*$.
\end{lemma}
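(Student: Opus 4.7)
The plan is to show that any valid (integral) schedule for $\mathcal P_o$ induces a feasible solution of the LP with objective value at least $1/h^*$, from which $\bar h^* = 1/\ell^* \le h^*$ follows.

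First, I would fix an optimal schedule $S^*: \N_0 \to 2^R$ with heat $h^*$, and invoke the remark from Section~2 that we may replace each $S^*(t)$ by a maximal matching containing it without increasing the heat; so without loss of generality, $S^*(t) \in \mathcal M$ for every $t$. I would also use that $S^*$ can be taken periodic with some period $T \in \N$, as justified in Section~\ref{sec:computational-complexity} via the finite configuration graph.

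Next, I would define, for each $M \in \mathcal M$, the frequency of use
\[
    y_M \wrel= \frac{|\{ t \in [0..T-1] : S^*(t) = M \}|}{T},
\]
so that $y_M \in [0,1]$ and $\sum_{M\in\mathcal M} y_M = 1$ since exactly one maximal matching is chosen each day. It remains to verify the per-edge constraint of the LP with $\ell = 1/h^*$. For any $e\in R$, the definition of heat gives $r_{S^*}(e)\cdot g(e) \le h^*$, i.e.\ $r_{S^*}(e) \le h^*/g(e)$. Since $e$ appears in $S^*$ at least once in every window of $r_{S^*}(e)$ consecutive days, the number of days in $[0..T-1]$ on which $e$ is scheduled is at least $T / r_{S^*}(e) \ge T \cdot g(e)/h^*$. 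Dividing by $T$ yields
\[
    \sum_{M\in\mathcal M:\, e\in M} y_M \wwrel\ge \frac{g(e)}{h^*},
\]
and hence $\tfrac{1}{g(e)} \sum_{M\ni e} y_M \ge 1/h^* = \ell$.

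This shows $(y,\ell=1/h^*)$ is feasible for the LP \weqref{eq:fractional-LP}, so $\ell^* \ge 1/h^*$, whence $\bar h^* = 1/\ell^* \le h^*$. The only subtlety I anticipate is the clean handling of the ``every day uses a maximal matching'' assumption and the periodicity — both of which are already furnished by the preliminaries and the configuration-graph argument, so no new technical obstacle arises. The bound $T / r_{S^*}(e)$ on the number of occurrences is a simple counting argument using that consecutive occurrences are at most $r_{S^*}(e)$ apart and $T$ is a multiple of the period.
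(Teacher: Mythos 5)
Your proof is correct, and it takes a cleanly different route from the paper's. The paper follows the approach of Cicerone et al.: take \emph{any} (not necessarily periodic) schedule $S$, look at the first $T$ days, form the empirical matching frequencies on that prefix, observe that this gives an approximately feasible LP solution with objective $\approx 1/h(S)$, and let $T\to\infty$ to obtain $\bar h^*\le h(S)$ and hence $\bar h^*\le h^*$. You instead invoke the (elsewhere-justified) reductions to maximal-matching schedules and to periodic schedules, and then read off an \emph{exactly} feasible LP solution from a single period --- the counting argument that $e$ occurs at least $T/r_{S^*}(e)$ times over $T$ days (a multiple of the period) is tight precisely because the cyclic gaps between occurrences sum to the period length and are each at most $r_{S^*}(e)$. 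Your version avoids the $1-o(1)$ bookkeeping and the limit at the cost of leaning on the periodicity assertion; the paper's version is slightly more self-contained since it never needs to assume the optimum is attained by a periodic schedule. Both are fine; you have correctly flagged the only real subtleties (maximal matchings, periodicity) and resolved them using facts the paper provides.
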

\begin{proof}
We use the same approach as in~\cite[\S3]{CiceroneDiStefanoGasieniecJurdzinskiNavarraRadzikStachowiak2019}:
For any schedule $S$, $h(S)$ is at least the heat $h_T(S)$ obtained during the first $T$ days only, which in turn is at least $\max_e g(e)\cdot \bar r(e)$ for $\bar r(e)$ the \emph{average} recurrence time of edge $e$ during the first $T$ days.
A basic calculation shows that 
for the fractions $y_M$ of time spent on matching $M$ during the first $T$ days 
there exists a value $1/\ell = h(S)(1-o(T))$, so that we obtain a feasible solution of the LP~\eqref{eq:fractional-LP}. Hence $1/\ell^* \le 1/\ell = h(S)(1-o(T))$.
Since these inequalities hold simultaneously for all $T$,
taking the limit as $T\to\infty$, we obtain $1/\ell^* = \bar h^* \le h(S)$.
\end{proof}

The immediate usefulness of \wref{lem:fractional-lower-bound} is limited since the number of matchings can be exponential in $n$.

\begin{remark}[Randomized-rounding approximation?]
One could try to use this LP as the basis of a randomized-rounding approximation algorithm,
but since it is not clear how to obtain an efficient algorithm from that, we do not pursue this route here.
The simple route taken in~\cite{CiceroneDiStefanoGasieniecJurdzinskiNavarraRadzikStachowiak2019} 
cannot achieve an approximation ratio better than $O(\log n)$, so \wref{thm:layer-approx} already provides an equally good deterministic algorithm.
\end{remark}

We therefore proceed to the dual LP of \weqref{eq:fractional-LP}:
\begin{alignat}{4}
\label{eq:dual-LP}
	&\min \mathrlap{\quad x} \\
	&\text{s.\,t. } 
	&	\sum_{e\in R} z_e &\wrel\ge 1 \\
	&&	\sum_{e\in M} \frac{z_e}{g_e}  &\wrel\le x &\quad& \forall M \in \mathcal M\\
	&&	z_e & \wrel\ge 0 &\quad& \forall e\in R
\end{alignat}
While still exponentially large and thus not easy to solve exactly, the dual LP yields the versatile
result from \wref{thm:dual-lp-bound}.

\begin{proofof}{\wpref{thm:dual-lp-bound}}
	Using the given $z_e$ and $x = \max_{M\in \mathcal M} \sum_{e\in M} \frac{z_e}{g(e)}$,
	we fulfil all constraints of \weqref{eq:dual-LP}.
	The optimal objective value $x^*$ is hence $x^* \le x$. By the duality of LPs, we have $x^* \ge \ell^*$
	for $\ell^*$ the optimal objective value of \weqref{eq:fractional-LP}.
	Together with \wref{lem:fractional-lower-bound}, this means
	$h^* \ge \bar h^* = 1/\ell^* \ge 1/x^* \ge 1/x$.
\end{proofof}

\subsection{Poly Density}

\wref{thm:dual-lp-bound} gives a more explicit way to compute the \emph{poly density} $\bar h^*$
than the primal LP, but it is unclear whether it can be computed exactly in polynomial time.
Given the more intricate global structure of Poly Scheduling, $\bar h^*$ is necessarily more
complicated than the density of Pinwheel Scheduling.
The most interest open problem for Poly Scheduling hence is whether a sufficiently small poly density
implies the existence of a valid (integral) schedule.

Specific choices for $z_e$ in \wref{thm:dual-lp-bound} yield several known bounds:
\begin{itemize}
\item 
	Setting $z_e = g_e / G$ for $G = \sum_{e\in R} g_e$ 
	yields \wref{cor:total-growth-bound}.
\item
	Fix any subset $R' \subseteq R$.
	Now set $z_e = g_e/C$ if $e\in R'$ and $0$ otherwise, where $C = \sum_{e\in R'} g_e$.
	The maximum from \wref{thm:dual-lp-bound} then simplifies to $\frac1C \max_{M\in \mathcal M} |M\cap R'|$,
	so \[h^* \ge \frac{\sum_{e\in R'} g_e}{\max_{M\in \mathcal M} |M\cap R'|}.\]

\item An immediate application of that observation with $R'$ all edges incident at a person $p\in P$ yields
	the BGT bound:
\begin{corollary}[Bamboo lower bound]
\label{cor:lower-bound-bamboo}
	Given an OPS instance $(P,R,g)$ and $p\in P$ with $g_1\ge \cdots \ge g_d$ the desire growth rates
	for edges incident at $p$. Set $G_p = g_1 + \cdots + g_d$.
	Any periodic schedule for $(P,R,g)$ has heat at least $G_p$.
\end{corollary}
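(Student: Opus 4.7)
The plan is to view this as an immediate corollary of the general bound in \wref{thm:dual-lp-bound}, via the second bullet point in the preceding list. Concretely, I would apply that bullet with $R' = \{e \in R : p \in e\}$, namely the $d$ edges incident at person $p$; with this choice the numerator becomes $\sum_{e \in R'} g_e = G_p$ by definition.

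The key step is then to evaluate $\max_{M \in \mathcal M} |M \cap R'|$ for this choice. The intuition encoded in the corollary is exactly the single-meeting-per-day constraint: person $p$ can participate in at most one edge on any given day, so any matching contains at most one edge from $R'$. This gives $\max_{M \in \mathcal M} |M \cap R'| \le 1$, and plugging into the bullet yields $h^* \ge G_p / 1 = G_p$.

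I do not expect any real obstacle; the argument is essentially a one-line specialization of already-established results. The only subtlety is the degenerate case where $p$ has degree $d = 0$, in which $G_p = 0$ and the claim is vacuous. Alternatively, one could bypass the bullet and invoke \wref{thm:dual-lp-bound} directly with weights $z_e = g_e / G_p$ for $e \ni p$ and $z_e = 0$ otherwise; a short computation then gives $\bar h(z) = G_p$, reproducing the same lower bound and highlighting that the BGT-style bound is recovered simultaneously at every vertex of the polycule.
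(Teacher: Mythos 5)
Your proof matches the paper's exactly: the paper states the corollary as "an immediate application of that observation with $R'$ all edges incident at a person $p\in P$," which is precisely your specialization of the second bullet after \wref{thm:dual-lp-bound}, using that a matching contains at most one edge incident to $p$. Your alternative direct route via $z_e = g_e/G_p$ is just the same bullet unwound, and your note on the degenerate $d=0$ case is a reasonable (if unnecessary) precaution.
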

\end{itemize}

\begin{remark}[Better general bounds?]
For the general case, it seems challenging to obtain other such simple bounds. 
The bound of $G/m$ is easy to justify without the linear programs by a ``preservation-of-mass argument'':
Assume a schedule $S$ could achieve a heat $h < G/m$. Every day, the overall polycule's desire grows by $G$, and $S$ can schedule at most $m$ pairs to meet, whose desire is reset to $0$ from some value $\le h$.
Every day, $S$ thus removes only a total of $\le mh < G$ desire units from the polycule, whereas the overall growth is $G$, a contradiction to the heat remaining bounded.

Note that the bound of $G/m$ is tight for some instances, so we cannot hope for a strictly lower bound. On the other hand, the example from \wref{fig:very-bad-mass-preservation} shows demonstrates that it can also be arbitrarily far from $h^*$.
\end{remark}

\begin{figure}
	\begin{tikzpicture}[
			person/.style={circle,draw,inner sep=2.7pt},
			relation/.style={thick},
			growth label/.style={inner sep=1pt,fill=white,text=red!80!black,font=\itshape},
			freq label/.style={inner sep=1pt,fill=white,text=blue!80!black,font=\bfseries\boldmath},
			scale=1.9,
		]
		\node[person] (A) at (-1.2,0) {$A$} ;
		\node[person] (B) at (-.6,1) {$B$} ;
		\node[person] (C) at (0,0) {$C$} ;
		\begin{scope}[shift={(0.5,0)}]
		\foreach \i in {1,3,...,9} {
			\node[person] (T\i) at (\i/2.4,1) {$T_{\i}$} ;
		}
		\foreach \i in {2,4,...,9} {
			\node[person] (T\i) at (\i/2.4,0) {$T_{\i}$} ;
		}
		\end{scope}
		\foreach \a/\b/\f in {%
				A/B/2,%
				B/C/3,%
				A/C/3,%
				C/T1/$F$,%
				T1/T2/$F$,%
				T2/T3/$F$,%
				T3/T4/$F$,%
				T4/T5/$F$,%
				T5/T6/$F$,%
				T6/T7/$F$,%
				T7/T8/$F$,%
				T8/T9/$F$%
		} {
			\draw[relation] (\a) to node[freq label] {\f} (\b) ;
		}
		\node[scale=2] at (4.75,.45) {\dots};
		\node[person] (Tk) at (5.3,0)  {$T_k$};
	\end{tikzpicture}
	\caption{
		The tadpole family of DPS instances, defined for parameters $k\ge 0$ (tail length) and $F\ge 3$ (tail frequency).
		The total growth rate is $G = \frac12+\frac23+k\cdot 1/F = \frac{7}6 + \frac kF$ and the size of a maximum matching is$m = 1+\lfloor (k+1)/2\rfloor$.
	}
	\label{fig:very-bad-mass-preservation}
\end{figure}
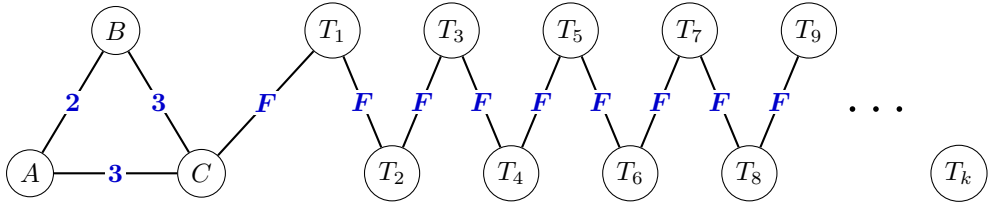

\wref{fig:very-bad-mass-preservation} shows the tadpole family of instances demonstrating the power of the dual-LP approach and \wref{thm:dual-lp-bound}.
All DPS tadpoles (as shown in the figure) are infeasible since already the triangle $A{-}B{-}C$ does not admit a schedule obeying the given frequencies.
The corresponding OPS instances (as given by \wref{lem:dps-to-ops}) with $g(e) = 1/f(e)$ thus have $h^* > 1$; indeed $h^*=4/3$ if $F\ge 2$.
However, the simple lower bounds or local arguments do not detect this:
(a)~All local Pinwheel Scheduling instances (any person plus their neighbours) are feasible.
(b)~The mass-preservation bound (\wref{cor:total-growth-bound}) is $G/m < 1$ for $k\ge 1$.
Indeed, setting $F=k$ and letting $k\to \infty$, $G/m = O(1/k)$, giving an arbitrarily large gap to $h^*$.
By contrast, consider the LP fractional lower bound. One can show that $\bar h^* = \frac76> 1$ for any $k\ge 1$ and $F\ge 2$, so \wref{thm:dual-lp-bound} correctly detects the infeasibility in this example.

\begin{remark}[Better Pinwheel density via dual LPs?]
	Since Poly Scheduling is a generalization of Pinwheel Scheduling resp.\ Bamboo Garden Trimming,
	we can apply \wref{thm:dual-lp-bound} also to these problems.
	However, for this special case, the optimal objective value of the dual LPs is \emph{always} $x^* = \ell^* = 1/G$ for $G$ the $G$ the sum of the growth rates, so we only obtain the trivial ``biomass'' lower bound of $G$ for Bamboo Garden Trimming resp.\ the density $\le 1$ necessary condition for Pinwheel Scheduling.
	The more complicated structure of matchings in non-star graphs makes fractional lower bounds in Poly Scheduling much richer and more powerful.
\end{remark}

\section{Open Problems \& Future Directions}
This paper opens up several avenues for future work. 
The most obvious open problem concerns efficient approximation algorithms: 
can we reduce the gap between our $\frac{13}{12}$
hardness of approximation lower bound and the $O(\log n)$ upper bound? 
As introduced by \wref{conj:4/3}, we expect that future works may demonstrate better inapproximability
results for OPS in the general case;
the true lower bound may even be super-constant.
However, in light of our \wref{thm:layer-approx}, 
a super-constant hardness of approximation result would have to use Poly Scheduling instances with super-constant degrees.
\wref{open:poly-density} will also have clear implications for OPS, as well as being interesting in its own right.

There is also interesting work to be done looking at specific classes of polycules. Bipartite
polycules are particularly interesting, both for the likelihood that they will permit better approximations than are possible in the general case and for their applications (\eg, modelling the users and providers of some service).

Polyamorous scheduling has several interesting generalizations including Fungible Poly\-amorous Scheduling, whose decision version we define as:
\begin{definition}[Fungible Decision Polyamorous Scheduling (FDPS)]
\label{def:FDPS}
	An  FDPS instance $\mathcal P_{fd} = (P, R, s, f)$ 
    (a ``(fungible decision) \emph{polycule}'') consists of
	an undirected graph $(P,R)$ where the 
	vertices $P = \{p_1, \ldots,p_n\}$ are $n$ \emph{classes} of fungible persons and the 
	edges $R$ are pairwise relationships between those classes. Classes have integer \emph{sizes} $s : P \to \N$  and relationships have integer \emph{frequencies} $f : R \to \N$.
	
	The goal is find an infinite schedule $S : \N_0 \to 2^R$, such that 
	\begin{thmenumerate}{def:FDPS}
	\item \label{def:DPS-no-conflicts-fungible} 
		(no overflows) for all days $t\in \N_0$, $S(t)$ is a multiset with elements from $P$ such that each node $p_i\in P$ appears at most $s(p)$ times, and 
	\item \label{def:DPS-frequencies-fungible} 
		(frequencies) for all $e \in R$ and $t\in \N_0$, we have 
		$e \in S(t)\cup S(t+1)\cup \cdots \cup S(t+f(e)-1)$;
	\end{thmenumerate}
	or to report that no such schedule exists.
\end{definition}

FDPS also has an optimisation version, which again allows each person $p\in P$ to have at most $s(p)$ meetings each day. These problems have clear applications to the scheduling of staff and resources.

Another natural generalisation is Secure Polyamorous scheduling. Suppose that Adam is dating both Brady and Charlie, who are also dating each other. In a DPS or OPS polycule, on any day, Adam must choose to meet with either Brady or Charlie, who each face similar choices; but why can't he meet both?\footnote{A key part of polyamory~\cite{why_not_both}!}
The Secure Decision Polyamorous scheduling problem allows this:

\begin{definition}[Secure Decision Polyamorous Scheduling (SDPS)]
\label{def:SDPS}
	An SDPS instance $\mathcal P_{sd} = (P, R, f)$ (a ``(secure decision) \emph{polycule}'') consists of
	an undirected graph $(P,R)$ where the 
	vertices $P = \{p_1, \ldots,p_n\}$ are $n$ \emph{persons}, and the 
	edges $R$ are pairwise relationships, with integer \emph{frequencies} $f : R \to \N$ for each relationship.
 
	The goal is find an infinite schedule $S : \N_0 \to 2^R$, such that 
	\begin{thmenumerate}{def:DPS}
	\item \label{def:SDPS-no-third-wheels} 
		(no third-wheels) for all days $t\in \N_0$, $S(t)$ is a set of meetings between cliques of people in  $P$ in which each person appears at most once, and
	\item \label{def:SDPS-frequencies} 
		(frequencies) for all $e \in R$ and $t\in \N_0$, we have 
		$e \in S(t)\cup S(t+1)\cup \cdots \cup S(t+f(e)-1)$;
	\end{thmenumerate}
	or to report that no such schedule exists.
\end{definition}
Again, this has a natural optimisation version.

Polyamorous Scheduling also motivates the study of several restricted versions of Pinwheel Scheduling and Bamboo Garden Trimming, including partial scheduling (wherein some portion of the schedule is fixed as part of the problem and the challenge is to find the remainder of the schedule), and fixed holidays (where the fixed part of the schedule consists of periodic gaps).

	\myacknowledgements
%


\bibliography{references}

\ifdraft{
	\clearpage
	\appendix
	\ifkoma{\addpart{Appendix}}{}
	\clearpage
	\part*{Notes-to-self}
	\printnotestoself
}{}

\end{document}